\gdef\@copyrightpermission{
\begin{minipage}{0.3\columnwidth}
 \href{https://creativecommons.org/licenses/by/4.0/}{\includegraphics[width=0.90\textwidth]{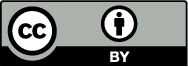}}
\end{minipage}\hfill
\begin{minipage}{0.7\columnwidth}
 \href{https://creativecommons.org/licenses/by/4.0/}{This work is licensed under a Creative Commons Attribution International 4.0 License.}
\end{minipage}
\vspace{5pt}
}
\def\fullversion{}
\newcommand{\ifconference}[1]{{{\ifx\fullversion\undefined{#1}\fi}\xspace}}
\newcommand{\iffullversion}[1]{{{\ifx\conference\undefined{#1}\fi}\xspace}}
\newcommand{\hide}[1]{} 
\newcommand{\mtext}[1]{{\mbox{{#1}}}} 
\newcommand{\defn}[1]{\emph{\textbf{#1}}} 
\newcommand{\emp}[1]{{\boldmath\emph{\textbf{#1}}\unboldmath}} 
\newtheorem{theorem}{Theorem}[section]
\newtheorem{lemma}{Lemma}[section]
\newtheorem{corollary}{Corollary}[theorem]
\let \originalleft \left
\let\originalright\right
\renewcommand{\left}{\mathopen{}\mathclose\bgroup\originalleft}
\renewcommand{\right}{\aftergroup\egroup\originalright}
\newtheoremstyle{exampstyle}
{.5em} 
{.5em} 
{\it} 
{.5em} 
{\sc} 
{.} 
{.5em} 
{} 
\theoremstyle{exampstyle} 
\theoremstyle{exampstyle} 
\theoremstyle{exampstyle} 
\theoremstyle{exampstyle} 
\theoremstyle{exampstyle} \newtheorem{compactfact}{Fact}[section]
\renewenvironment{proof}[1][\proofname]{\par
\vspace{-2\topsep}
\pushQED{\qed}%
\normalfont
\topsep0pt \partopsep0pt 
\trivlist
\item[\hskip\labelsep
      \itshape
  #1\@addpunct{.}]\ignorespaces
}{%
\popQED\endtrivlist\@endpefalse
}
\newcommand{\whp}[1]{\emph{whp}}
\DeclareMathOperator*{\argmin}{\arg\!\min}
\DeclareMathOperator*{\polylog}{polylog}
\setlist{topsep=0.3em,itemsep=0.2em,parsep=0.1em,leftmargin=*}
\newcolumntype{L}[1]{>{\raggedright\let\newline\\\arraybackslash\hspace{0pt}}m{#1}}
\newcolumntype{C}[1]{>{\centering\let\newline\\\arraybackslash\hspace{0pt}}m{#1}}
\newcolumntype{R}[1]{>{\raggedleft\let\newline\\\arraybackslash\hspace{0pt}}m{#1}}
\newcolumntype{B}{>{\bf}c}
\newcommand{\myparagraph}[1]{\noindent\emp{#1} \ }
\newcommand{\nosemic}{\renewcommand{\@endalgocfline}{\relax}}
\newcommand{\dosemic}{\renewcommand{\@endalgocfline}{\algocf@endline}}
\newdimen\zzsize
\newdimen\kwsize
\newcommand{\basicstyle}{\fontsize{\zzsize}{1\zzsize}\ttfamily}
\newcommand{\keywordstyle}{\fontsize{\kwsize}{1\kwsize}\ttfamily\bf}
\newdimen\zzlstwidth
\crefname{section}{Sec.}{Sec.}
\crefname{theorem}{Thm.}{Thm.}
\crefname{lemma}{Lemma}{Lemma}
\crefname{corollary}{Cor.}{Cor.}
\crefname{table}{Tab.}{Tab.}
\crefname{algorithm}{Alg.}{Alg.}
\crefname{figure}{Fig.}{Fig.}
\crefname{fact}{Fact}{Fact}
\Crefname{table}{Tab.}{Tab.}
\crefname{problem}{Problem}{Problem}
\crefname{line}{Line}{Line}
\newcommand{\naive}{na\"ive} 
\newcommand{\tinyskip}{\vspace{.5em}} 
\newcommand{\yan}[1]{{\color{violet}{\bf Yan:} #1}}
\newcommand{\best}{\mathit{best}}
\newcommand{\cordon}{\mathit{cordon}}
\newcommand{\now}{\mathit{now}}
\newcommand{\nxt}{\mathit{cordon}}
\newcommand{\stt}[1]{{#1}}
\newcommand{\pred}{\mathcal{P}}
\newcommand{\edge}[2]{#1 \rightarrow #2} 
\newcommand{\dep}{\mathit{depth}}
\newcommand{\ed}{\hat{d}}
\newcommand{\ds}{d^*}
\newcommand{\eff}[1]{\hat{#1}}
\newcommand{\ff}{\mathcal{F}}
\newcommand{\jl}{j_l}
\newcommand{\jr}{j_r}
\newcommand{\il}{i_l}
\newcommand{\ir}{i_r}
\newcommand{\im}{i_m}
\newcommand{\jm}{j_m}
\newcommand{\ffsize}{h}
\newcommand{\gammalws}{\Gamma_{\mathit{lws}}}
\newcommand{\gammagap}{\Gamma_{\mathit{gap}}}
\newcommand{\ouralgo}{Cordon Algorithm}
\newcommand{\sentinel}{sentinel}
\newcommand{\treelws}{Tree-GLWS}
\newcommand{\totarrows}{L} 
\newcommand{\vall}{\mathit{vall}}
\newcommand{\forest}{\mathit{forest}}
\newcommand{\old}{\mathit{old}}
\newcommand{\new}{\mathit{new}} 
\newcommand{\funcfont}[1]{{\textsf{#1}}}
\begin{document}

\title{
Parallel and (Nearly) Work-Efficient Dynamic Programming
}



\author{Xiangyun Ding}
\affiliation{
  \institution{University of California, Riverside}
  \country{}
}
\email{xding047@ucr.edu}

\author{Yan Gu}
\affiliation{
  \institution{University of California, Riverside}
  \country{}
}
\email{ygu@cs.ucr.edu}

\author{Yihan Sun}
\affiliation{
  \institution{University of California, Riverside}
  \country{}
}
\email{yihans@cs.ucr.edu}

\renewcommand{\shortauthors}{Xiangyun Ding, Yan Gu, and Yihan Sun}

\begin{abstract}

The idea of dynamic programming (DP), proposed by
Bellman in the 1950s, is one of the most important algorithmic
techniques. 
However, in parallel, many fundamental and sequentially simple problems become
more challenging, and open to a (nearly) work-efficient solution (i.e.,
the work is off by at most a polylogarithmic factor over the best sequential solution). 
In fact, sequential DP algorithms employ many advanced optimizations such as
decision monotonicity or special data structures, and achieve
better work than straightforward solutions. 
Many such optimizations are inherently sequential, 
which creates extra challenges for a parallel algorithm 
to achieve the same work bound. 

The goal of this paper
is to achieve (nearly) work-efficient parallel
DP algorithms by parallelizing classic, highly-optimized and
practical sequential algorithms.
We show a general framework called the \emph{\ouralgo{}} for parallel DP algorithms,
and use it to solve several classic problems. 
Our selection of problems includes Longest Increasing Subsequence (LIS), sparse Longest Common Subsequence (LCS), 
convex/concave generalized Least Weight Subsequence (LWS), Optimal Alphabetic Tree (OAT), and more. 
We show how the \ouralgo{} can be used to achieve the same level of optimization as the sequential algorithms,
and achieve good parallelism. 
Many of our algorithms are conceptually simple, and we show some experimental results as proofs-of-concept. 

\end{abstract}


\ccsdesc[500]{Theory of computation~Parallel algorithms}


\keywords{Parallel Algorithms, Dynamic Programming, Longest Increasing Subsequence, Longest Common Subsequence,  Least Weighted Sum, Optimal Alphabetic Tree, Decision Monotonicity}

\renewcommand\footnotetextcopyrightpermission[1]{} 
\fancyhead{} 

\setlength\abovedisplayskip{0.2em}
\setlength\belowdisplayskip{0.2em}
\setlength\abovedisplayshortskip{0.2em}
\setlength\belowdisplayshortskip{0.2em}

\maketitle

\section{Introduction}\label{sec:intro}

The idea of dynamic programming (DP), since proposed by Richard Bellman in the 1950s~\cite{bellman1954theory}, has been extensively used in algorithm design, 
and is one of the most important algorithmic techniques. 
It is covered in classic textbooks and basic algorithm classes, and is widely used in research and industry. 
The goal of this paper is to achieve \emp{(nearly) work-efficient} (defined below) and \emp{parallel} DP algorithms 
based on parallelizing classic, highly-optimized and practical sequential algorithms.

At a high level, a DP algorithm computes the \emp{DP values} for a set of \emp{states} (labeled by integers) by a \emp{recurrence}.
The recurrence specifies a set of \emp{transitions} from state $j$ to state $i$, i.e., how $D[j]$ can be used to compute $D[i]$\footnote{
More generally, a transition may compute $D[i]$ from multiple other states. 
All algorithms in this paper only requires one state $j$ in the transition to compute $D[i]$.
}.
We call $j$ a \emp{decision} at $i$. 
$D[i]$ is then computed by taking the best (minimum or maximum) among all decisions,
which we call the \emp{best decision} at $i$.
Throughout the paper, we will use $i^*$ to denote the best decision of state $i$. 
We introduce more concepts about DP in \cref{sec:prelim}. 

\hide{
At a high level, a DP algorithm solves an optimization
problem by breaking it down to simpler subproblems, memoizing the answers to the subproblems, and using them to find the answer to the original problem. 
The subproblems are usually indexed by integers, referred to \emp{states}. 
The DP value of a state is determined either by a \emp{boundary} condition (i.e., initial values),
or from other states, specified by a \emp{DP recurrence}. 
The recurrence specifies a set of \emp{transitions} from state $j$ to state $i$, i.e., how $D[j]$ can be used to compute $D[i]$.
We call $j$ a \emp{decision} at $i$. 
$D[i]$ is then computed by examining all decisions at state $i$, and taking the best (minimum or maximum) among them. 
We call the decision that gives the final value of $D[i]$ the \emp{best decision} at $i$, and denote it as $i^*$. 
}

One can view the states and transitions as a directed acyclic graph (DAG), which we refer to as a \emp{DP DAG}. 
In this DAG, each vertex is a state, and an edge $j$ to $i$ denotes a transition from $j$ to $i$. 
Since such an edge indicates that computing $D[i]$ logically requires $D[j]$, 
we also call it a \emp{dependency}, and say $i$ \emp{depends on}~$j$. 
Sequentially, we can compute all states based on a topological ordering. 
For simplicity, we always assume that the (integer) order of the states is a valid topological ordering.

Unfortunately, many DP algorithms (even some simple ones sequentially) are hard to parallelize,
and are especially hard to achieve work-efficiency (the work asymptotically matches the best sequential algorithm) or even near work-efficiency 
(off by a polylogarithmic factor). 
We note that on today's multicore machines with tens to hundreds of processors, 
achieving low work is one of the most crucial objectives for designing \emph{practical} parallel algorithms. 
One particularly intriguing and somewhat ironic challenge for achieving work-efficient parallel DP algorithms is that, 
sequential algorithms are extremely well-optimized. 
In many cases, an optimized DP algorithm does not need to process all edges (transitions) in the DP DAG; 
some even do not need to process all vertices (states).
We review the literature at the end of this section. 
In fact, almost all textbook DP solutions can be optimized to achieve lower work than the straightforward solution 
that directly computes the DP values of all states based on the recurrence. 
Such examples include longest increasing subsequence (LIS), (sparse) longest common subsequence (LCS), 
(convex/concave) least weight subsequence (LWS), and many others discussed in this paper. 

\hide{
First, sequential DP algorithms usually process all states iteratively. 
When processing a state sequentially, it is easy to guarantee that all states it depends on have been finalized. 
Therefore, each state can be computed exactly once to obtain its true DP value. 
Despite the iterative essence of the sequential DP algorithm, 
it is easy to observe that a state does not need to wait for all states before it to finish, but only those \emph{it depends on}. 
This allows for much shallower dependency structure depth than trivially $O(n)$, where $n$ is the number of states. 
Therefore, many existing ideas achieve parallelism by
identifying as many \emp{ready} (i.e., all their decisions are finalized) states as possible and process them together. 
Since this essentially breaks the sequential order of the states,
an important challenge for work-efficiency is to avoid processing a state before it becomes ready, or to
carefully bound such ``wasted'' work on touching unready states. }


\hide{As mentioned, such optimizations of interest usually allow the algorithm to skip certain edges and vertices in the DP DAG.
In the literature, some optimizations that can avoid processing all states are called \emp{sparsification}~\cite{}.
In this paper, we call it \emp{vertex sparsification} since it skips vertices in the DP DAG. 
One typical example is the LCS problem. 
In a nutshell, although the DP recurrence contains $O(mn)$ states ($m$ and $n$ are the input string lengths), 
only a subset of them need to be computed. 
We will discuss LCS as an motivation example in \cref{sec:xx}. 

The second type of optimization is to strategically skip some edges, i.e., avoid using certain transitions. 
Following the literature on (vertex) sparsification, we call these optimizations \emp{edge sparsification}. 
One typical example is the optimized LIS algorithm (e.g., the one in TAOCP~\cite{knuth1997art})---
when processing a state $i$, 
the algorithm maintains a data structure to precisely find the best decision of $i$ in $O(\log n)$ cost.
Effectively, only $O(n)$ transitions are processed in the algorithm, requiring $O(n\log n)$ work. 
}

One typical DP optimization that is both theoretically elegant and practically useful 
is \emp{decision monotonicity (DM)}.
\hide{At a high level, DM indicates that,
if a state $i$ has best decision $i^*$,
then another state $j>i$ must have its best decision $j^*\ge i^*$, called \emph{convex} case, 
or the \emph{concave} case\footnote{The definitions of convexity and concavity are interchanged in some other papers.}
where $j^*\le i^*$ or $j^*\ge i$.}
At a high level, DM indicates that two states $i$ and $j>i$ must have their best decisions $j^*\ge i^*$, called the \emph{convex} case\footnote{The definitions of convexity and concavity are interchanged in some other papers.},
or the \emph{concave} case where either $j^*\le i^*$ or $j^*\ge i$ (see \cref{fig:dm}). 
Hence, when finding the best decision for state~$j$,
one can narrow down the possible range of $j^*$ based on the known best decisions of previous states, 
and thus avoid processing all transitions. 
DM has been widely studied in the sequential setting (e.g.,~\cite{klawe1989simple,klawe1990almost,eppstein1992sparse2,galil1992dynamic,eppstein1990sequence,galil1989linear,klawe1989simple}),
and is also closely related to concepts such as 
quadrangle inequalities~\cite{yao1980efficient,yao1982speed} and Monge property~\cite{monge1781memoire}.
Sequentially, 
using DM saves a polynomial factor than the na\"ive DP algorithm in many applications~\cite{miller1988sequence,galil1989speeding,knuth1981breaking,aggarwal1987geometric,wilber1988concave,aggarwal1990applications,BG2020}. 
In the parallel setting, however, among the papers we know of~\cite{galil1994parallel,huang1994parallel,larmore1995constructing,rytter1988efficient,bradford1998efficient,czumaj1992parallel,chan1990finding},
most from the 90s, very few of them take advantage of DM to reduce work. 
Indeed, none of them are work-efficient, and most of them have a polynomial overhead, 
which limits their potential applicability on today's multicore machines. 
The only nearly work-efficient results~\cite{chan1990finding,bradford1998efficient} focus on the concave case of one specific problem.
\begin{figure}
  \centering
      \centering
      \includegraphics[width=\columnwidth]{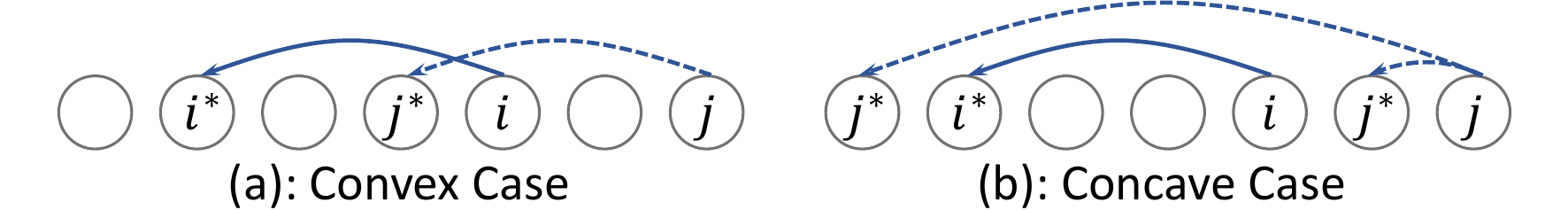}
  \caption{Convex and concave decision monotonicity.  
  (a). Convexity: for two states $i<j$, their best decisions satisfy $i^*\le j^*$.  
  (b). Concavity: for two states $i<j$, their best decisions satisfy either $j^*\le i^*$ or $j^*\ge i$.
  }
  \label{fig:dm}

\end{figure}

\hide{
In parallel, however, 
among the papers we know of concerning DP with DM optimizations (e.g., on edit distance~\cite{aalm}, xxx), 
none of them are (even nearly) work-efficient when comparing to the sequential solutions using DM. 
}

The challenge of using DM in parallel lies in two aspects. 
First, sequentially we skip the transitions for state $i$ by observing the best decisions of all states before $i$.
When multiple states are processed in parallel, they cannot see each other's best decision, 
making it hard to skip the same set of ``useless'' transitions as in the sequential setting.
Second, many classic sequential DP algorithms with DM relies on efficient data structures such as monotonic queues, which are inherently sequential. 
Achieving the same work bound in parallel also requires careful redesign of the underlying data structures. 

In this paper, we study parallel DP algorithms to achieve the same work as highly-optimized sequential algorithms. 
Given a sequential algorithm $\Gamma$ with certain optimizations, 
our goal for work-efficiency is to process (asymptotically) the same number of transitions and states as in $\Gamma$. 
Regarding parallelism, we hope to achieve the best possible parallelism indicated by the transitions processed by $\Gamma$.
We formalize our goals in \cref{sec:framework}. 
Our solution is based on an algorithmic framework that generally applies to almost \emph{all} DP algorithms.
We call this framework the \emp{\ouralgo{}}. 
At a high level, our framework specifies how to \emph{correctly} identify a subset of states
that do not depend on each other and process them in parallel. 
We then present how to do so \emph{efficiently} for specific problems.
To achieve work-efficiency, our key ideas are two-fold. 
First, many of our algorithms use \emph{prefix-doubling} 
to bound the additional work on processing unnecessary states.
Second, we design \emph{new parallel data structures} to skip unnecessary transitions. 

This paper studies general approaches for parallel DP, 
with a special focus on applying the non-trivial, effective optimizations found in the sequential context to parallel algorithms. 
We select classic DP problems and their optimized sequential solutions, and parallelize them using novel techniques. 
Our framework unifies one existing parallel LIS algorithm~\cite{gu2023parallel}, 
and provides new parallel algorithms for various problems such as sparse LCS,
convex/concave generalized LWS and GAP edit distance (GAP), and optimal alphabetic tree (OAT).
All the algorithms are (nearly) work-efficient with non-trivial parallelism. 
Among them, we highlight our contributions on parallelizing the DP algorithms with decision monotonicity. 
The core of our idea is a parallel algorithm for convex/concave generalized LWS. 
We apply it to other problems such as GAP and OAT, and achieve new theoretical results. 
For OAT, we partially solve the open problem in~\cite{larmore1993parallel} by providing 
a work-efficient algorithm with polylogarithmic span when the inputs are positive integers with word size $n^{\polylog(n)}$. 
We present our theoretical results in \cref{thm:1d:concave,thm:1d:convex,thm:gap,theorem:oat,thm:lcs,thm:lis}. 
We believe this is the first paper that achieves near work-efficiency in parallel for a class of DP algorithms with DM.

Although the main focus of this paper is to achieve low work in theory, 
an additional goal is to make the algorithms simple and practical. 
We implement two algorithms as proofs-of-concept (code available at~\cite{dpdpcode}). 
On $10^9$ input size, 
both of them outperform sequential solutions 
when the depth of the DP DAG is within $10^5$,
and achieve 20--30$\times$ speedup with smaller depth of the DP DAG. 

\ifconference{
Due to page limit, we provide the full version of the paper in~\cite{fullversion}, which contains more algorithmic details, proofs, and some additional algorithms that fit into our framework. 
}

\myparagraph{Related Work.}
Dynamic programming (DP) is one of the most studied topics in algorithm design.
The seminal survey~\cite{galil1992dynamic} by Galil and Park reviewed two types of optimization techniques sequentially, including decision monotonicity (e.g.,~\cite{klawe1989simple,klawe1990almost,wilber1988concave,eppstein1992sparse2,galil1992dynamic,eppstein1990sequence,galil1989linear,klawe1989simple}) and sparsity (e.g.,~\cite{eppstein1990sparse,eppstein1992sparse,hirschberg1977algorithms,hunt1977fast}).
This paper mainly focuses on parallelizing the sequential algorithms in this scope, and we will review the literature of each problem in the corresponding section.

DP is also widely studied in parallel.
There exists rich literature on optimizing various goals in different models, 
such as span (time) in PRAM (e.g.~\cite{galil1994parallel,apostolico1990efficient,babu1997parallel,lu1994parallel,larmore1994optimal,atallah1989constructing,huang1994parallel,larmore1995constructing,rytter1988efficient,larmore1996parallel,larmore1993parallel}),
I/O cost in the external-memory/ideal-cache model~(e.g.,~\cite{chowdhury2006cache,chowdhury2008cache,chowdhury2010cache,chowdhury2016autogen,itzhaky2016deriving,tang2015cache}), rounds in the MPC model~\cite{im2017efficient,boroujeni2019improved,gupta2023fast,bateni2018massively}, or on the BSP model~\cite{krusche2009parallel,krusche2010new,alves2002parallel}. 
However, these papers either only considered the DP algorithms without the optimizations, 
or incur polynomial work overhead, except for \cite{chan1990finding,bradford1998efficient} for one specific problem.
Instead, our paper tries to parallelize the efficient and practical sequential algorithms while maintaining low work. 
Some other works try to parallelize certain types of DP algorithms or applications using DP (e.g.,~\cite{awan2020adept,aimone2019dynamic,blelloch2016parallel,weiss2019computation,javanmard2019toward,li2014parallel,ding2023efficient}).
Alternately, our work aims to provide a general approach to parallelize almost all DP algorithms.

\hide{\myparagraph{Notations.} 
We use $O(f(n))$ \emp{with high probability} (\whp{}) in terms of $n$ as $O(cf(n))$ with a probability of at least $1-n^{-c}$ for $c\geq 1$. 
We use $\tilde{O}(\cdot)$ to hide $\mtext{polylog}(n)$ where $n$ is the input size. 
}

\hide{
Given a list of \emph{states} $D_i$ and a \emph{weight function} $w(i,j)$, in most cases, a dynamic programming algorithm computes the value of $D_j$ by taking the minimum (or maximum) of $D_i+w(i,j)$ from the feasible region of state $D_j$. 
Such a relationship among the states is called a DP recurrence.
We refer to $i$ as the \emph{decision} of the state $D_j$ if $D_j=D_i+w(i,j)$.
}

\hide{

A series of seminal work on dynamic programming is decision monotonicity (DM)\footnote{Properties with similar ideas include quadrangle inequalities and Monge property.} (cite them all), and we will review the literature in more details in the next section.
The high level idea is that, the decisions for computing certain states can be used to limit the feasible regions of other states without affecting the correctness.
As such, algorithms taking advantages of DM can saves the total work to compute the recurrences, usually by $\tilde{\Omega}(n)$ for input size $n$.
Most of the ``natural'' weight functions in real-world applications such as biology, geology, and AI lead to decision monotonicity~\cite{eppstein1988speeding}.

The parallelism of dynamic programming has been extensively studied in the past decades.
Again in most cases, a DP computation can be considered as matrix multiplications on the $(+,\min)$-semiring and computed in polylogarithmic span (longest dependence chain), but requires at least cubic work (number of operations) and thus is not practical.
There have been lots of studies on optimizing the span of a DP computation (cite Galil's and other's papers) and the I/O efficiency (cite Rezaul's and other's papers).
However, most of these algorithms do not consider decision monotonicity (DM).
Hence, for applications with the DM properties, these algorithms also incur higher work bounds than the best sequential counterparts that consider DM (i.e., these parallel algorithms are not \emph{work-efficient}).
We are aware of a few papers on DP that considers DM, mainly focusing on edit distance (cite AALM and the other ones) and some other specific problems (cite Lawrence's papers).
Unfortunately, these algorithms are also not work-efficient when comparing to the best known sequential algorithms using DM properties on these topics.

Given the prevalence of multicore machines, work-efficiency is generally adopted as the golden rule for practical parallel algorithms.
Particularly in this paper, we want to study and understand the parallelism in the large group of (sequential) DP algorithms using decision monotonicity, \emph{without} sacrificing additional work.

Many work-efficient (or nearly work-efficient) parallel DP algorithms have been recently proposed, on topics such as edit distance~\cite{ding2023efficient}, LIS and others~\cite{gu2023parallel,shen2022many,cao2023nearly}.
One main challenge in parallelizing DP algorithms is that the longest dependence chain in the computation, which is longest chain for the decisions, is usually linear to the input size in the worst case, leading to limited parallelism.
As such, achieving sublinear span bound sometimes needs to bring in additional work.
An alternative option is to directly parallelizes the sequential algorithms~\cite{blelloch2012internally,BFS12,fischer2018tight,hasenplaugh2014ordering,pan2015parallel,shun2015sequential,blelloch2016parallelism,blelloch2018geometry,blelloch2020randomized,gu2022parallel,shen2022many,blelloch2020optimal}. The goal is to achieve parallelism in the original computation, and a careful design of the parallel schemes will ideally not cause much additional work.
Such an parallel algorithm should process all objects in the proper order based on the dependences---it should 1) process as many objects as possible in parallel (as long as they do not depend on each other), and 2) process an object only when it is \defn{ready} (all objects it depends on are finished), to avoid redundant work.
To formalize the two requirements, we say an algorithm is \defn{round-efficient}~\cite{shen2022many} if its span is $\tilde{O}(D)$ for a computation with longest logical dependence length $D$.
We say an algorithm is \defn{work-efficient} if its work is asymptotically the same as the best sequential algorithm.

\hide{
In a high level, a DP algorithm solves an optimization
problem by breaking it down to simpler subproblems, and memoizing the answers to the subproblems to find the answer to the original problem. 
The subproblems are usually labeled by integers, called the \emp{states} of the original problem.
The DP value of a state is determined either by a \emp{boundary} condition (i.e., initial values),
or from other states, specified by a \emp{DP recurrence}. 
The recurrence computes the DP value of state $i$, noted as $D[i]$,
from several options, each corresponding a previous state $j$. 
The final DP value will be the best among all the options. 
We call all such states $j$ that 
}

} 
\section{Model and Framework}
\label{sec:prelim}

We use the \defn{work-span model} in the classic multithreaded model with \defn{binary-forking}
\cite{arora2001thread, blelloch2020optimal, blumofe1999scheduling}.
We assume a set of threads that share the memory.
Each thread acts like a sequential RAM plus a fork instruction that forks two threads running in parallel.
When both threads finish, the original thread continues.
A parallel-for is simulated by forks in a logarithmic number of steps.
A computation can be viewed as a DAG.
The \defn{work} $W$ of a parallel algorithm is the total number of operations,
and the \defn{span (depth)} $S$ is the longest path in the DAG.
In practice, we can execute the computation with work $W$ and span $S$ using a randomized work-stealing scheduler~\cite{blumofe1999scheduling,gu2022analysis} in time $W/P+O(S)$
with $P$ processors with high probability.
A parallel algorithm is \emp{work-efficient}, if its work is $O(W)$, where $W$ is the work of the best known or the corresponding sequential algorithm,
and \emp{nearly work-efficient} if its work is $\tilde{O}(W)$. We use $\tilde{O}(\cdot)$ to hide $\mtext{polylog}(n)$ where $n$ is the input size. 

\hide{
\tinyskip

In the following, we start with presenting some basic concepts and optimizations of DP.
We then present our algorithmic framework, the \ouralgo{}, for parallel DP algorithms. 
We note that \ouralgo{} by itself \emph{does not} guarantee efficiency. 
Later in \cref{sec:xx} we present how to use more techniques to enable (near) work-efficiency for the
problems of interest. }

\subsection{Basic Concepts in Dynamic Programming}

\hide{
A dynamic programming (DP) algorithm solves an optimization problem by breaking it down to simpler subproblems, 
and memoizing the answers to the subproblems to find the answer to the original problem. 
The subproblems are usually labeled by integers, called the \emp{states} of the original problem.
The algorithm computes a \emp{DP array} $D[\cdot]$, 
where $D[i]$ memoizes the answer to the state (subproblem) $i$. 
In this paper, we use $i$ or \stt{i} to denote a state $i$. 
We call $D[i]$ the \emp{DP value} for state $i$. 
The DP value $D[i]$ is computed either by a \emp{boundary} condition, 
which is an initial value, or computed by DP values of other states, specified by a \emp{DP recurrence}. 
In this paper, we study recurrences in the following form. 
\begin{equation}
\label{eqn:dp-recur}
  D[i]={\min/\max}_{j\in \pred(i)} f_{i,j}(D[j])
\end{equation}

In the recurrence, $D[i]$ can be computed from several other states (noted as a set $\pred(i)$), 
called the \emp{decisions} at \stt{i}. 
$D[i]$ chooses the best among all decisions. 
If \stt{j} is a decision at \stt{i}, we say \stt{i} \emp{depend on} \stt{j}. 
We say state $j^*$ is the \emp{best decision} of state~$i$, if $D[i]=f_{i,j}(D[j^*])$, i.e.,
state $j$ gives the optimal solution for state~$i$. We use array $\best[i]=j^*$ to record the best decision for each state~$i$. 
}

\hide{
A dynamic programming (DP) algorithm, as introduced in \cref{sec:intro}, 
usually involves computing the DP value $D[i]$ of each state $i$ based on the recurrence, starting from the boundary conditions. 
We use \stt{$i$} to denote a state $i$, or directly use $i$ with clear context. 
}
A DP algorithm solves an optimization
problem by breaking it down to subproblems, memoizing the answers to the subproblems, and using them to find the answer to the original problem. 
The subproblems are usually indexed by integers, referred to as \emp{states}. 
With clear context, we directly use $i$ to refer to ``state~$i$''.
The DP value of a state is determined either by a \emp{boundary} condition (i.e., initial values),
or from other states, specified by a \emp{DP recurrence}. 
This paper studies recurrences in the following form:
\begin{equation}
\label{eqn:dp-recur}
  D[i]={\min/\max}~f_{i,j}(D[j])
\end{equation}
where $j$ is a decision at $i$.
Function $f_{i,j}(\cdot)$ indicates how the DP value of state $j$ can be used to update state $i$.
The transitions (i.e., dependencies) among states form a DAG $G=(V,E)$ as introduced in \cref{sec:intro}. 
We use $\edge{j}{i}$ to denote an edge from $j$ to $i$ in the DAG, and use $\best[i]$ to denote the best decision of state $i$.

\hide{In the recurrence, $D[i]$ can be computed from several other states (noted as a set $\pred(i)$), 
called the \emp{decisions} at \stt{i}, 
and chooses the best (min or max) one among them.
In our paper, we consider the scenario where each decision at $i$ only depend on one other state. 
If \stt{i}'s DP value can be computed from \stt{j}, we say \stt{i} \emp{depends on} \stt{j}. 
}

During a DP algorithm, we may maintain the DP value of a state and update it by the recurrence. 
We call the process of updating $D[i]$ by $D[j]$ a \emp{relaxation}, or say \stt{$j$} \emp{relaxes}~\stt{$i$}. 
A relaxation is successful if the DP value is updated to a \emp{better} value, i.e., a lower (higher) value if the objective is minimum (maximum).
We call the actual DP value of a state the \emp{true} or \emp{finalized} DP value to distinguish from the DP value being updated during the algorithm, 
which we call the \emp{tentative} DP value. 
We say a state $i$ is finalized if we can ensure that its true DP value has been computed, and tentative otherwise.
Among all tentative states, we say a state is \emp{ready}, if all its decisions are finalized, and \emp{unready} otherwise. 
A \naive{} DP algorithm will process all transitions and states based on a topological ordering.

\myparagraph{DP Optimizations.} 
Instead of computing the recurrence straightforwardly, 
many algorithms can optimize the computation by skipping vertices and/or edges in the DAG to save work. 
We call such algorithms \emp{optimized DP algorithms}. 
For example, given an input sequence $A[1..n]$, the optimized LIS algorithm~\cite{Knuth73vol3}
maintains a data structure to precisely find the best decision of each state in $O(\log n)$ cost, 
and only processes $O(n)$ transitions instead of $O(n^2)$ as suggested by the recurrence. 
Similarly, given two input sequences $A[1..n]$ and $B[1..m]$, 
the optimized LCS algorithm only needs to process all states $D[i,j]$ where $A[i]=B[j]$,
instead of $O(nm)$ states in the recurrence. 
Another typical optimization is decision monotonicity where the best decisions of previous states can narrow down the range for best decisions for later states, 
which skips transitions and saves work.
Typical examples include concave/convex generalized LWS (see \cref{sec:post-office}), OAT (see \cref{sec:oat}), GAP (see \cref{sec:gap}), etc. 
We will discuss all these algorithms in this paper. 

\hide{
\begin{itemize}
    \item \emp{Precise Relaxation.} Some DP algorithms can directly find the best decision of a given state once all its predecessors are finalized. 
  This is usually accomplished by certain data structures, such as a binary search structure in LIS~\cite{taocp}. 
    \item \emp{Decision monotonicity (DM)}. 
        DM indicates that the best decisions of previous states can narrow down the range for potential best decisions for later states, 
        and therefore skips transitions and saves work.
        Typical examples include concave/convex generalized LWS (see \cref{sec:xx}), GAP (see \cref{sec:xx}), etc. 
  \end{itemize}
}  

\hide{
\subsection{Commonly-Used DP Optimizations}\label{sec:opt}
Instead of computing the recurrence straightforwardly, 
many algorithms can {optimize} the computation by modifying the DAG based on some rules. 
We call such algorithms \emp{optimized algorithms}. 
Such optimizations can possibly skip vertices and/or edges in the DAG to save work. 
We discuss two well-adopted DP optimizations as examples below.

\begin{enumerate}
  \item \emp{Edge Refinement.} Some optimizations can narrow down the scope of potential best decisions for a given state.
  This usually skips unnecessary edges in the DAG, i.e., 
  we can avoid evaluating edge $\edge{j}{i}$ if \stt{$j$} has been excluded from the best decision of \stt{$i$}. 
    We discuss two types of optimization in this category.
        \begin{itemize}
    \item \emp{Precise Relaxation.} The first case is to directly find the best decision of a given state once all its predecessors are finalized. 
  This is usually accomplished by certain data structures, such as a binary search structure in LIS~\cite{taocp}. 
    \item \emp{Decision monotonicity (DM)}. 
        DM indicates that the best decisions of previous states can narrow down the range for potential best decisions for later states, 
        and therefore skips transitions and saves work.
        Typical examples include concave/convex generalized LWS (see \cref{sec:xx}), GAP (see \cref{sec:xx}), etc. 
  \end{itemize}
  \item \emp{Shortcutting.} Some optimizations \emph{shortcut} two states that are not directly connected in the original DP DAG,
  which skips both vertices and edges. 
  In many cases, the shortcuts are only conceptual and do not need to be evaluated. 
  We also discuss two types in this category, summarized as follows.
  \begin{itemize}
    \item \emp{Greedy Choices.} In some applications, among all the decisions at a state, 
  a greedy choice can be proved optimal. 
  In this case, a series of greedy decisions can be made and we can shortcut the corresponding edges in the DP DAG.
  Effectively, this skips some states (vertices) in the DAG and thus saves work. 
  Typical examples include LCS and the Vishkin-Landau edit distance algorithm. We discuss LCS as an example in \cref{sec:xx}.
    \item \emp{Edge Associativity.} Some other existing divide-and-conquer (DaC) DP algorithms~\cite{}, although not covered by this paper, 
    can also be broadly categorized as shortcutting algorithms.
    Unlike the previous case where a series of greedy decisions from \stt{$j$} to \stt{$i$} are directly employed, 
    these algorithms shortcut \stt{$j$} to \stt{$i$} by enumerating and finding the best path between them.
    The idea applies to associate state transitions, i.e., we can combine the transition from \stt{$j$} to \stt{$l$} and that from \stt{$l$} to \stt{$i$} to 
    compute $D[i]$ from $D[j]$. 
    This is usually achieved by
    enumerating possible intermediate states $l$, typically situated midway on the path between $j$ and $i$. 
    It then recursively finds the optimal solutions from $j$ to $l$ and $l$ to $i$ and combine the solutions, forming a DaC strategy. 
    This usually leads to polylogarithmic depth of the DP DAG (and thus the span bound). 
    
  \end{itemize}
  
\end{enumerate}
}

\hide{
We refer to the first category as \defn{Edge Sparsification}.
Usually in these applications, $|\pred(i)|$ for \stt{i} is large (e.g., $O(n)$ size).
These optimizations aim at reducing the size of $|\pred(i)|$ (usually to $O(1)$ or $O(\log n)$).
Commonly used techniques include:

\begin{enumerate}
  \item \emph{Precise Relaxation.} For some recurrences, it is possible to use some efficient data structures to directly and precisely find the best decision of a given state. A typical example is to use range queries, as in the longest increasing subsequences (LIS) and activity selection problem~\cite{shen2022many}\yan{add more}. 
  \item \emph{Decision Monotonicity.} Different from the previous technique, decision monotonicity narrows down the range of $\best[i]$ by using best decisions of the other states. It is generally more challenging to parallelize since the search range of each state is not known ahead of time and determined on-the-fly during the sequential execution.
\end{enumerate}
}

\hide{
\begin{enumerate}
  \item \emph{Vertex Sparsification.} In some applications, among all the decisions at a state, 
  a greedy decision can be proved to be optimal. In this case, a series of greedy decisions can be made and we can short-circuiting the corresponding edges in the DP DAG.
  Effectively, this skips some states (vertices) in the DAG and thus saves work. 
  Typical examples includes LCS and the Vishkin-Landau edit distance algorithm. We discuss LCS as an example in \cref{sec:xx}.
  \item \emph{Edge Sparsification.} In certain applications, it is possible to narrow down the scope of potential best decisions for a given state.
  Therefore, we can avoid evaluating edge $\edge{j}{i}$ if \stt{$j$} has been excluded as the best decision of \stt{$i$}. 
  This paper studies two types of edge sparsification, summarized as follows.
  \begin{itemize}
    \item The first case is to directly and precisely find the best decision of a given state once all its predecessors are finalized. 
  This is usually accomplished by use certain data structures, such as a binary search structure or various search trees in LIS and activity selection problem~\cite{shen2022many}\yan{add more}. 
    \item The second and more interesting case is the \emp{decision monotonicity (DM)} property. As introduced in \cref{sec:intro}, this property indicates that given any state $i$ with best decision $i^*$, the best decision $j^*$ for another state $j>i$ must satisfy $j^*>i^*$, which is called the convex case, or $j^*>i^*$, called the concave case. 
        In this paper, we focus on the convex case since convex DP recurrences are more widely observed in real world applications. 
        Typical examples include 1D cluster problem (defined in \cref{sec:xx}), 2D GAP problem (defined in \cref{sec:xx}), etc. 
        DM indicates that the best decisions of previous states can help to narrow down the range for potential best decisions for later states. 
        By doing this, one can skip edges corresponding to the unpromising decisions and save work. 
  \end{itemize}
  
  Some other existing work, although not covered by this paper, can also be broadly categorized as combinations of vertex and edge sparsification. 
  Typical examples include divide-and-conquer algorithms such as AALM edit distance algorithm and the 
  and retrieve a path by enumerating the vertex on a boundary and concatenating the two paths in both subproblems.  Typical solutions include~(cite AALM, the other, and those for LIS).
\end{enumerate}
}


\hide{
Although not our main focus, we would also introduce the other category of optimizations, which we refer to as \defn{Vertex Sparsification}.  
The high-level idea is to avoid computing all states in the original recurrence, which will inevitably modifies the edges to guarantee correctness.
Typical examples are when the number of states is large but $|\pred(i)|$ is small (e.g., longest common subsequences), and commonly seen approaches include:

\begin{enumerate}
  \item \emph{Shortcutting.} Some algorithms add shortcut edges which combine a series of dependencies and replace them. A typical use case is to shortcut a list of 0-weight edges, such as the Landau-Vishkin edit distance algorithm~\cite{landau1986efficient,ding2023efficient}.
  \item \emph{Short-circuiting.} Another common approach is to enable divide-and-conquer, and retrieve a path by enumerating the vertex on a boundary and concatenating the two paths in both subproblems.  Typical solutions include~(cite AALM, the other, and those for LIS).
\end{enumerate}

Note that these techniques do not guarantee fewer edges, but they usually reduce the depth of a DAG, and can be used to parallelize certain sequential algorithms.
Also, the two categories are not disparate.  
For example, in \cref{sec:lis,fig:lis}, LIS with $O(n)$ vertices and $O(n^2)$ edges can equivalently be solved by LCS with $O(n^2)$ vertices and edges.
Hence, the classic $O(n\log n)$ algorithm is an edge sparsification (precise relaxation) on LIS but a vertex sparsitication on LCS.
In this paper, we focus on Decision Monotonicity with is an edge sparsification technique.
}


\hide{This paper mainly parallelizes sequential algorithms based on \emph{decision monotonicity} and \emph{greedy choices}. 
We also show an example of applying our framework to \emph{precise relaxation} in \cref{sec:lis}, which directly gives an existing algorithm~\cite{gu2023parallel}.
}

\subsection{Parallelizing Sequential DP Algorithms}

We now discuss our goal to parallelize a sequential DP algorithm.
Our primary goal is to achieve (asymptotically) the same computation as an optimized sequential algorithm. 
More formally, for a sequential algorithm~$\Gamma$ computing a recurrence $R$ with certain optimizations, 
we define the 
\emp{$\Gamma$-optimized DAG}, 
denoted as $G_{\Gamma}=(V_{\Gamma},E_\Gamma)$, as follows. 
$G_{\Gamma}$ is the same as the DP DAG on $R$ with some edges highlighted: 
for all edges that are processed by $\Gamma$, 
we highlight them in $G_{\Gamma}$, and call them the \emp{effective edges}. 
We call the other edges \emph{normal edges}.
These effective edges can be used to fully complete the computation. 
For a sequential DP algorithm $\Gamma$, we hope its faithful and best possible parallelization $\Lambda$ to:

\begin{itemize}
  \vspace{-0.08cm}
  \item process the same effective edges in $G_\Gamma$ and achieve the same work as $\Gamma$, and
  \vspace{-0.08cm} 
  \item have span proportional to the \emp{effective depth} of $G_\Gamma$, 
  defined as the largest number of effective edges in any path in $G_\Gamma$. 
\end{itemize}
\vspace{-0.08cm}
Namely, if our goal is to parallelize $\Gamma$ and perform the same computation, 
the parallel algorithm $\Lambda$ has to process all edges $\edge{j}{i}$ processed by $\Gamma$. 
This means \stt{$i$} can be finalized only after \stt{j} is finalized, 
so the span is related to the effective depth as defined above. 
We use $\eff{E}_{\Gamma}$ to denote the set of effective edges,  
and $\ed(G_\Gamma)$ as the effective depth of $G_\Gamma$. 
More formally, we define the following concepts.

We say a parallel algorithm $\Lambda$ is an \emp{optimal parallelization} of a sequential DP algorithm~$\Gamma$, if 
1) the set of edges processed in $\Lambda$ and $\Gamma$ satisfies $\eff{E}_{\Gamma}\subseteq \eff{E}_{\Lambda}$ and $|\eff{E}_{\Lambda}|=O(|\eff{E}_{\Gamma}|)$, 
2) the work of the two algorithms are asymptotically the same, and
3) the span of $\Lambda$ is $\tilde{O}(\ed(G_\Gamma))$. 
Namely, $\Lambda$ can process a superset of edges of that in~$\Gamma$, but not asymptotically more,
and its span is proportional to the effective depth of the $\Gamma$-optimized DAG.

In addition, we define the \emp{perfect parallelization}.
In an omniscient version of $\Gamma$, we only need to process the best decisions based on their dependencies. 
We define the \emp{$\Gamma$-perfect DAG}, denoted as $G^*_\Gamma$, 
as subgraph of $G_\Gamma$ that only contains the best decision edges (both effective and normal edges).
We say an optimal parallelization $\Lambda$ of $\Gamma$ is also a \emph{perfect parallelization} 
if the span of $\Lambda$ is $\tilde{O}(\ed(G^*_{\Gamma}))$. 



\hide{
In addition, we define the \emp{perfect parallelization} of a sequential algorithm $\Gamma$ as follows.
In an omniscient version of $\Gamma$, we only need to evaluate the best decisions based on their dependencies. 
We define the \emph{perfect depth} of a state $s$ as follows.
Starting from $s$, we will chase the best decision path in the DAG until we see a state initialized by the boundary condition.
The number of effective edges on this path is the \emph{perfect depth} of $s$.
The perfect depth of a DAG $G$, denoted as $\dep^*$ is the largest perfect depth among all states.
An optimal parallelization $\Lambda$ of $\Gamma$ is also a \emph{perfect parallelization} 
if the the span of $\Lambda$ is $\tilde{O}(\dep^*(G_{\Gamma}))$. 
}
\hide{
We say a perfect parallelization $\Lambda$ is a \emp{perfect parallelization} of $\Gamma$, if 
1) $\Lambda$ is an optimal parallelization of $\Gamma$, and
2) the span of $\Lambda$ is $\tilde{O}(\dep^*(G_{\Gamma}))$. 
}

While the definitions seem abstract, we will later show that they are intuitive for concrete problems. 
For example, in longest common subsequence (LCS), both $\ed(G_{\Gamma})$ and $\ed(G^*_{\Gamma})$ are the output LCS length $k$ 
(but for other algorithms they can be different), 
and our goal is to achieve $\tilde{O}(k)$ span for a perfect parallelization.

\hide{
In addition, we define the \emp{perfect parallelization} of $\Gamma$ as follows.
Consider an omniscient version of algorithm $\Gamma$, denoted as $\Gamma^+$, that knows the best decision of all states and only evaluates them in a proper order. 
Then the DAG $G_{\Gamma^+}$ should only have the edges from $\best[i]$ to $i$ as effective, 
and also have all shortcut edges (as normal edges) from $j$ to $i$ as long as there exists a path from $j$ to $i$ in the original DAG. 
We say a parallel algorithm $\Lambda$ is a \emp{perfect parallelization} of $\Gamma$, if 
1) the set of edges evaluated in $\Lambda$ satisfies $|E^*(G_{\Lambda}|)=O(|E(G^*_{\Gamma^+})|)$, and
2) the span of $\Lambda$ is $\tilde{O}(\dep^*(G_{\Gamma^+}))$. 
In other words, the omniscient algorithm $\Gamma^+$ forsees all dependencies between states, but only evaluates the useful ones. 
Therefore, without additional knowledge such as edge association, $\dep^*(G_{\Gamma^+})$ is the best dependency depth we can achieve. 
We will show that many of our new algorithms in this paper is not only optimally parallelized, but also perfectly parallelized. 

While the definitions seem abstract, we will later show that these concepts are intuitive for concrete problems.
The effective depth $\dep^*(G_{\Gamma^+})$ suggests the longest list of best decisions that one depends on the other. 
For example, in LCS, $\dep^*(G_{\Gamma^+})$ is exactly the output LCS length $k$, 
and our goal is to achieve $\tilde{O}(k)$ span for a perfect parallelization. 
}

\hide{
In addition, we define the concept of \emp{perfect parallelization}.
We consider the DAG $G_\Pi=(V_\Pi, E_\Pi)$, which only consists of edges from $\best[i]$ to $i$ for a given recurrence,
with possible shortcuts. 
We say a parallel algorithm $\Lambda$ is a \emp{perfect parallelization} of a recurrence, if 
1) the set of edges evaluated in $\Lambda$ satisfies $E_{\Pi}\subseteq E^*_{\Lambda}$ and $|E(G_{\Lambda}|)=O(|E(G_{\Pi})|)$, and
2) the span of $\Lambda$ is $\tilde{O}(\dep(G_\Pi))$, where $\dep(G_\Pi)$ is the depth of $G_\Pi$. 
In an omniscient approach, the state $i$ only relies on $\best[i]$, and can be computed when $\best[i]$ is ready.
We will show that many of our new algorithms in this paper is not only optimally parallelized, but also perfectly parallelized. 
}

Note that the ``perfect parallelization'' of a sequential algorithm does not directly suggest optimality in work or span bounds for the same \emph{problem}. 
One can possibly achieve better bounds by redesigning the recurrence and/or sequential algorithm with fewer edges or a shallower depth.
Instead of finding or redesigning a different DAG to obtain new optimizations, 
our focus is to provide parallelization of existing sequential algorithms with optimizations. 

\hide{
A parallel algorithm $\Lambda$ is a \emp{perfect parallelization} of a sequential algorithm $\Gamma$, if 
1) $\Lambda$ evaluates the same set of dependencies as $\Gamma$, i.e., $G_{\Gamma}=G_{\Lambda}$, and
2) the span of $\Lambda$ is $\tilde{O}(\mathit{depth}(\Gamma))$. }


\hide{
\myparagraph{Challenges.}
We identify three major challenges in such parallelizations, particularly for DP with decision monotonicity.

The first challenge is to process multiple vertices in parallel.
Recall that in edge sparcification, we prune an edge $\edge{j}{i}$ if $j$ cannot be the best decision of $i$.
However, when multiple states are processed in parallel,
the states in the same batch cannot see the result of each other. 
Therefore it can be hard to narrow down the range for $\best[i]$ in the same way as the sequential algorithm. 

The second challenge is to check the readiness of vertices.
To achieve work efficiency, we need to accurately decide \defn{all} ready vertices to proceed in each round, and process them in parallel.
Particularly for DP with decision monotonicity, the readiness can be affect by any other vertices.
How to find all vertices work-efficiently is usually highly nontrivial.

Third, many such optimizations relies on efficient data structures, such as monotonic queues. 
These data structures are inherently sequential. To enable the same work as the sequential algorithm, we have
to design new parallel data structures.

Hence, we provide a comprehensive and synergetic solution by a novel algorithmic framework (as shown in \cref{sec:framework}), and multiple algorithmic insights and data structures (discussed in \cref{sec:post-office}) to tackle these challenges, and introduce them in the rest of this paper.
}

In the following, we first present our new algorithmic framework: the \ouralgo{},
which provides a correct, although not necessarily efficient parallelization for general DP algorithms. 
On top of it, for each specific problem, 
we will show how to achieve low work, 
which, as discussed in \cref{sec:post-office,sec:other}, can be highly non-trivial.

\subsection{Our Framework: the \ouralgo{}}\label{sec:framework}

Our idea is based on the \emph{phase-parallel} framework~\cite{shen2022many} (see below)
adapted to DP algorithms. 
The phase-parallel framework by Shen et al. aims to identify (as many as possible) operations that do not depend on each other, and process them in parallel. 
Directly applying this framework to DP algorithms will give the following algorithm outline:

\vspace{.3em}

\noindent\textbf{Phase-Parallel Framework for DP algorithms}
\begin{mdframed}[style=mystyle]
\begin{enumerate}[nosep,label=\arabic*., leftmargin=*]
  \item[] \label{step:pp:while}While there exist any tentative states:
  \begin{itemize}[nosep, leftmargin=2em]
    \item Find the set of ready states as $\ff$
    \item Process all states in $\ff$ in parallel and mark all of them as finalized 
  \end{itemize}
\end{enumerate}
\end{mdframed}
\vspace{.3em}

We call each iteration of the while-loop a \emp{round}. 
We call the set of states being processed in round $i$ the \emp{frontier} of round $i$, denoted as $\ff_i$. 
While the phase-parallel framework gives a high-level approach in achieving parallelism, it does not indicate how to do so (i.e., how to identify the ready states in each round).

We now introduce our \ouralgo{}, which uses a novel approach to identify the frontier $\ff_{i}$ in each round, particularly for a DP computation $G_\Gamma$. 
The \ouralgo{} identifies the unready tentative states and put \emp{\sentinel{s}} on them; then it uses all \sentinel{s}
to outline a \emp{cordon} to mark the boundary of the frontier. We summarize the algorithm in the following steps.
Note that every step can be processed in parallel. 

\begin{enumerate}[label=\textbf{Step \arabic*}, wide]
  \vspace{-0.05cm}
  \item \label{step:ours:init} Mark all states as tentative and initialize them by the boundary condition.
  \vspace{-0.05cm}
  \item \label{step:ours:find-ready} 
  If a tentative state $j$ can successfully relax another tentative state $i$ (i.e., update $D[i]$ to a better value), 
  put a \sentinel{} on state $i$. Such a \sentinel{} means that all the descendants (inclusive) of state $i$ are unready. 
  We say this state and all its descendants are \emp{blocked} by the \sentinel{} in this case. 
      Therefore, a state is \textbf{ready} if there is no \sentinel{} on any of its ancestors (inclusive). 
  \vspace{-0.05cm}
  \item \label{step:ours:relax} For each ready state, use its DP value to relax the tentative DP values of its descendants. Usually, we need to do so \emph{implicitly} to achieve efficient work. We discuss more details later.
  \vspace{-0.05cm}
  \item \label{step:ours:mark-finalized} Mark all ready states as finalized. Clear all the sentinels.
  \vspace{-0.05cm}
  \item \label{step:ours:repeat} If there still exist tentative states, go to \ref{step:ours:find-ready} and repeat. 
\end{enumerate}

We will first prove that the algorithm is correct, i.e., it computes correct DP values for all states. 
Later we will show some motivating examples to help understand this algorithm in \cref{sec:lis}. 

\begin{figure*}[t]
  \centering
  \includegraphics[width=2\columnwidth]{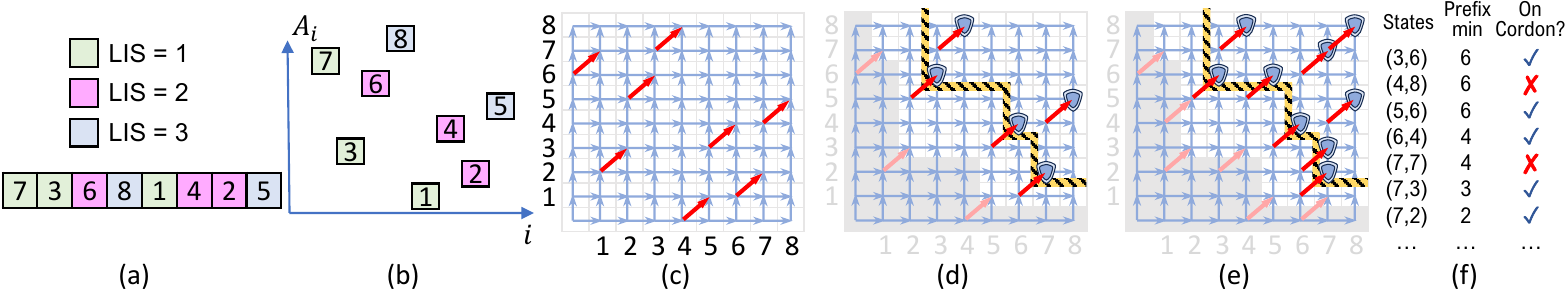}
  \caption{\small {Illustrations for the LIS/LCS problem and the \ouralgo{}.}
  Subfigure (a): An input sequence for LIS with the DP value of each element.
  Subfigure (b): A geometric view of the input sequence on a 2D plane with each element represented as $(i, A_i)$.
  Subfigure (c): The corresponding LCS on this input---the answer is the longest path from $(0,0)$ to $(8,8)$ using the maximum number of red edges.
  Subfigure (d): The process to compute the second cordon.  The ready states are marked in the shaded gray region.  The cordon is decided by the three cells with LIS=2 in the original input.
  Subfigure (e): A general LCS case where every diagonal can be a red edge.
  Subfigure (f): An example execution of our LCS algorithm. Here we only show unready states for simplicity.
  \label{fig:lis}}

\end{figure*} 

\begin{theorem}
  The \ouralgo{} is correct.
\end{theorem}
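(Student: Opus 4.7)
The plan is to prove correctness by induction on rounds, maintaining the invariant that at the start of each round, every finalized state $i$ satisfies $\tilde{D}[i] = D[i]$ (its true DP value), and every tentative state $i$ has $\tilde{D}[i]$ equal to the best relaxation from $i$'s currently finalized predecessors (and the boundary condition). An easy corollary is $\tilde{D}[i] \geq D[i]$ for every state $i$ (in the minimization case): $\tilde{D}[i]$ starts at the boundary and is only updated by relaxations from finalized states, whose values equal $D$. This invariant is preserved by Step 3, which applies relaxations only from ready states, and Step 4, which finalizes only ready states --- provided ready states already hold the correct value, which is the substantive content to establish.

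The key sub-claim is: every state $i$ that becomes ready in round $r$ already satisfies $\tilde{D}[i] = D[i]$ at the moment it is finalized. Let $A$ denote the set of tentative ancestors of $i$ (including $i$) at the start of round $r$. Because $i$ is ready, no sentinel sits on any $j \in A$, which by Step 2 means that for every $j \in A$ and every tentative predecessor $k$ of $j$, the hypothetical relaxation fails: $\tilde{D}[j] \leq f_{j,k}(\tilde{D}[k])$. A crucial closure property is that any tentative predecessor $k$ of a $j \in A$ is itself an ancestor of $i$, hence $k \in A$; this is what lets the following nested induction be self-contained.

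I would then establish $\tilde{D}[j] = D[j]$ for all $j \in A$ by a nested induction on topological order within $A$. For any $j \in A$, consider each predecessor $k$ of $j$: if $k$ is finalized, then the relaxation using $D[k]$ has already been applied in some previous Step 3, so $\tilde{D}[j] \leq f_{j,k}(D[k])$; if $k$ is tentative, then $k \in A$ and by nested hypothesis $\tilde{D}[k] = D[k]$, so the sentinel-free condition above yields $\tilde{D}[j] \leq f_{j,k}(\tilde{D}[k]) = f_{j,k}(D[k])$. Minimizing over all predecessors, $\tilde{D}[j] \leq D[j]$, and combined with $\tilde{D}[j] \geq D[j]$ this gives equality. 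Specializing to $j = i$ proves the sub-claim; then Step 4 finalizes $i$ with the correct value, preserving the outer invariant.

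Finally, termination follows because in any round with tentative states remaining, any tentative $j$ with no tentative predecessor is necessarily ready: no tentative predecessor can place a sentinel on $j$, and all ancestors of $j$ being finalized carry no sentinels (Step 2 only marks tentative states). Thus every round finalizes at least one state, and the algorithm halts in at most $|V|$ rounds with every state finalized at its correct DP value. The main subtlety I anticipate is formulating the invariant together with the closure property of $A$ under the ``tentative predecessor'' relation, which are precisely what let the nested induction go through without circularity; once they are in place, the rest of the argument is routine.
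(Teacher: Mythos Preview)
Your proof is correct, and it reaches the same conclusion as the paper's, but by a genuinely different route. The paper argues by contradiction: assuming a ready state $i$ has an incorrect tentative value, it chases the chain of \emph{best decisions} $i=j_0,\,j_1=\best[j_0],\,j_2=\best[j_1],\dots$ until it locates a single switch point $j_x$ where a tentative $j_{x+1}$ with its true value would successfully relax $j_x$; the resulting sentinel on $j_x$ blocks $i$, giving the contradiction. You instead run a direct nested induction over the entire set $A$ of tentative ancestors of $i$, using the closure of $A$ under ``tentative predecessor'' together with the sentinel-free condition to show every $j\in A$ already carries its true value.

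Your approach buys a cleaner invariant and a stronger intermediate statement (every tentative ancestor of a ready state is already correct, not just $i$), and it packages the termination argument explicitly, which the paper leaves implicit. The paper's best-decision-chain argument is shorter because it only needs to exhibit one blocking sentinel rather than certify all of $A$. One small point worth tightening in your write-up: in the termination paragraph, the claim that ``all ancestors of $j$ are finalized'' for a topologically minimal tentative $j$ deserves one line of justification (e.g., any proper ancestor precedes $j$ in topological order and hence cannot be tentative); equivalently, once you observe that readiness is upward-closed among tentative states (any tentative ancestor of a ready state is itself ready), the fact that finalized states have only finalized ancestors follows immediately.
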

\begin{proof}
  This is equivalent to show that, when we find a ready state and later mark it finalized, its DP value must be finalized. 
  
  We will show this inductively. At the beginning, the ready states are those with zero in-degree in the DAG, and their true DP values are specified by the boundary cases. 
  Clearly, their DP values cannot be relaxed by other states and they will be identified as ready in our algorithm. 
  Their DP values are also computed by the boundary in \ref{step:ours:init}. Therefore, our algorithm is correct in the first round. 
  
  Assume the algorithm is correct up to round $r$. We will show that round $r+1$ also correctly finds the ready states and their DP values. 
  Assume to the contrary that a state $i$ identified in \ref{step:ours:find-ready} does not have its true DP value yet. 
  This means that the best decision of \stt{$i$} has not relaxed \stt{$i$} to the finalized value. 
  Let $j=\best[i]$. 
  Note that \stt{$j$} cannot be finalized: if so, before $j$ is marked as finalized in \ref{step:ours:mark-finalized}, in \ref{step:ours:relax} it must have relaxed \stt{$i$}.
  \hide{We discuss two cases. 
  First, \stt{$j$} is finalized. This is impossible because before $j$ is marked as finalized in \ref{step:ours:mark-finalized}, in \ref{step:ours:relax} it must have relaxed \stt{i}, which leads to a contradiction.
  }
  
  Therefore, \stt{$j$} is tentative. 
  In this case, let $j_0=i$, and state $j_x$ be the best decision of state $j_{x-1}$ for $x\ge 1$, i.e., we chase the chain of best decisions and get a list of states $i=j_0, j_1=\best[j_0]$, etc.
  Let us find the first $x$ such that $D[j_x]$ is not the true DP value but $D[j_{x+1}]$ is the true DP value. 
  We then prove that state $j_{x+1}$ must be a tentative state. If we consider $j_y$ as the first finalized state on this chain, 
  then $j_{y-1}$ is a tentative state, and also must already have its true DP value (because $j_y$ has relaxed it in \ref{step:ours:relax}). 
  Note that since state $j_{y-1}$ is a tentative state, all states between $j_0$ and $j_{y-1}$ must be tentative. 
  Since $j_{y-1}$ has its true DP value, and $i=j_0$ does not have its true DP value, 
  the first switch point $j_x$ must be between $j_0$ and $j_{y-1}$, and must also be a tentative state.
  
  Therefore, state $j_{x+1}$ is a tentative state that can relax state $j_{x}$, so it will put a \sentinel{} on state $j_x$. As a descendant of $j_x$, 
  state $i$ must be blocked by $j_x$, and cannot be identified as a ready state. This leads to a contradiction.   
  Therefore, if a state $i$ is identified as ready in our algorithm, its DP value must have been finalized.   
\end{proof}



The \ouralgo{} tells which states/vertices should be in the frontier in each round, but the algorithm does not show how to do so \emph{efficiently}. 
Especially in \ref{step:ours:relax},
it is almost infeasible to use the finalized DP values to explicitly update all other states.
In this case, we have to develop new parallel data structures to facilitate this step. 
Next, we first use LIS and LCS as two examples to illustrate our framework.
We then apply it to more involved cases that use decision monotonicity in \cref{sec:post-office,sec:other}.

\section{Motivating Examples on LIS/LCS}\label{sec:lis}

To help the audience understand the more complicated algorithms using DM in the following sections,
we first provide two simple examples on \ouralgo{}, especially on how to compute the cordon efficiently.

\myparagraph{Longest Increasing Subsequence (LIS).} We first use LIS as an example.
Given an input sequence $A_i$, LIS computes the maximum of $D[i]$ such that:
\begin{equation}\label{eqn:lis}
D[i]=\max\{1, \max_{j<i, A_j<A_i} D[j]+1\}
\end{equation}
We use $n$ as the input size and $k$ as the output LIS length. 
Directly computing this recurrence takes $O(n^2)$ work.
Sequentially, we can process all states one by one, and compute $\max_{j<i, A_j<A_i} D[j]$ using a binary search structure, giving $O(n\log k)$ total cost~\cite{Knuth73vol3}. 
The binary search precisely finds the best decision of each state, and only $n$ transitions are processed. 
Many parallel LIS solutions have also been proposed~\cite{shen2022many,gu2023parallel,cao2023nearly,krusche2009parallel,krusche2010new}.
We will show that solving LIS using our \ouralgo{} framework will essentially
give an existing algorithm~\cite{gu2023parallel} 
and is a perfect parallelization of the sequential $O(n\log k)$ algorithm. 

Based on \ouralgo{}, the boundary case is to set all tentative DP values as 1. 
Then we will attempt to use the current tentative DP values for relaxation. 
In this case, for a state $i$, as long as there is any other state $j<i$ 
with $A_j<A_i$, $D[i]$ can be relaxed to a better value 2. 
Therefore, all ready states are those input objects that are \emph{prefix-min}
elements in the sequence, i.e., $A_i$ is the smallest value among all $A_{1..i}$.
We can set these states as ready, update all other states and repeat.
Note that since all unfinalized states have the same tentative DP value of 2,
we \emph{do not need to explicitly update} the values in $D[i]$, but
can just maintain a global variable as the current tentative DP value.
By the same idea, the ready states in the next round would be the prefix-min elements
in the input after removing the finalized states. 
The same observation (repeatedly finding prefix-min elements) is exactly the core idea of the algorithm in~\cite{gu2023parallel}.
In their algorithm, they further use a tournament tree to identify prefix-min elements efficiently and achieve efficient $O(n\log k)$ work, 
and $O(k\log n)$ span. 
Note that $k$ is exactly the perfect depth of the DP DAG, which is the longest dependency between best decisions. 

\begin{theorem}\label{thm:lis}
  Combining with a tournament tree, \ouralgo{} leads to a perfect parallelization of sequential $O(n\log k)$ LIS algorithm in~\cite{gu2023parallel},
  where $n$ is the input size, and $k$ is the LIS length.
\end{theorem}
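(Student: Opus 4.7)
The plan is to show that \ouralgo{} applied to the LIS recurrence in \cref{eqn:lis} reproduces exactly the phase structure of the algorithm in~\cite{gu2023parallel}, after which its work and span bounds follow from their analysis. Correctness is already inherited from the general correctness of \ouralgo{}, so only the quantitative bounds and the three conditions of perfect parallelization remain.

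First I would establish the following invariant by induction on the round index~$r$: at the start of round~$r$, every tentative state has tentative DP value equal to~$r$. The base case holds by the boundary initialization in \ref{step:ours:init}. Under this invariant, in \ref{step:ours:find-ready} a tentative state $j$ successfully relaxes a tentative $i>j$ if and only if $A_j<A_i$, since the relaxation bumps $D[i]$ from $r$ to $r+1$. Hence the ready states in round~$r$ are exactly the prefix-minima of $A$ among the current tentative set, and they are finalized with value~$r$. The invariant is preserved because \ref{step:ours:relax} can be realized implicitly by maintaining a single global round counter instead of touching every descendant; this implicit handling is essential, since explicit updates would cost $\Theta(n)$ per round.

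Second I would bound the number of rounds: each round raises the largest finalized DP entry by one, and the overall maximum equals~$k$, so there are exactly $k$ rounds. To identify and remove the prefix-minima in each round work-efficiently, I would plug in the tournament-tree structure of~\cite{gu2023parallel}. Using their amortized analysis, each input element is charged $O(\log k)$ work across all rounds (rather than $O(\log n)$), giving total work $O(n\log k)$. Each round takes $O(\log n)$ span for a top-down traversal, so the overall span is $O(k\log n)$.

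Finally, I would verify perfect parallelization. The sequential $O(n\log k)$ algorithm and our parallel algorithm both process the $n$ effective edges $\edge{\best[i]}{i}$ (through the binary search structure and the counter update, respectively), so the effective-edge inclusion and equal-work conditions are satisfied. The $\Gamma$-perfect DAG $G^*_\Gamma$ consists of precisely these edges, and its longest chain is an actual longest increasing subsequence, of length~$k$; hence $\ed(G^*_\Gamma)=k$ and the span $O(k\log n)=\tilde{O}(\ed(G^*_\Gamma))$ matches the perfect-parallelization target. The hardest step to justify rigorously is the $O(\log k)$ amortized per-element bound for the tournament tree, which is where we lean on the existing analysis in~\cite{gu2023parallel}; the other steps (the invariant, the round count, and the perfect-DAG depth) are direct consequences of the prefix-minimum characterization.
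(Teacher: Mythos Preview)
Your proposal is correct and follows essentially the same approach as the paper: the paper's argument is the informal discussion preceding the theorem (identify ready states as prefix-minima via a global-counter invariant, observe this is exactly the algorithm of~\cite{gu2023parallel}, and cite their tournament-tree bounds), and you simply formalize that same reasoning with an explicit inductive invariant and an explicit verification of the perfect-parallelization conditions. The only minor inaccuracy is that not every state has an effective incoming edge (states with $D[i]=1$ are boundary cases), so the edge count is at most $n$ rather than exactly $n$, but this does not affect the argument.
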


\myparagraph{Longest Common Subsequence (LCS).}
LIS has a close relationship with other important problems such as LCS. 
Here we will revisit our cordon-based LIS algorithm from the view of LCS, which also leads to a new parallel LCS algorithm. 
Given two sequences $A[1..n]$ and $B[1..m]$ ($m\le n$), LCS aims to find a common subsequence $C$ of $A$ and $B$ such that $C$ has the longest length among all
common subsequences. 
An LIS problem can be reduced to an LCS problem by first relabeling all input elements by $1..n$ based on their total order,
and then finding the LCS between this new sequence and a sequence $B=\langle1,2,\dots, n\rangle$. See \cref{fig:lis}(a)--(c) for an illustration.
LCS has been extensively studied both sequentially \cite{hunt1977fast,hirschberg1977algorithms,apostolico1987longest,eppstein1992sparse,Knuth73vol3}
and in parallel \cite{lu1994parallel,tchendji2020efficient,babu1997parallel,xu2005fast,apostolico1990efficient,chen2006fast,ding2023efficient}.
The standard DP solution defines each state $D[i,j]$ as the LCS for $A[1..i]$ and $B[1..j]$, and uses the following recurrence:
\begin{align}\small\label{eqn:lcs}
  D[i,j]=\begin{cases}
           0, & \mbox{if } i=0 \mbox{ or } j=0 \\
           D[i-1,j-1]+1, & \mbox{if } A[i]=B[j] \\
           \max\{D[i-1,j],D[i,j-1]\}, & \mbox{otherwise}.
         \end{cases}
\end{align}

These transitions correspond to horizontal, vertical and diagonal edges on a grid (see an example in \cref{fig:lis}(c)). 
A known sequential optimization (i.e., sparsification)~\cite{hunt1977fast,hirschberg1977algorithms,apostolico1987longest,eppstein1992sparse} to this recurrence is to observe that only the edges correspond to the diagonal edges with $A[i]=B[j]$ are useful. 
The computation is equivalent to finding the longest path from the bottom-left to top-right corresponds to these effective (red) edges,
and all other edges and states can be skipped. This can lead to a sequential algorithm with $O(\totarrows{}\log n)$ cost,
where $\totarrows{}$ is the number of pairs $(i,j)$ such that $A[i]=B[j]$. For LIS, there are exactly $\totarrows{}=n$ such effective edges.

We will show how \ouralgo{} can be used to parallelize this optimization. 
Starting with the boundary where $D[i,j]=0$, we observe that the DP value $D[i,j]$ of any state with $A[i]=B[j]$
can be updated to a \emph{better} value. Therefore, we will put a \sentinel{} at 
each of such states to indicate that they should be updated. All such \sentinel{s} will block the top-right part of the 
grid. In this way, the blocked region is clearly marked by a \emp{staircase} region, as shown in \cref{fig:lis}(d). 
Therefore, the entire region within the first cordon has the DP value 0.
By repeatedly doing this, we will effectively find that the region between the cordons of round $i+1$ and $i$ are those
states with DP value (LCS length) $i$. The algorithm finishes in $k$ rounds where $k$ is the LCS length.

The problem boils down to efficiently identifying the cordon (i.e., the staircase) in each round.
Note that in LIS there is at most one effective edge in each column (see \cref{fig:lis}(d)),
while in LCS there can be multiple effective edges in each column (see \cref{fig:lis}(e)).
We will show an interesting modification to the original LIS algorithm that can handle this more complicated setting.
Here we sort all edges by column index as the primary key (from the smallest to largest) and row index as the secondary key (but from largest to smallest).
An example is illustrated in \cref{fig:lis}(f).
Then, we will still use a tournament tree to maintain this list, and apply prefix-min on the row indexes.
It is easy to see that a state/edge is on the cordon if its row index is smaller than or equal to the prefix-min.
A tournament tree can identify, mark, and remove these states in $O(l\log (L/l))$ work and $O(\log n)$ span~\cite{gu2023parallel}, where $l$ is the number of diagonal edges on the cordon.
We thus have the following theorem.

\begin{theorem}\label{thm:lcs}
  Combining with a tournament tree, \ouralgo{} leads to a perfect parallelization ($O(\totarrows{}\log n)$ work and $O(k\log n)$ span) of sequential LCS algorithm in~\cite{apostolico1987longest},
  where $n$ and $m<n$ are the input sequence sizes, $\totarrows$ is the number of pairs $(i,j)$ such that $A[i]=B[j]$, and $k$ is the LCS length. 
\end{theorem}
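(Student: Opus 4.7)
The plan is to leverage the correctness of the \ouralgo{} (already established) and focus on three things: (i) that the sorted-prefix-min procedure on a tournament tree correctly identifies the cordon in each round for LCS, (ii) that the aggregate work across rounds telescopes to $O(\totarrows \log n)$, and (iii) that there are $\Theta(k)$ rounds, matching the effective depth of $G^*_\Gamma$.

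First I would characterize the cordon. Starting from the boundary $D[i,j]=0$ everywhere, a state $(i,j)$ with $A[i]=B[j]$ is immediately relaxed to $1$, so in Step~2 every effective edge places a \sentinel{} that blocks its upper-right rectangle. Thus the ready states in round $1$ are precisely the cells $(i,j)$ that are strictly below-left of every effective edge, i.e., the interior of a staircase whose \emph{corners} are the effective edges that are not dominated by any other effective edge. I would prove by induction that after round $r$ the same structure holds over the remaining (not-yet-finalized) effective edges, so identifying the frontier reduces to extracting the Pareto-minimal elements from the set of remaining edges. Sorting the edges by column ascending then row descending makes this exact: an edge is Pareto-minimal iff its row index is a (strict) prefix-minimum over this order. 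This is the insight pictured in \cref{fig:lis}(f), and it is the natural generalization of the LIS case where each column holds at most one edge.

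Next I would handle the efficiency of the tournament tree. The tree is built once over all $\totarrows$ effective edges in $O(\totarrows)$ work and $O(\log n)$ span. In each round with a frontier of size $l_r$, extracting the prefix-minima, marking/removing them, and exposing the next layer costs $O(l_r \log(\totarrows/l_r))$ work and $O(\log n)$ span, using the batched tournament-tree bounds of~\cite{gu2023parallel}. Since each effective edge lies on exactly one cordon, $\sum_r l_r = \totarrows$. Using $\log(\totarrows/l_r)\le \log n$ in each term gives $\sum_r l_r \log(\totarrows/l_r) = O(\totarrows \log n)$, so the total work is $O(\totarrows \log n)$. The number of rounds equals the maximum DP value, which is the LCS length $k$, so the total span is $O(k \log n)$.

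Finally I would argue this is a \emph{perfect} parallelization. In the $\Gamma$-optimized DAG of the sparsified LCS algorithm of~\cite{apostolico1987longest}, the best-decision chain of any cell $(i,j)$ is a sequence of effective (diagonal-match) edges whose length equals $D[i,j]\le k$; hence $\ed(G^*_\Gamma)=k$. Since every effective edge is processed exactly once and no normal edges are processed, $|\eff{E}_\Lambda|=\Theta(|\eff{E}_\Gamma|)=\Theta(\totarrows)$, and the work matches up to the $O(\log n)$ overhead inherent in the sequential sparsified algorithm itself. Combined with the $O(k\log n)$ span, this meets the definitions in \cref{sec:framework} of both optimal and perfect parallelization.

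The main obstacle I anticipate is the first step: giving a clean inductive characterization of ``remaining ready set = Pareto-minima of remaining effective edges,'' because it must account for cells that are blocked by multiple \sentinel{s} and for the fact that, unlike LIS, several effective edges can share a column. Once this structural lemma is in place, the tournament-tree bookkeeping and the work/span telescoping are routine.
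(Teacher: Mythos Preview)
Your proposal is correct and follows essentially the same approach as the paper: sort the effective edges by column ascending and row descending, use prefix-min on a tournament tree to extract the Pareto-minimal (staircase-corner) edges in each round at cost $O(l_r\log(\totarrows/l_r))$, and telescope over the $k$ rounds. Your write-up is in fact more explicit than the paper's---the paper handles the ``perfect parallelization'' claim only by remarking earlier that $\ed(G_\Gamma)=\ed(G^*_\Gamma)=k$ for LCS, whereas you spell out why the best-decision chains have length at most $k$ and why $|\eff{E}_\Lambda|=\Theta(\totarrows)$.
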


Since $L=O(n^2)$, the $O(\log L)$ terms in the cost of the tournament trees is stated as $O(\log n)$ in the theorem.

Interestingly, to the best of our knowledge, this is the first parallel LCS algorithm with $o(mn)$ work and $o(\min(n,m))$ span for {sparse} LCS problem (i.e., $\totarrows=o(mn)$ and $k=o(\min(n,m))$). 
Meanwhile, this algorithm is quite simple---we provide our implementation in~\cite{dpdpcode} and experimentally study it in \cref{sec:exp}.
Another interesting finding is that our algorithm implies how to map LCS to LIS (we only know the other direction).
Given two input strings $A$ and $B$, if we sort all $(i,j)$ pairs for $A[i]=B[j]$ by increasing $i$ (primary key) and decreasing $j$ (secondary key), then LCS is equivalent to the LIS on the secondary keys (the $j$(s)) of this sorted list.

We will show more sophisticated parallelization of DP algorithms in the next sections.
Our LCS algorithm will be a subroutine in the more involved parallel GAP algorithm introduced in \cref{sec:gap}.


\hide{
Here we will use the longest increasing subsequence (LIS) as the example, which has recently received much attention in the parallel algorithm community~\cite{shen2022many,gu2023parallel,cao2023nearly,krusche2009parallel,krusche2010new}.
All of these algorithms are solving it on an equivalent form on the longest common subsequence (LCS) problem as shown in \cref{fig:lis}(c) and discussed below.
Given an input sequence $A_i$, LIS computes the maximum of $D[i]$ such that:
\begin{equation}\label{eqn:lis}
D[i]=\max_{j<i, A_j<A_i} D[j]+1
\end{equation}
Directly computing this recurrence takes $O(n^2)$ work, and one can accelerate it by using a range search to compute $\max_{j<i, A_j<A_i} D[j]$.
Sequentially, if we compute $D[i]$ in the increasing order of $i$, then we only need 1D range-max queries and updates, and any balanced BST suffices here with $O(n\log n)$ total work.
However, this solution is inherently sequential.
To parallelize it, existing solution view LIS as an equivalent LCS form, and an example shown in \cref{fig:lis}(a--c).
By replacing each element as a diagonal edge in a 2D-grid based on $(i,A_i)$, computing LIS is equivalent to compute the LCS, or the longest paths from $(0,0)$ to $(n,n)$ regarding the diagonal edges.

\myparagraph{The \ouralgo{} for LIS.}
The recent parallel LIS algorithm by Gu et al.~\cite{gu2023parallel} can be viewed as the \ouralgo{} in the 2D LCS view (\cref{fig:lis}(c--d)).
For this application, there are $n$ vertices of interest in this 2D grid, each is the pointed by a diagonal arrow.
Recall that in the \ouralgo{}, our goal is to decide the set of ready vertices $\ff_{i}$ such that all vertices they depend on (inferred by the \cref{eqn:lis}) are finalized.
The geometric view here is that, all unfinished vertices will block the vertices on the top-right corner.
Hence, we put a sentinel on each unfinished vertex, and they will form a \defn{cordon} that denotes the boundary between the ready and unready vertices.
For instance, in \cref{fig:lis}(d), the gray shaded region is the ready region for $\ff_{1}$.
The next cordon is decided by the rest 5 vertices (sentinels), and we highlight the next cordon $\ff_{2}$.
This cordon can be computed by taking the prefix-min for all unfinished vertices from the bottom row to the top.
Another property here is that, all vertices in $\ff_{i}$ has the DP value as $i$.
In total, this will trivially gives an $O(nk)$ work and $O(k\log n)$ span algorithm where $k$ is the LIS length.
Gu et al.~\cite{gu2023parallel} showed that by using a parallel (static) tournament tree, finding a cordon with $m$ vertices on it can be computed in $O(m\log (n/m))$ work and polylogarithmic span.
The overall work of the algorithm is therefore $O(n\log(n/k))$ and the span is $\tilde{O}(k)$.

\myparagraph{Extensions to LCS.}
We note that by formalizing the process of cordon finding, the above algorithmic idea can extended to a general LCS problem.
The only difference is that every diagonal in the 2D grid can exist (\cref{fig:lis}(e)).
However, this difference does not affect the high-level approach of cordon finding in the \ouralgo{}, but just with minor details in checking and maintaining the readiness.
Due to the space limit, we will discuss this algorithm in Appendix~\ref{app:lis}, and only give the theorem here.
\begin{theorem}
  Given $m$ diagonal edges in the 2D grid, LCS can be computed in $O(m\log(m/k))$ work and $O(k\log m)$ span, where $k$ is the LCS length.
\end{theorem}
Interestingly, similar approaches are categorized in vertex sparsification (see \cref{sec:opt}) and studied sequentially in~\cite{eppstein1992sparse,apostolico1987longest,hirschberg1977algorithms,hunt1977fast}.

Let $P[i]$ be the smallest index of arrows in column $i$.
In each round, the shape of the staircase is formed by the prefix-min of $P[1..n]$, denoted by $P_{\min}[1..n]$.
Then for column~$i$, all arrows under $P_{\min}[i-1]$ are eliminated in this round,
and we update $P[i]$ for the next round.
For example, in \cref{fig:lis}(e) we have $P_{\min}[6]=4$, so in the $7$'th column the staircase eliminates two arrows,
and we update $P[7]$ to $7$ as the next arrow in column $7$.
In each column, if the arrows are pre-sorted, we can figure out how many arrows are eliminated by a dual binary search,
where the cost is proportional to the number of arrows eliminated.
The total number of arrows is $O(k(m+n))$.

\begin{theorem}
  If all the arrows are sorted in each column, \ouralgo{} leads to a perfect parallelization ($O(k(m+n))$ work and $\tilde{O}(k)$ span) of sequential LCS algorithm in~\cite{apostolico1987longest}, 
  where $n$ and $m$ are the input sequence sizes, and $k$ is the LCS length. 
\end{theorem}
}

\section{Parallel Generalized LWS}\label{sec:post-office}

We now discuss the convex/concave \emp{generalized least weight subsequence (GLWS)} problem, 
which is one of the most classic cases of decision monotonicity (DM).
Given a cost function $w(j,i)$ for integers $0\le j<i\le n$ and $D[0]$, 
the GLWS problem computes:
\begin{equation}\label{dp-equation}
  \displaystyle D[i]=\min_{0\le j<i}\{E[j]+w(j,i)\}
\end{equation}
for $1\le i \le n$, where $E[j]=f(D[j],j)$ can be computed in constant time. 
The original least weight subsequence (LWS) problem~\cite{hirschberg1987least} is
a special case when $E[j]=D[j]$.  
Here we use the general case $E[j]=f(D[j],j)$ that
has the same sequential work bound~\cite{eppstein1990sequence,galil1989linear,klawe1989simple,galil1992dynamic,eppstein1988speeding,eppstein1992sparse2}, 
because the generalized version is needed in many applications (see examples in \cref{sec:other}).
The GLWS problem is also referred to as 1D/1D DP by \citet{galil1992dynamic}.
The GLWS problem is highly relevant to other important problems (e.g., line breaking~\cite{knuth1981breaking},
optimal alphabetic trees \cite{larmore1993parallel}, and a number of computational geometry problems \cite{aggarwal1990applications}).
The essence of GLWS is to cluster a list of 1D objects based on spatial proximity and minimize the total weighted sum of all clusters.
As an intuitive example, consider selecting a subset of villages on a road (with their coordinates known) to build post offices to minimize the total cost, 
where 
$w(j,i)$ is the cost of using one post office to serve villages $j+1$ to $i$. 
This gives a GLWS problem with $D[i]$ as the lowest cost to serve the first $i$ villages and $E[i]=D[i]$. 
The DP recurrence enumerates all possible decisions $j$ such that the last post office serves the villages $j+1$ to $i$,
and takes the minimum cost among all possible decisions $j$. 
Many cost functions $w$ used in practice (e.g., a fixed cost plus a linear or quadratic cost to the service range or sum of distances from villages to the post office) are \emph{convex}, 
which implies decision monotonicity (DM) --- for two states $i$ and $j>i$, their best decisions $i^*$ and $j^*$ satisfy $j^*\ge i^*$.
Symmetrically we can show that for \emph{concave} cost functions~$w$, either $j^*\ge i$ or $j^*\le i^*$ holds, although concave cost functions are less common in the real-world applications of GLWS.
\hide{
Many practical cost functions $w$, e.g., $w(j,i)$ is the total distance from the post office to all villages in this range,
are \emph{convex} (defined below) and lead to DM. 
This means that for two states $i$ and $j>i$ where their best decisions are $i^*$ and $j^*$, then we must have $j^*\ge i^*$. 
One can also define a \emph{concave} cost function, where we must have either $j^*\ge i$ or $j^*\le i^*$. }

Given its high relevance in practice, 
convex GLWS has been studied in parallel. 
\citet{apostolico1990efficient} showed an algorithm with $O(n^2\log n)$ work and $O(\log^2n)$ span. 
Later, \citet{larmore1995constructing} showed an improved algorithm with $O(n^{1.5}\log n)$ work and $O(\sqrt{n}\log n)$ span.
Despite the interesting algorithmic insights in these algorithms, the polynomial overhead in work limits their potential to outperform the classic sequential solutions with $\tilde{O}(n)$ work~\cite{eppstein1990sequence,galil1989linear,klawe1989simple,galil1992dynamic,eppstein1988speeding,eppstein1992sparse2}.
For the concave case, some work \cite{chan1990finding,bradford1998efficient} achieved near work-efficiency and polylog span on the original LWS,
but the ideas cannot be applied to generalized LWS.

In this section, we show how to use the \ouralgo{} to parallelize a well-known sequential GLWS algorithm with ${O}(n\log n)$ work, which works for both convex and concave DM. 
Although efficiently applying \ouralgo{} here requires many sophisticated algorithmic techniques, 
our parallel algorithm (\cref{alg:dp-glws}) remains practical and it indeed outperforms the sequential algorithm in a wide parameter range (see \cref{sec:exp} for details).
It is also the key building block for many other algorithms shown later in \cref{sec:other}.

We start with preliminaries and the classic sequential algorithm, then discuss how to use \ouralgo{} to parallelize it.
We will use the convex case when describing the algorithm since it is used more often in practice, and discuss the concave case in \cref{sec:post-office-concave}.

\subsection{Preliminaries}\label{sec:post-office:prelim}

\myparagraph{Convexity, Concavity and Decision Monotonicity.}
The convexity of the cost function $w$ is defined by the Monge condition \cite{monge1781memoire}.
We say that $w$ satisfies the \defn{convex} Monge condition (also known as quadrangle inequality \cite{yao1980efficient}) if for all $a<b<c<d$,
\begin{equation}
  w(a,c)+w(b,d) \le w(b,c)+w(a,d).
\end{equation}
We say that $w$ satisfies the \defn{concave} Monge condition (also known as inverse quadrangle inequality) if for all $a<b<c<d$,
\begin{equation}
  w(a,c)+w(b,d) \ge w(b,c)+w(a,d).
\end{equation}

Consider two states $i$ and $j>i$ with best decisions $i^*$ and $j^*$. 
A convex weight function leads to DM such that $j^*\ge i^*$. 
A concave weight function leads to DM such that either $j^*\ge i$ or $j^*\le i^*$.

\iffullversion{
Another condition closely related to the Monge condition is the total monotonicity \cite{aggarwal1987geometric}.
We say a $n\times m$ matrix $A$ is \defn{convex totally monotone} if for $a<b$ and $c<d$,
$$
A(a,c)\ge A(a,d) \Rightarrow A(b,c)\ge A(b,d).
$$
We say a $n\times m$ matrix $A$ is \defn{concave totally monotone} if for $a<b$ and $c<d$,
$$
A(a,c)\le A(a,d) \Rightarrow A(b,c)\le A(b,d).
$$
Let $c_i$ be the column index such that $A[i,c_i]$ is the minimum value in row $i$.
The convex total monotonicity of $A$ implies that $r_1\le r_2\le...\le r_n$,
while in the concave case $r_1\ge r_2\ge...\ge r_n$ \cite{galil1992dynamic}.
If $A$ is convex totally monotone, any submatrix $B$ of $A$ is convex totally monotone.
The convex/concave Monge condition is the sufficient but not necessary condition for convex/concave total monotonicity.

In GLWS, we call $f_{i,j}=E[j]+w(j,i)$ a transition from $j$ to $i$.
The convex/concave decision monotonicity is equivalent to the convex/concave total monotonicity of $f_{i,j}$.
Note that if $w$ satisfies the convex/concave Monge condition, so does $f_{i,j}$.
But the convex/concave total monotonicity of $w$ does not guarantee the convex total monotonicity of $f_{i,j}$.
All the theorems in this paper only need the convex/concave total monotonicity of $f_{i,j}$.
}

\ifconference{
A more general condition to enable DM is \emph{convex/concave total monotonicity}~\cite{aggarwal1987geometric}.
Our algorithm works as long as the transition function is convex/concave totally monotone. 
For page limit, we provide the full background of related discussions in~\cite{fullversion}.
}

\myparagraph{The Sequential Algorithm.}
The best (sequential) work bound for convex GLWS is $O(n)$~\cite{larmore1991line,klawe1989simple,galil1992dynamic}, and $O(n\alpha(n))$ for the concave case~\cite{klawe1990almost}. %
However, both of them are mainly of theoretical interest since they are complicated and have large constants in both work and space usage. 
We parallelize a simpler and more practical algorithm with $O(n\log n)$ work~\cite{galil1992dynamic}.
This algorithm computes $D[1..n]$ in order.
It implicitly maintains the best decision array $\best[1..n]$.
When the algorithm finishes computing $D[i]$, the algorithm updates $\best[(i+1)..n]$ using $D[i]$,
then $\best[j]$ ($j>i$) will be the best decision of state $j$ among states $0$ to $i$. 

However, maintaining and updating this array of size $n$ for $n$ iterations require quadratic work.
Observe that after computing $D[i]$, $\best[(i+1)..n]$ must be non-decreasing in the convex case, 
and must be non-increasing in the concave case~\cite{galil1992dynamic}. 
Hence, the algorithm maintains a ``compressed'' version of $\best[(i+1)..n]$ by a list of triples $([l,r],j)$,
which indicates that all states between $l$ and $r$ have best decision $j$, i.e., $\forall i'\in [l,r]$, $\best[i']=j$. 
The list is maintained by a \emph{monotonic queue}, 
which is a classic data structure based on double-ended queue, and is inherently sequential. 
In the $i$-th iteration, we can directly find the decision of state $i$ from the queue. 
After obtaining $D[i]$, the monotonic queue can be updated in $O(\log n)$ amortized cost to consider $i$ as a decision for all later states.
In total, this algorithm costs $O(n\log n)$ work.
Here we refer to the audience to the original paper for details of this algorithm. 
We will call this algorithm $\gammalws$. 
Making use of DM, $\gammalws$ only processes transitions between each state $i$ and its best decision. 
The DAG $G_{\gammalws}$ for this algorithm includes normal edges $\edge{j}{i}$ for all $j<i$,
and exactly $n$ effective edges $\edge{\best[i]}{i}$ for all states $i$. 

\hide{
To efficiently batch-update the best decision array,
the sequential algorithm maintains double-ended queue storing triples $([l,r],j)$,
which means that $\best[l..r]=j$.
The triples in the queue are sorted in ascending order of $[l,r]$
so that $\best[i]$ can be read directly from the triple in the front of the queue.
After we compute $D[i]$,
in the convex case we can start from the back of the queue (or from the front in the concave case)
and check whether $i$ is better than the current best decision.
For a interval we only check its two endpoints, and if $i$ can successfully relax both of them we can pop out this interval and proceed to the next one.
Otherwise, we do a binary search inside this interval to find the cutting point.
}

\hide{
Duering the process we add $O(n)$ intervals to the queue,
so the push/pop of the invervals taks $O(1)$ time amortized for each $i$.
Plus the cost of binary search, the total work is $O(n\log n)$.
}

Due to simplicity, this algorithm is usually the choice of implementation in practice. 
We will show a parallel version of this algorithm using \ouralgo{}. 

\hide{
\myparagraph{Existing Parallel Algorithms}
For the LWS problem where $f(D[j])=D[j]$ and $w(j,i)$ is concave,
there exists a near work-efficient algorithm with polylogarithmic span~\cite{chan1990finding}.
We are unaware of existing (nearly) work-efficient and well-parallelized algorithm for the generalized LWS, 
whether for convex or concave cases.}

\subsection{Parallel Convex GLWS}

\begin{algorithm}[t]
\small
\caption{Parallel Convex GLWS \label{alg:dp-glws}}
\SetKwProg{myfunc}{Function}{}{}
\SetKwFor{parForEach}{ParallelForEach}{do}{endfor}
\SetKwFor{mystruct}{Struct}{}{}
\SetKwFor{pardo}{In Parallel:}{}{}
\SetKwInput{Input}{Input}
\Input{
problem size $n,D[0]$, cost function $w(\cdot, \cdot)$
}
\SetKwInput{Output}{Output}
\Output{
$D[1..n]$: the DP table
}
\SetKwInput{Maintains}{Maintains}
\Maintains{
$B$: an sorted array storing triples of $([l,r],j)$, meaning that for all $l\le i\le r$, the current best decision $\best[i]=j$ \\
}

$\now \gets 0$  \\
\While(){$\now < n$}{
    $\nxt \gets$ \funcfont{FindCordon}($\now$) \\
    $\funcfont{UpdateBest}(\now,\nxt)$ \label{line:updateB}\\
    $\now \gets \nxt-1$
}
\Return $D[1..n]$

\DontPrintSemicolon

\myfunc{\upshape\funcfont{FindCordon}($\now$)}{
    $\cordon \gets n+1$ \\
    \For{$t\gets 1$ \emph{\textbf{to}} $\log n$\label{line:1d:forloop}}{
        $l \gets \now+2^{t-1}$ \\
        $r \gets \min(n,\now+2^t-1)$ \\
        \parForEach{$j\in[l,r]$}{
        Let $\best[j]$ be the current best decision of $j$ recorded by $B$\\
        $D[j]\gets E[\best[j]]+w(\best[j],j)$\tcp*[f]{relax $j$}\label{line:1d:computeD}\\
        Binary search in $B$ and find $s_j=\min \{i: E[j]+w(j,i) < E[\best[i]]+w(\best[i],i)\}$, 
        i.e., $i$ is the first state that can be successfully relaxed by $j$ \label{line:1d:try-relax}
        }
        $\cordon\gets\min(\cordon,\min_{l\le j\le r}s_j)$ \\
        \lIf(){$\cordon\le r+1$}{break}
    }
    \Return $\cordon$
}

\myfunc{\upshape\funcfont{UpdateBest}($\now,\nxt$)}{
    Tree $T\gets \funcfont{FindIntervals}(\now+1,\nxt-1,\nxt,n)$ \\
    Flatten $T$ into array $B$\\
    Merge adjancent intervals with the same best decision in $B$ \\
}

\myfunc{\upshape\funcfont{FindIntervals}($\jl,\jr,\il,\ir$)\tcp*[f]{use $D[\jl..\jr]$ to update $\best[\il..\ir]$}}{
    \lIf(){$\il>\ir$}{\Return \emph{null}}
    \lIf{$\jl=\jr$}{
      \Return a leaf node$([\il,\ir],\jl)$ 
    }
    
        $\im \gets (\il+\ir)/2$ \\
        $\jm \gets \argmin_{\jl\le j \le \jr}(E[j]+w(j,\im))$\label{line:1d:jm} \\
        $x\gets$ node$([\im,\im],\jm)$ \\
        \pardo{}{
        $T_l\gets$\funcfont{FindIntervals}($\jl,\jm,\il,\im-1$) \\
        $T_r\gets$\funcfont{FindIntervals}($\jm,\jr,\im+1,\ir$)
        }
        \Return node $x$ with left child as $T_l$ and right child as $T_r$
    
}

\end{algorithm} 

\hide{
\begin{algorithm}[t]
\small
\caption{Parallel Convex 1D Clustering \label{alg:dp-new}}
\SetKwProg{myfunc}{Function}{}{}
\SetKwFor{parForEach}{ParallelForEach}{do}{endfor}
\SetKwFor{mystruct}{Struct}{}{}
\SetKwFor{pardo}{In Parallel:}{}{}

\SetKwInput{Input}{Input}
\Input{
problem size $n,D[0]$, cost function $w$.
}
\SetKwInput{Output}{Output}
\Output{
$D[1..n]$: the DP table
}
\SetKwInput{Maintains}{Maintains}
\Maintains{\\
$\now$: the largest position where $D[1..\now]$ are finalized \\
$B$: ordered-set storing triples of $([l,r],j)$. \\
}

\DontPrintSemicolon

\myfunc{\upshape\funcfont{FindCordon}($\now$)}{
    $s \gets 1$  \\
    $\nxt \gets n+1$ \\
    \While(){true}{
        $l \gets \now+2^{s-1}$ \\
        $r \gets \min(n,\now+2^s-1)$ \\
        \parForEach{$j\in[l,r]$}{
        Read the current $\best[j]$ from $B$. \\
        Binary search $s_j$ on $T$, where $s_j$ is the minimum position that $j$ can successfully relax $s_j$ (the sentinel put by $j$).
        }
        $\nxt\gets\min(\nxt,\min_{l\le j\le r}s_j)$ \\
        \lIf(){$\nxt\le r+1$}{break}
        $s \gets s+1$
    }
    \Return $\nxt$
}

\tcp{Use $D[j_l..j_r]$ to update $\best[i_l..i_r]$}
\myfunc{\upshape\funcfont{FindBestChoiceIntervals}($j_l,j_r,i_l,i_r$)}{
    \lIf(){$i_l>i_r$}{\Return}
    \If{$j_l=j_r$}{
      Add $([i_l,i_r],j_l)$ to $B$. \\
    }
    \Else{
        $i_m \gets (i_l+i_r)/2$ \\
        $j_0 \gets \argmin_{j_l\le j \le j_r}(E[j]+w(j,i_m))$ \\
        Add $([i_m,i_m],j_0)$ to $B$. \\
        \pardo{}{
        \funcfont{FindBestChoiceIntervals}($j_l,j_0,i_l,i_m-1$) \\
        \funcfont{FindBestChoiceIntervals}($j_0,j_r,i_m+1,i_r$)
        }
    }
}

\myfunc{\upshape\funcfont{UpdateBestChoice}($\now,\nxt$)}{
    $B\gets\emptyset$ \\
    $\funcfont{FindBestChoiceIntervals}(\now+1,\nxt-1,\nxt,n)$ \\
    Merge adjancent intervals with the same best decision in $B$. \\
}

\myfunc{\upshape\funcfont{Convex1DClustering}()}{

$\now \gets 0$  \\
\While(){$\now < n$}{
    $\nxt \gets$ \funcfont{FindCordon}($\now$) \\
    Compute $D[i]$ for $i\in[\now+1..\nxt-1]$ \\
    $\funcfont{UpdateBestChoice}(\now,\nxt)$ \\
    $\now \gets \nxt-1$
}
\Return $D[1..n]$

}

\end{algorithm} 
} 

We first give the parallel algorithm of convex GLWS. 
We will use the ``post-office'' problem mentioned above as a running example to explain the concepts, 
but our algorithm works for general cases. 

Following the idea of the phase-parallel algorithm, 
with the current finalized states, the goal is to find all ready states as the frontier, 
where their true DP values can be computed from the finalized ones. 
We will use our \ouralgo{} to find the frontier in each round. 
Na\"ively, the recurrence suggests that a state depends on all states before it.
However, note that a state is essentially ready as long as its best decision has been finalized. 
For the convex case, we will use the fact as shown below. 

\begin{compactfact}
  In convex GLWS, let $S=\{j: j>i\wedge\best[j]\le i\}$, which is the set of states with best decisions no later than 
  state $i$; then $S$ is a consecutive range of states starting from $i+1$. 
\end{compactfact}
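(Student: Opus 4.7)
The plan is to derive this fact as an essentially immediate consequence of the convex decision monotonicity (DM) property, which the paper has already stated: for any two states $a<b$, their best decisions satisfy $\best[a]\le\best[b]$. The predicate defining $S$, namely ``$\best[j]\le i$,'' is exactly the kind of condition that is preserved downward under convex DM, so the whole fact should reduce to two short observations.

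The first step I would carry out is to prove that $S$ has no ``gaps'' in the range $(i,n]$. Concretely, pick any $j\in S$ and any $j'$ with $i<j'<j$; by convex DM applied to the pair $j'<j$, we have $\best[j']\le\best[j]\le i$, and since $j'>i$ by construction, $j'\in S$. Thus every state strictly between $i$ and any element of $S$ is itself in $S$.

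The second step is to argue that, when $S$ is nonempty, its smallest element is exactly $i+1$. This is just the special case $j'=i+1$ of the previous argument applied to $j=\max S$ (or any $j\in S$ with $j>i+1$; if $S$ contains $i+1$ itself nothing needs to be shown). Combining the two steps, $S=\{i+1,i+2,\ldots,\max S\}$ (with the convention that $S=\emptyset$ when no such maximum exists), which is precisely a consecutive range of states starting from $i+1$.

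I do not anticipate any real obstacle; the whole fact is a one-line corollary of convex DM applied contrapositively. The only small care-taking is to handle the empty case cleanly and to make sure the convex DM statement the paper uses is the unconditional one (for all $a<b$, $\best[a]\le\best[b]$), as depicted in Figure~\ref{fig:dm}(a), rather than a restricted variant.
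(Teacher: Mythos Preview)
Your argument is correct: once convex decision monotonicity ($\best[a]\le\best[b]$ whenever $a<b$) is available, the set $S$ is downward-closed in $(i,n]$ and hence an interval starting at $i+1$, exactly as you show. The paper does not spell out a proof beyond remarking that it ``can be proved by induction,'' and your direct application of DM is the natural way to cash that out; there is no substantive difference in approach.
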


This is a known fact in the sequential setting (can be proved by induction). It suggests that
the frontier of each round in the phase-parallel algorithm is a consecutive range of states. Based on this idea, 
we will maintain $\now$ as the last finalized state in each round. 
Then in the next round, ideally, the algorithm should find the cordon at $\nxt$, 
where all states $[\now+1, \nxt-1]$ are ready and can compute their true DP value from (i.e., have their best decisions at) states no later than $\now$. 
We show an example of the post-office problem to illustrate the phase-parallel framework in \cref{fig:1d-parallel}.
Based on the discussions above, ideally, in round $i$, the ready states in the frontier
are those where the best solution contains $i$ post offices. 
This is because their best decision must have $i-1$ post offices, and must be finalized in the previous round. 

This high-level idea is presented in the main function of \cref{alg:dp-glws}. 
Starting from $\now=0$, given the current finalized states $[0,\now]$, 
we will find all ready states $[\now+1,\nxt-1]$ using the \ouralgo{}, 
which essentially will find the cordon at $\nxt$. 
We explain this part with more details in \cref{sec:cordon}. 
Similar to the sequential algorithm,
we also maintain a data structure $B$ to store all triples $([l,r],j)$ in order,
which indicates that all states between $l$ and $r$ have best decisions at $j$. 
This data structure is essential to guarantee the efficiency of finding the next frontier, 
and also has to be updated after each round with the new DP values (\cref{line:updateB}).

\hide{
We still follow the basic idea in the sequential algorithm to maintain $\best[1..n]$.
The difference is that in each round,
we need to detect all ready states so they can be computed in parallel.
\cref{alg:dp-new} shows the process.
Suppose $D[1..\now]$ have been finalized.
By the \ouralgo, the $\nxt$ is the minimum index of all sentinels.
If we compute $\nxt$, states $[\now+1..\nxt-1]$ will be ready.
After we compute $D[\now+1..\nxt-1]$, we use their values to update $\best[\nxt..m]$.
Note that as $\nxt$ depends on some state between $[\now+1..\nxt-1]$,
its best decision must be updated, and so do all states after $\nxt$.
\cref{fig:1d-parallel} illustrates one round of \cref{alg:dp-new}.
}

\hide{
Let $k=\best^*(n)$ be the number of times the $best[]$ function must be iteratively applied before the result is $0$.
For \cref{alg:dp-new} we have the following conclusion:

\begin{lemma}\label{sec:dp-lemma}
  If $w$ satisfies the convex Monge condition, \cref{alg:dp-new} ends in no more than $k$ rounds.
\end{lemma}

\begin{proof}
  Consider the sequence $[v_0,v_1,v_2,...,v_k]$ where $v_0=0$, $v_k=n$ and $\best[v_i]=v_{i-1}$ for $i\in[1,k]$.
  $\now$ starts with $v_0$.
  Suppose $\now\ge v_{i-1}$ in the beginning of the current round.
  Because $\best[v_i]=v_{i-1}$ and the non-decreasing property of $\best$,
  we have $\best[\now+1..v_i]\le v_{i-1} \le \now$.
  So $[\now+1..v_i]$ has been finalized, and we must have $\nxt>v_i$,
  which leads to $\now\ge v_i$ in the end of the round.
  Thus the number of rounds is at most $k$.
\end{proof}
}

\hide{To maintain the best decisions, we use an ordered-set of triples $B$ in $([l,r],j)$ format,
which is the same as the double-ended queue in the sequential algorithm.
However, we still have two steps missing in \cref{alg:dp-new}:
1) how to find the cordon, and 2) how to update the best decisions.
We now elaborate how to work-efficiently do these two steps.
}

\subsubsection{Finding the Cordon}\label{sec:cordon}

To find the ready states in each round, we use the \ouralgo{}.
Namely, with all states up to $\now$ finalized, 
we can attempt to use the tentative states after $\now$ to update other tentative DP values. 
Once we find any \stt{$j$} that can update \stt{$i$},
we put a \sentinel{} at \stt{$i$}. Among all \sentinel{}s, the smallest (leftmost) one will give the final position of the cordon. 

However, note that we cannot afford exhaustive checking for all pairs of states $(j,i)$. 
First of all, checking all possible $j>\now$ may incur large overhead in work,
since most of the later states are unready anyway. 
Ideally, the algorithm should check up to exactly the position of $\nxt$, but this would be a chicken-and-egg problem. 
To handle this, our idea is to use \emph{prefix-doubling}, a common idea in parallel algorithm design (e.g.,~\cite{blelloch2016parallelism,gu2023parallel,gu2022parallel,shen2022many,gbbs2021}) to achieve work-efficiency and high parallelism.
Here, prefix-doubling is used in function \funcfont{FindCordon} in \cref{alg:dp-glws}, 
which attempts to extend the cordon by a batch of $2^{t-1}$ states for increasing $t$ in each substep $t$. 
If the entire batch is ready---i.e., no states in $[\now+1,\now+2^t)$ can be relaxed by each other, and all sentinels are outside the batch---we 
try a larger step and extend the cordon to $\now + 2^{t+1}$. 
During the process, we will maintain $\cordon$ as the leftmost sentinel so far. 
Once we find $\cordon$ is inside the batch, 
it means that this batch is not fully ready. 
Therefore, the process stops and returns the current value of $\cordon$ to the main algorithm. 

Using prefix doubling, the parallel algorithm may check more states than the ready ones, 
but the number of ``wasted'' states is at most twice of the 
``useful'' ones which will be finalized in this round. 
Hence, the total number of processed states is $O(n)$. 

We then discuss the way to avoid checking all states $i>j$ when \stt{$j$} puts sentinels. 
By DM, if $j$ can successfully relax $i$, then $j$ can also successfully relax all states $i..n$.
Therefore, we only need to put a sentinel at the first such state $i$. 
Recall that we maintain all best decision triples in a data structure $B$ in sorted order.
By DM, we can simply binary search ($O(\log n)$ cost) in $B$ to find $s_j$ as the first tentative state that can be updated by \stt{$j$},
and put a sentinel there.

The \funcfont{FindCordon} in \cref{alg:dp-glws} gives the full process as described above.
Each iteration of the while-loop at \cref{line:1d:forloop} is a substep,
which processes a batch of states in $[\now+2^{t-1},\now+2^{t})$ in substep $t$. 
Then for each state $j$ in this batch (in parallel), 
we use $B$ to find the first state that can be updated by $j$ and put a sentinel at this position $s_j$.
Finally, the leftmost sentinel so far forms the cordon. When the cordon is within the current batch, the algorithm returns. 
We also show an illustration of this process in \cref{fig:1d-parallel}. 

\begin{lemma}\label{lemma:findcordon}
  The function \funcfont{FindCordon} has $O(\ffsize \log n)$ work and $O(\log^2 n)$ span, where $\ffsize=\cordon-\now-1$ is the frontier size.
\end{lemma}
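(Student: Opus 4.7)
The plan is to analyze \funcfont{FindCordon} substep-by-substep and then bound the total number of substeps via the prefix-doubling termination condition. First I will examine the cost of a single iteration of the \textbf{for}-loop at \cref{line:1d:forloop}. In substep $t$, the batch $[l,r]$ has $|r-l+1|\le 2^{t-1}$ states. For each $j$ in this batch, reading $\best[j]$ from the sorted structure $B$ and computing $D[j]$ is $O(\log n)$, and the binary search in $B$ on \cref{line:1d:try-relax} to locate $s_j$ is another $O(\log n)$ (since $|B|=O(n)$). Hence substep $t$ contributes $O(2^{t-1}\log n)$ work and, with a parallel-for and a parallel reduction for the $\min$, $O(\log n)$ span.

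The heart of the argument is to bound the final value of $t$. Let $T$ be the substep at which the loop breaks. I would argue that if $r+1 \ge \cordon$ after substep $t$, the condition $\cordon \le r+1$ necessarily fires. By convex DM (as used in \cref{sec:post-office:prelim}), once $j$ can successfully relax some state $i$, it can relax all states after $i$ as well; thus the true cordon equals $\min_{j>\now} s_j$, and this minimum is attained at some state $j^\star<\nxt$. Once the doubling reaches a batch whose right endpoint $r=\now+2^t-1$ satisfies $r\ge j^\star$, the state $j^\star$ is processed, so $\cordon$ is updated to $\nxt$, and the termination check $\cordon\le r+1$ succeeds because $r+1\ge \nxt$. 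Consequently the loop breaks as soon as $2^t\ge \ffsize+1$, giving $T=O(\log \ffsize)=O(\log n)$.

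With $T$ in hand, the total work telescopes as
\[
\sum_{t=1}^{T} O(2^{t-1}\log n) \;=\; O(2^{T}\log n)\;=\;O(\ffsize\log n),
\]
and the total span is $T\cdot O(\log n)=O(\log^2 n)$, matching the lemma. The main obstacle is precisely the step in the second paragraph: justifying that the doubling loop cannot ``overshoot'' beyond $2\ffsize$ before triggering the termination test. This requires identifying the witness $j^\star$ that realizes the true cordon and tying the monotonicity of $s_j$ (inherited from convex DM) to the monotone behavior of the running minimum $\cordon$. Once this is in place, the work and span bounds follow directly from the geometric-series sum and from the fact that substeps are executed sequentially but each admits $O(\log n)$-span parallel evaluation internally.
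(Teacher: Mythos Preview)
Your argument is correct and matches the paper's approach: bound the total number of processed states by the prefix-doubling termination to $O(\ffsize)$, charge $O(\log n)$ per state for the lookups and binary search in $B$, and multiply the $O(\log n)$ per-substep span by $O(\log n)$ substeps. One sentence is slightly off, though: the condition ``$r\ge j^\star$'' does not by itself give ``$r+1\ge \nxt$'' (e.g.\ $j^\star=\now+1$ but $\nxt=\now+100$, where at $t=1$ you have $r\ge j^\star$ yet $r+1<\nxt$ and the break does \emph{not} fire); the clean way is to start from $2^t\ge \ffsize+1$, which directly gives $r+1=\now+2^t\ge \nxt$, and since $j^\star<\nxt\le r+1$ also $j^\star\le r$, so both pieces hold simultaneously and the termination check succeeds---exactly the conclusion you state next.
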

\begin{proof}
As discussed above, the prefix doubling scheme 
  may attempt to process up to $\ffsize'$ states, where $\ffsize' \le 2\ffsize$. 
  For each such state, we may binary search in $B$ to find $s_j$ in $O(\log n)$ cost, and check the condition on \cref{line:1d:try-relax} in $O(1)$ cost.
  Therefore, \funcfont{FindCordon} has work $O(\ffsize\log n)$ and span $O(\log^2 n)$.
\end{proof}

\begin{figure}[t]
  \centering
  \includegraphics[width=\columnwidth]{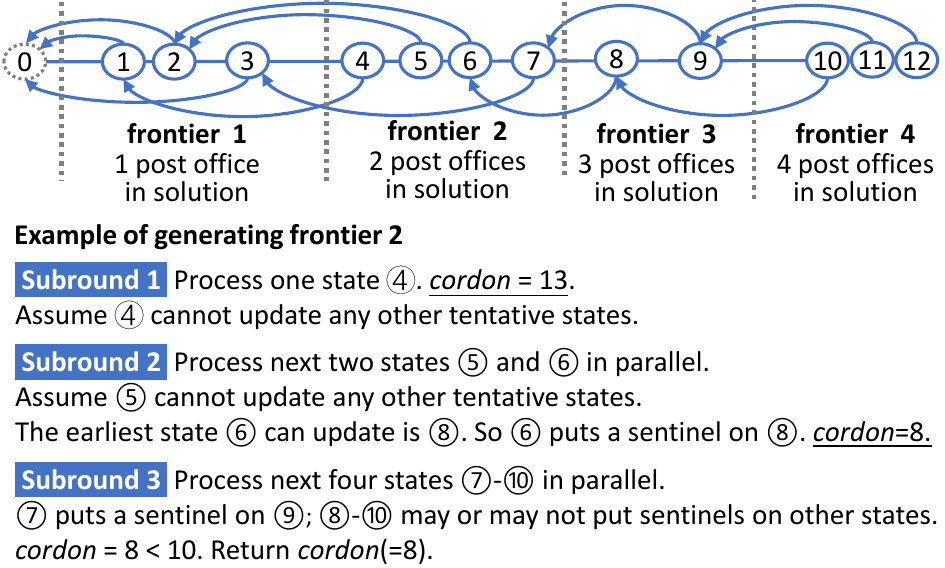}
  \caption{Example of applying the \ouralgo{} to the post office problem with convex cost function. 
  Circles (states) are villages. 
  Arrows are best decisions between states.
  The final answer is four post offices serving villages 1--3, 4--7, 8--9, 10--12, respectively. 
  The subrounds below illustrate the prefix-doubling scheme in \funcfont{FindCordon}. 
  \label{fig:1d-parallel}}
\end{figure}

\hide{
In the \ouralgo{}, if tentative state $j$ can successfully relax tentative state $i$,
we should put a sentinel on $i$.
Then the cordon will be the minimum position of sentinels.
By decision monotonicity, if $j$ can successfully relax $i$,
then $j$ can also successfully relax any state in $[i..n]$.
So for each $j$, we can binary search the first position $s_j$ that $j$ can relax $s_j$,
and put a sentinel on $s_j$.
As we have stored all best decisions,
we can directly do this binary search on array $B$,
which takes $O(\log n)$ time for each $j$.
}

\hide{
To achieve work efficiency, in each round we need to keep the work of finding the cordon to be proportional to the number of ready states.
So we cannot afford calling the binary search for each $j\in[\now+1,n]$.
$\funcfont{FindCordon}$ shows a dual binary search method.
We keeps doubling the set of ready states
until the minimum sentinel is within the states we have checked,
so all future sentinels will not be smaller than the current one.

Let $L=\nxt-\now-1$,
then $\funcfont{FindCordon}$ has $O(L\log n)$ work and $O(\log^2n)$ span.
}

\subsubsection{Generating New Best-Decision Array}\label{sec:updateB}

\hide{
In the sequential algorithm, after we compute $D[i]$,
we binary search the first index $s_i$ that $i$ can successfully relax $D[s_i..n]$,
and update $\best[s_i..n]$ to $i$.
However, in \cref{alg:dp-new} we have a batch of newly finalized states $D[\now+1..\nxt-1]$,
and their ranges to update can affect each other.
}

The efficiency of the algorithm relies on maintaining an ordered data structure $B$
for all best decision triples. 
We will store $B$ as an array of all such triples in sorted order, such that the binary search 
in \cref{line:1d:try-relax} can be performed efficiently. 
Therefore, after we get the newly finalized states $D[\now+1..\nxt-1]$,
we need to update $B$ accordingly to get the new best decision for all states in $[\cordon, n]$. 

We use a divide-and-conquer approach to do this.
Function \funcfont{FindIntervals}$(\jl,\jr,\il,\ir)$ finds all best decision triples for states in range $[\il,\ir]$, with best decisions
in range $[\jl,\jr]$. 
Note that we only need the best decisions for all states after $\cordon$. 
All these states must have their current best decisions within $[\now+1, \cordon-1]$ 
(if their best decisions are before $\now$, they must have been ready in this round and been included in the frontier). 
Therefore, at the root level, we call \funcfont{FindIntervals}$(\now+1,\cordon-1, \cordon, n)$. 

In \funcfont{FindIntervals}, we first compute $\jm=\best[\im]$ where $\im=(\il+\ir)/2$, i.e., the best decision of the state in the middle.
By (convex) DM, the best decisions of $i\in[\il,\im-1]$ are in $[\jl,\jm]$, 
and the best decisions of $i\in[\im+1,\ir]$ are in $[\jm,\jr]$.
We will deal with the two subproblems in parallel.
To collect all $([l,r],j)$ triples in parallel, we build a tree-based structure bottom-up in the recursion.
Finally, we flatten the tree to an array and 
merge the adjacent intervals if they have the same value of $j$. 

\begin{lemma}\label{lemma:updatebest}
  The function \funcfont{UpdateBest} has $O(\ffsize\log n)$ work and $O(\log^2 n)$ span,
  where $\ffsize=\cordon-\now-1$ is the frontier size. 
\end{lemma}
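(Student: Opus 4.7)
The plan is to decompose \funcfont{UpdateBest} into three stages: invoking \funcfont{FindIntervals} to build the tree $T$, flattening $T$ into the sorted array $B$, and merging adjacent triples of $B$ sharing the same best decision. Flattening and merging are standard parallel routines whose cost is bounded by $|T|$, so the heart of the analysis is \funcfont{FindIntervals}.

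First I would bound the span. At each recursive call, the $i$-range $[\il,\ir]$ is split at $\im$ into two halves, so the recursion tree has depth $O(\log n)$. Each node computes $\jm$ via the parallel $\argmin$ at \cref{line:1d:jm} with span $O(\log n)$, and the two recursive calls are launched in parallel. The subsequent flatten and merge use parallel prefix and filter, each with $O(\log n)$ span. Summing yields the $O(\log^2 n)$ span bound.

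For the work, the dominant cost at an internal node $v$ is the $\argmin$ over its $j$-range, costing $O(h_v)$ where $h_v=\jr_v-\jl_v+1$. The target is $\sum_v h_v = O(\ffsize \log n)$. The key structural fact is that \cref{line:1d:jm} partitions the parent's $j$-range into $[\jl,\jm]$ and $[\jm,\jr]$, overlapping only at $\jm$, so $h_{v_1}+h_{v_2}=h_v+1$. I would then charge each $j\in[\now+1,\nxt-1]$ to the internal nodes along its root-to-leaf path in the recursion tree: the $i$-range halving caps the path length at $O(\log n)$, and the recursion terminates at a $\jl=\jr$ leaf once only one candidate best decision remains relevant. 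Summing over the $\ffsize$ values of $j$ yields the $O(\ffsize \log n)$ work bound for \funcfont{FindIntervals}; the same bound controls $|T|$, which in turn bounds the flatten and merge work.

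The main obstacle will be handling the $\jm$-overlap cleanly, since $\jm$ sits in both children's $j$-ranges and could be re-charged at every depth at which it is reselected; a naïve level-by-level count produces an extra additive $O(n-\nxt)$ term from this overlap. To avoid this I plan to exploit convex decision monotonicity to bound the number of non-trivial $\jl=\jr$ leaves, and hence the number of internal nodes doing non-constant work, by $O(\ffsize)$: each distinct best decision in the merged output corresponds to exactly one such leaf, and decision monotonicity guarantees at most $\ffsize$ such distinct decisions. This caps the recursion tree at $O(\ffsize \log n)$ internal nodes and closes the loop on both $\sum_v h_v$ and $|T|$, giving the claimed bounds.
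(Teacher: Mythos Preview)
Your span analysis and the overall decomposition into \funcfont{FindIntervals} plus flatten/merge match the paper. The gap is in the work bound: your fix for the $\jm$-overlap---bounding the number of $\jl=\jr$ leaves by $O(\ffsize)$ via decision monotonicity---fails because distinct leaves can share the same best decision before the merge step. With $\ffsize=2$ and every state in $[\nxt,n]$ having best decision $\now+1$, each call selects $\jm=\now+1$, emits a leaf $([\cdot,\cdot],\now+1)$ on one side, and recurses on the other with $j$-range still $[\now+1,\now+2]$; this produces $\Theta(\log n)$ leaves, not $O(1)$. Even if the leaf count were $O(\ffsize)$, that would give $O(\ffsize)$ internal nodes, and a bound on the \emph{number} of nodes does not by itself bound $\sum_v h_v$.

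The paper bypasses the charging detour and writes the work recurrence directly:
\[
W(N,M)=W(N/2,M_1)+W(N/2,M_2)+O(M),\qquad M_1+M_2=M+1,
\]
with $N=\ir-\il+1$ and $M=\jr-\jl+1$; this solves to $O(M\log N)=O(\ffsize\log n)$ at the root call. If you want to salvage your charging viewpoint, the clean observation is that at every recursion level the $j$-ranges of the \emph{internal} nodes form a chain with strictly increasing breakpoints, so any fixed $j$ lies in at most two internal nodes per level; summing over $O(\log n)$ levels gives $\sum_v h_v=O(\ffsize\log n)$ without invoking decision monotonicity at all.
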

\begin{proof}
  Flattening and removing duplicates can be performed by simple parallel primitives on trees and arrays in $O(\ffsize)$ work and $O(\log n)$ span. 
  Below we will focus on the more complicated \funcfont{FindIntervals} function. 
  The span of \funcfont{FindIntervals} comes from 1) $O(\log n)$ levels recursions and 2) $O(\log n)$ span to check all states in $[\jl,\jr]$ in parallel.
  For the work, each recursive call in \funcfont{FindIntervals} deals with a range of states $[\il,\ir]$ using best decision candidates in range $[\jl,\jr]$.
  The algorithm first finds $\im\in [\il,\ir]$ and its best decision $\jm \in [\jl,\jr]$. 
  This can be done by comparing all possible decisions in $[\jl,\jr]$, which is $O(\jr-\jl)$ work. 
  split the ranges into two subproblems and recurse. Let $N=|\ir-\il+1|$ and $M=|\jr-\jl+1|$ denoting
  the sizes of the two ranges. The work of \funcfont{FindIntervals} indicates the following recurrence:
  $$W(N,M)=W(N/2,M_1)+W(N/2,M_2)+O(M)$$
  where $M_1+M_2=O(M)$. This solves to $O(M\log n)$. On the root level, $M=\cordon-\now-1=\ffsize$. 
  This proves that the work is $O(\ffsize\log n)$ for frontier size $\ffsize$.
\end{proof}

\hide{
Our algorithm tackles this by doing the binary search for all ready states at the same time.
Suppose we want to use $D[j_l..j_r]$ to update $\best[i_l..i_r]$.
We first compute $j_0=\best[i_m]$.
Then by decision monotonicity, the best decisions of $i\in[i_l,i_m-1]$ is in $[j_l,j_0]$,
and the best decisions of $i\in[i_m+1,i_r]$ is in $[j_0,j_r]$.
After we find all the intervals $([l,r],j)$,
we merge the adjancent intervals if they have the same $j$.
}

\hide{
It can be shown that the work of $\funcfont{UpdateBestChoice}$ is proportional to the number of ready states.
The only problem is how to efficiently adding new intervals into $B$ while keeping their order.
As the function $\funcfont{FindBestChoiceIntervals}$ is called recursively in two directions,
we can build a binary tree of intervals based on the recursion structure,
and then flatten the tree into an array.
This method does not bring any race condition.
Another method is to maintain the min/max index for each ready state $j$.
When we add a new interval $([l,r],j)$, we call one $write_{min}$ and one $write_{max}$ for state $j$.
At last we can easily build $B$ based on the range of each $j$.

Let $L=\nxt-\now-1$,
then \cref{alg:update-new} has $O(L\log n)$ work and $O(\log^2n)$ span.
}

\subsection{Parallel Concave GLWS}\label{sec:post-office-concave}

To extend the algorithm to the concave case, we need a few modifications.
In \funcfont{FindCordon}, by the concavity,
if $j$ can update $i$, then $j$ must be able to update $j+1$.
Therefore, in \cref{line:1d:try-relax} in \cref{alg:dp-glws}, we check whether $j$ can update $j+1$. 
If so, we put a \sentinel{} at $j+1$.
The other modifications are in \funcfont{FindIntervals}. 
First, due to concavity,  
when we find $\jm$ as the best decision of $\im$ in \cref{line:1d:jm},
we need to swap the last two parameters in the first and second recursive calls, 
i.e., the best decision range for states $\im+1$ to $\ir$ must be before $\jm$, and those in $\il$ to $\im-1$ must be after $\jm$. 

A more involved modification in the concave GLWS is that after we get the array $B$ from \funcfont{FindIntervals},
we have to merge it with the old array $B$ before this round --- \funcfont{FindIntervals} only considers the best decisions among $[\now+1,\cordon-1]$,
but in the concave case, these states may also have better decisions using states before $\now$.
\iffullversion{Suppose we have generated the array $B_\new$ storing $([l,r],j)$ triples,
and we want to merge it with $B_\old$.
Both of $B_\old$ and $B_\new$ contain the best decisions of states $[\cordon..n]$.
The difference is that the $j$s in $B_\old$ are from $[0..\now]$,
while the $j$s in $B_\new$ are from $[\now+1..\cordon-1]$.
By the concave decision monotonicity,
the key is to find a cutting point $p$, where
the best decisions of $[\cordon..p]$ are from $B_\new$,
and the best decisions of $[p+1,n]$ are from $B_\old$.

\begin{algorithm}[ht]
\small
\caption{Find the cutting point $p$\label{alg:mergeB}}
\SetKwProg{myfunc}{Function}{}{}
\SetKwFor{parForEach}{ParallelForEach}{do}{endfor}
\SetKwFor{mystruct}{Struct}{}{}
\SetKwFor{pardo}{In Parallel:}{}{}
\parForEach{$([l_k,r_k],j_k)$ in $B_\new$}{
  $x_k\gets$ search the best decision of $l_k$ from $B_\old$.\\
}
Binary search the last $([l_k,r_k],j_k)$ in $B_\new$ such that
$E[j_k]+w(j_k,l_k)<E[x_k]+w(x_k,l_k)$. \\
Binary search the first $([l_t,r_t],j_t)$ in $B_\old$ such that
$E[j_t]+w(j_t,r_t)<E[j_k]+w(j_k,r_t)$.\\
Binary search the last $p$ in $[l_k,r_t]$ such that
$E[j_k]+w(j_k,p)<E[j_t]+w(j_t,p)$.\\
\Return $p$
\end{algorithm}

Here we show \cref{alg:mergeB} to find $p$ in $O(h\log n)$ work and $O(\log n)$ span,
where $h=|B_\new|$ is the frontier size.
For all $l_k$ in $B_\new$, we pre-process its best decision stored in $B_\old$.
This step requires $O(h\log n)$ work and $O(\log n)$ span.
Then we search in $B_\new$ to find the interval that $p$ locates in.
After this step, there is only one interval $([l_k,r_k],j_k)$ in $B_\new$ is interesting.
Then we can binary search in $B_\old$ to find the exact $p$.
Note that this method can be easily modified to merge $B_\old$ and $B_\new$ even if the cost function is convex.
}\ifconference{For page limitation, we elaborate on this part in full paper~\cite{fullversion}.
With careful design, this part can also be finished in $O(\ffsize\log n)$
work and $O(\log n)$ span, where $\ffsize$ is the frontier size.
}

\hide{
However, this is not a perfect parallelization.
The number of rounds may be larger than the depth of the perfect DAG.
}

\subsection{Theoretical Analysis}

In this section, we show the theoretical analysis for our parallel GLWS algorithm. We first summarize our main results as follows. 

\begin{theorem}\label{thm:1d:convex}
  \hide{
  Given an input sequence of size $n$, the \ouralgo{} for the \emph{convex} GLWS has $O(n\log n)$ work and $O(k\log^2n)$ span,
  where $k$ is the effective depth of the perfect DAG 
  $G^*_{\gammalws}$ of algorithm $\gammalws$ introduced in \cref{sec:post-office:prelim}. 
  It is a \emph{perfect parallelization} of $\gammalws$. 
  }
  \hide{
  Given an input sequence of size $n$, and the sequential GLWS algorithm 
  $\gammalws$ introduced in \cref{sec:post-office:prelim},   
  the \ouralgo{} for the \emph{convex} GLWS has $O(n\log n)$ work and $O(k\log^2n)$ span,
  where $k$ is the effective depth of the perfect DAG 
  $G^*_{\gammalws}$. 
  It is a \emph{perfect parallelization} of $\gammalws$. 
  }
  Given an input sequence of size $n$, and the sequential GLWS algorithm 
  $\gammalws$ introduced in \cref{sec:post-office:prelim},   
  let $k=\ed(G^*_{\gammalws})$ be the effective depth of the 
  $\gammalws$-perfect DAG. 
  Then the \ouralgo{} for the \emph{convex} GLWS has $O(n\log n)$ work and $O(k\log^2n)$ span. 
  It is a \emph{perfect parallelization} of $\gammalws$.  
\end{theorem}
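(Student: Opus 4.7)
The plan is to establish three components: work $O(n\log n)$, span $O(k\log^2 n)$, and the perfect-parallelization property.

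For the work, I would combine Lemmas~\ref{lemma:findcordon} and~\ref{lemma:updatebest}, which together give $O(\ffsize_r\log n)$ per round for a round of frontier size $\ffsize_r=\cordon-\now-1$. Because every state is finalized in exactly one round, $\sum_r \ffsize_r \le n$, so summing yields total work $O(n\log n)$, matching $\gammalws$.

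For the span, since each round contributes $O(\log^2 n)$ span by the same two lemmas, it suffices to prove that the algorithm terminates within $k$ rounds. I would fix the best-decision chain ending at state $n$, namely $0=v_0,v_1,\ldots,v_t=n$ with $v_{r-1}=\best[v_r]$ and $t\le k$, and prove by induction that $\now \ge v_r$ at the end of round $r$. The base case is immediate. For the inductive step, assume $\now \ge v_{r-1}$ at the start of round $r$. By convex decision monotonicity, $\best$ is non-decreasing on the tentative region, so for every $i\in[\now+1,v_r]$ we have $\best[i]\le \best[v_r]=v_{r-1}\le \now$. Each such $i$ already has its true best decision finalized; since the tentative values of $E[\cdot]$ are never smaller than their true values in a minimization DP, no tentative state can strictly improve $D[i]$, so no sentinel is placed on $i$ and the cordon advances past $v_r$. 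After $t\le k$ rounds we have $\now \ge n$, giving span $O(k\log^2 n)$.

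For perfect parallelization, I would verify each requirement of the definition. The effective edges of $\gammalws$ are the $n$ edges $\edge{\best[i]}{i}$, and each is processed in the parallel algorithm when $D[j]$ is relaxed via the stored best decision in \funcfont{FindCordon}. The additional transitions explored (the binary searches on $B$ and the recursive scans inside \funcfont{FindIntervals}) contribute $O(n\log n)$ in total, matching the $O(n\log n)$ transitions that $\gammalws$ itself explores during its sequential binary searches, so $|\eff{E}_\Lambda|=O(|\eff{E}_{\gammalws}|)$. The work matches at $O(n\log n)$, and span $O(k\log^2 n)=\tilde{O}(\ed(G^*_{\gammalws}))$ meets the perfect-parallelization span requirement.

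The main obstacle is an invariant used implicitly throughout the induction: after every call to \funcfont{UpdateBest}, the array $B$ must still record the true best decisions of the tentative states among all currently finalized states, and in particular $\best$ restricted to the tentative region must remain non-decreasing. I would isolate this as a separate lemma, proving it from (i) the same invariant holding at the start of the round, (ii) the convex DM-driven recursive split in \funcfont{FindIntervals}, where $\jm$ is chosen as the argmin over $[\jl,\jr]$ and the left/right subproblems inherit correctly shrunken candidate ranges, and (iii) the merge step that unifies adjacent intervals with equal best decision. With this invariant in hand, the binary search in \funcfont{FindCordon} returns the correct first-relaxable index $s_j$, and the inductive step on $\now \ge v_r$ above goes through cleanly.
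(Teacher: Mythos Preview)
Your approach matches the paper's: the work bound is obtained identically by summing Lemmas~\ref{lemma:findcordon} and~\ref{lemma:updatebest} over rounds, and your chain-to-$n$ induction is the paper's Lemma~\ref{lemma:1d:perfectdepth} specialized to the maximal chain (under convex DM the perfect depth is non-decreasing in the state index, so the chain ending at $n$ has length exactly $k$). Your identification of the $B$-invariant as a separate lemma is also apt; the paper uses it implicitly.

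One step needs correction. The assertion ``tentative values of $E[\cdot]$ are never smaller than their true values in a minimization DP'' is not valid for GLWS in general: $E[j]=f(D[j],j)$ with $f$ arbitrary, so even though the tentative $D[j]$ (a minimum over fewer candidates) is never smaller than the true $D[j]$, the inequality need not transfer to $E[j]$ without monotonicity of $f$. The argument you actually need, and the one the paper uses, is both stronger and simpler: for every $j\in[\now+1,v_r]$, convex DM gives $\best[j]\le\best[v_r]=v_{r-1}\le\now$, so the value computed from $B$ on \cref{line:1d:computeD} is the \emph{exact} true $D[j]$, and hence $E[j]$ is exact as well. Then for any target $i\in[\now+1,v_r]$ the stored $\best[i]$ in $B$ is the true optimum, so no such $j$ can strictly relax it, and states $j>v_r$ can only relax indices $>j>v_r$. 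This yields the same conclusion (no sentinel lands in $[\now+1,v_r]$) without any monotonicity assumption on $f$.
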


More intuitively, $k$ in \cref{thm:1d:convex} is also the number of best decisions to make in the final solution: 
for instance, for the post-office problem, it is the number of post offices in the optimal solution. 

\begin{theorem}\label{thm:1d:concave}
\hide{
  The \ouralgo{} for the \emph{concave} GLWS has $O(n\log n)$ work and $O(k\log^2n)$ span,
  where $n$ is the input size and $k$ is the effective depth of the optimized DAG $G_{\gammalws}$
  algorithm introduced in \cref{sec:post-office:prelim}. 
  It is an \emph{optimal parallelization} of $\gammalws$.   }
  \hide{
  Given an input sequence of size $n$, and the sequential GLWS algorithm 
  $\gammalws$ introduced in \cref{sec:post-office:prelim},
  the \ouralgo{} for the \emph{concave} GLWS has $O(n\log n)$ work and $O(k\log^2n)$ span,
  where $k$ is the effective depth of the optimized DAG $G_{\gammalws}$
  algorithm introduced in \cref{sec:post-office:prelim}. 
  It is an \emph{optimal parallelization} of $\gammalws$.    }
  
  Given an input sequence of size $n$, and the sequential GLWS algorithm 
  $\gammalws$ introduced in \cref{sec:post-office:prelim},   
  let $k=\ed(G_{\gammalws})$ be the effective depth of 
  the $\gammalws$-optimized DAG. 
  Then the \ouralgo{} for the \emph{concave} GLWS has $O(n\log n)$ work and $O(k\log^2n)$ span. 
  It is an \emph{optimal parallelization} of $\gammalws$.  
\end{theorem}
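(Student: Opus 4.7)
The plan is to mirror the convex analysis of \cref{thm:1d:convex}, adapting each step to account for the two concave-specific changes: the sentinel-placement rule (a state $j$ places a sentinel at $j+1$ rather than at the first successfully relaxable position) and the extra step of merging the freshly built best-decision array $B_\new$ into the old array $B_\old$. I will first establish per-round cost bounds, then bound the number of rounds by $k = \ed(G_{\gammalws})$, and finally verify the three requirements of an optimal parallelization.

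First I would re-examine \funcfont{FindCordon} under concave DM. Each prefix-doubling substep still inspects at most a constant factor more states than end up being finalized, and each state only performs a binary search on $B$ in $O(\log n)$ time, so the work is $O(\ffsize\log n)$ and the span is $O(\log^2 n)$ for a frontier of size $\ffsize$, exactly as in \cref{lemma:findcordon}. The concave variant of \funcfont{FindIntervals} only swaps the $j$-ranges in the two recursive calls, leaving the recurrence $W(N,M)=W(N/2,M_1)+W(N/2,M_2)+O(M)$ intact, so \cref{lemma:updatebest} continues to apply. For the extra merge I would invoke the dual binary-search procedure sketched in the full version that locates the cutting point between $B_\old$ and $B_\new$ in $O(\ffsize\log n)$ work and $O(\log n)$ span. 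Adding these three contributions gives $O(\ffsize\log n)$ work and $O(\log^2 n)$ span per round.

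Summing over rounds, every state enters the frontier of exactly one round, so $\sum_r \ffsize_r = n$ and the total work is $O(n\log n)$. The span bound $O(k\log^2 n)$ then follows once I show that there are at most $k = \ed(G_{\gammalws})$ rounds. I expect this to be the main obstacle. The natural route is an inductive invariant stating that after $r$ rounds, every state whose effective depth in $G_{\gammalws}$ is at most $r$ has been finalized. For the induction step on a state $i$ of effective depth $r+1$, I would argue that the best decision chosen by $\gammalws$ for $i$ has effective depth at most $r$ and is therefore finalized by the start of round $r+1$; I then must rule out the possibility that some unfinalized state placed a sentinel strictly before $i$, which amounts to showing that every state still unfinalized has effective depth greater than $r$ and thus cannot be the bottleneck for $i$. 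The delicacy specific to the concave case is that an earlier state $i' < i$ may still be tentative because of a dependence on a later state routed through normal edges, which is exactly why the bound is stated in terms of the optimized DAG $G_{\gammalws}$ rather than the perfect DAG $G^*_{\gammalws}$ used in \cref{thm:1d:convex}.

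Finally I would verify the three conditions of optimal parallelization. Every effective edge $(\best[i],i)$ of $\gammalws$ is processed by our algorithm when $D[i]$ is computed inside \funcfont{FindCordon}, giving $\eff{E}_{\gammalws} \subseteq \eff{E}_{\Lambda}$, while prefix doubling over-inspects each state by only a constant factor so $|\eff{E}_{\Lambda}| = O(|\eff{E}_{\gammalws}|)$. The total work matches $\gammalws$ at $O(n\log n)$, and the span $O(k\log^2 n)$ is within a polylogarithmic factor of $\ed(G_{\gammalws})$, completing all three requirements.
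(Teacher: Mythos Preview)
Your proposal is correct and follows essentially the same route as the paper: per-round cost bounds via the analogues of \cref{lemma:findcordon} and \cref{lemma:updatebest} (plus the $O(\ffsize\log n)$ merge of $B_\new$ with $B_\old$), then the round-count bound of \cref{lemma:1d:effectivedepth}. The paper proves the stronger biconditional ``$s$ is in the round-$r$ frontier iff $\ed(s)=r$'', whereas you only need the ``if'' direction, but the argument you sketch---a sentinel at $y\le i$ forces $y^*$ to be tentative, hence $\ed(y^*)\ge r+1$, hence $\ed(y)\ge r+2$, and the normal edge $y\to i$ then gives $\ed(i)\ge r+2$---is exactly the paper's contradiction. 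Your explicit verification of the three optimal-parallelization conditions and your remark on why the concave case stops at optimal rather than perfect parallelization are both things the paper leaves implicit.
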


We first prove that both algorithms are nearly work-efficient and have $O(n\log n)$ work.

\begin{lemma}
  The \ouralgo{} for GLWS has $O(n\log n)$ work for both convex and concave case.
\end{lemma}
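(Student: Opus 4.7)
The plan is to bound the work of a single round in terms of its frontier size $\ffsize=\cordon-\now-1$ and then sum across rounds using a simple partition argument. By \cref{lemma:findcordon}, each invocation of \funcfont{FindCordon} in a round with frontier size $\ffsize$ costs $O(\ffsize\log n)$ work, and by \cref{lemma:updatebest}, the matching call to \funcfont{UpdateBest} also costs $O(\ffsize\log n)$. For the concave case I additionally need the merging of the freshly-built $B_\new$ with $B_\old$ described in \cref{sec:post-office-concave}; as claimed there, that merge can likewise be performed in $O(\ffsize\log n)$ work, so the per-round bound is the same in both the convex and concave settings.

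The key structural observation is that frontier sizes across rounds partition the states: after a round starting with $\now$ finalized states and ending with cordon $\cordon$, exactly $\ffsize=\cordon-\now-1$ new states (those in $[\now+1,\cordon-1]$) become finalized, and the main loop terminates only when $\now=n$. Hence, letting $\ffsize_1,\ffsize_2,\dots,\ffsize_R$ denote the frontier sizes of the successive rounds, we have $\sum_{r=1}^R \ffsize_r = n$.

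Combining the two facts gives a total work of
\[
\sum_{r=1}^R O(\ffsize_r\log n) \;=\; O(\log n)\sum_{r=1}^R \ffsize_r \;=\; O(n\log n),
\]
which is the desired bound. The only subtlety I foresee is ensuring that no hidden overhead is incurred outside the three subroutines above (e.g., flattening the tree $T$ into $B$, deduplicating adjacent intervals, and bookkeeping $\now$ and $\cordon$); each of these is a standard parallel primitive on an array of size $O(\ffsize)$, so it is absorbed into the $O(\ffsize\log n)$ per-round charge without affecting the asymptotics.
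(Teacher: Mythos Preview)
Your proposal is correct and follows essentially the same argument as the paper: bound the per-round work by $O(\ffsize\log n)$ via \cref{lemma:findcordon,lemma:updatebest} (plus the concave merge in \cref{sec:post-office-concave}), then use that the frontier sizes sum to $n$ across rounds. Your version is more explicit about the partition argument and the absorption of bookkeeping costs, but the structure is identical.
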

\begin{proof}
  Combining \cref{lemma:findcordon,lemma:updatebest} (and the discussion in \cref{sec:post-office-concave}), the work for each round is $O(\ffsize\log n)$, 
  where $\ffsize=\cordon-\now-1$ is the frontier size. 
  Since the frontier sizes $\ffsize$ across all rounds add up to $n$, 
  the entire algorithm has $O(n\log n)$ work. 
\end{proof}

We then show that the number of rounds in both convex and concave cases is the effective depth of $G_{\gammalws}$.
Recall that the DAG $G_{\gammalws}$ includes normal edges between all states $j$ and $i<j$,
and effective edges between a state $j$ and its best decision. 
The effective depth $\ed(G_{\gammalws})$ is the largest number of effective edges in any path. 

\begin{lemma}\label{lemma:1d:effectivedepth}
  The \ouralgo{} for GLWS finishes in $k$ rounds, where $k$ is $\ed(G_{\gammalws})$. 
\end{lemma}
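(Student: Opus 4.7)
The plan is to prove the stronger claim that for every state $i$, the round $r(i)$ in which $i$ is finalized equals the maximum number of effective edges in any path of $G_{\gammalws}$ ending at $i$; taking the maximum over $i$ then yields the lemma. A key structural observation I will use repeatedly is that the frontier of each round is a consecutive range $[\now_{t-1}+1,\now_t]$, so $r(\cdot)$ is non-decreasing in the state index.

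For the direction $\ed(G_{\gammalws}) \ge r(i)$, I construct by induction on $r(i)$ an effective path of length $r(i)$ ending at $i$. The base $r(i)=1$ forces $\best[i]=0$, so the single effective edge $0\to i$ suffices. For $r(i)=t>1$, I use $r(\best[i])\le t-1$ and split into two subcases. If $r(\best[i])=t-1$, I take the inductive path to $\best[i]$ and append the effective edge $\best[i]\to i$. Otherwise $r(\best[i])\le t-2$, so $i$ must have been blocked in round $t-1$ by some sentinel with $\cordon_{t-1}\le i$. In the convex case monotonicity of $\best$ rules out this subcase. In the concave case the sentinel-placement rule puts the blocking sentinel at $\now_{t-1}+1$, so $\cordon_{t-1}=\now_{t-1}+1$; the fact that $\now_{t-1}$ successfully relaxes $\now_{t-1}+1$, combined with the concave Monge inequality, forces $\best[\now_{t-1}+1]\in[\now_{t-2}+1,\now_{t-1}]$, i.e., into round $(t{-}1)$'s frontier. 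Applying induction at $\best[\now_{t-1}+1]$ produces a path of $t-1$ effective edges ending there; appending the effective edge $\best[\now_{t-1}+1]\to \now_{t-1}+1$ and then normal edges to $i$ yields a length-$t$ effective path.

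For the reverse direction, I induct on the number of effective edges. If the effective-edge endpoints along a path are $u_1<u_2<\cdots<u_e$, the requirement that they appear in order along the DAG forces $\best[u_j]\ge u_{j-1}$. Monotonicity of $r$ then gives $r(\best[u_e])\ge r(u_{e-1})$, and combining with the induction hypothesis $r(u_{e-1})\ge e-1$ and the correctness property $r(u_e)\ge r(\best[u_e])+1$ yields $r(u_e)\ge e$.

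The main obstacle is the concave subcase of the forward direction: $r(i)$ can exceed $r(\best[i])+1$ due to cordon-blocking, so simply recursing on $\best[i]$ does not work and one cannot end the path with the effective edge $\best[i]\to i$ (because $\best[i]$ may lie far below the intermediate state we have reached). The resolution uses the concave Monge condition to pinpoint the true best decision of the sentinel-inducing state inside round $(t{-}1)$'s frontier, which furnishes precisely the extra effective edge needed to splice onto the inductive path.
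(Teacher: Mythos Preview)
Your proof is correct, but it takes a noticeably different route from the paper's.

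The paper proves the biconditional ``$s$ is in the frontier of round $r$ iff $\ed(s)=r$'' by a single induction on $r$, handling both directions by contradiction and \emph{uniformly} for the convex and concave cases. For the ``if'' direction it assumes a blocking sentinel at some $y\in(\now,s]$, argues that the true best decision $y^*$ must then be tentative (so $\ed(y^*)\ge r$ by the induction hypothesis), and pushes one effective edge plus a normal edge $y\to s$ to obtain $\ed(s)\ge r+1$. For the ``only if'' direction it locates an effective edge on the witnessing path whose tail already has its true DP value in round $r$ and shows this tail would place a sentinel blocking $s$.

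Your argument instead separates the two inequalities. For $\ed(i)\ge r(i)$ you explicitly construct a path with $r(i)$ effective edges; this forces the convex/concave case-split, since in the concave subcase $r(\best[i])\le t-2$ you cannot end the path with the edge $\best[i]\to i$ and must detour through the cordon state $\now_{t-1}+1$. For $r(i)\ge\ed(i)$ you induct directly on the number $e$ of effective edges via the chain $u_1<\cdots<u_e$ and the two facts $\best[u_j]\ge u_{j-1}$ and $r(u_e)\ge r(\best[u_e])+1$; this direction is clean and arguably tighter than the paper's corresponding contradiction.

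What each approach buys: the paper's sentinel-chasing contradiction avoids the convex/concave split entirely and is shorter overall; your constructive direction makes the witnessing path explicit, at the cost of the separate concave analysis. One minor remark: your invocation of the concave Monge inequality in that subcase is not strictly needed once one cites the algorithm's correctness (which already guarantees that the computed $D[\now_{t-1}]$ is the true value, so that $\now_{t-1}$ beating the best of $[0,\now_{t-2}]$ immediately gives $\best[\now_{t-1}+1]>\now_{t-2}$), but it is not incorrect either.
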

\begin{proof}
  Define the effective depth $\ed(s)$ of a state $s$
  as the largest number of effective edges of a path ending at $s$. 
  We will inductively prove that a state $s$ is in the frontier of round $r$ iff. $s$ has effective depth $r$.
  The base case (boundary cases) holds trivially.

  Assume the conclusion is true for $r-1$. We first prove the ``if'' direction, i.e., if a state $s$ has effective depth $r$,
  it must be in the frontier of round $r$. This is equivalent to show that there is no sentinel on all states from $\now$ to $s$.
  Assume to the contrary that there is a state $y\in (\now,s]$ with a sentinel, which is put by state $x\in (\now, y]$.
  This means that $x$ is a better decision for $y$ than all states before $\now$, 
  indicating that $y$'s best decision $y^*\ge\now$.
  Based on the induction hypothesis, the effective depth of $y^*$ must be larger than $r-1$. 
  Therefore, $\ed(y)= \ed(y^*)+1 >r-1+1=r$, which means that $\ed(y)$ is at least $r+1$.
  Based on the recurrence, there is a normal edge from $y$ to $s$, so $\ed(s)\ge r+1$, leading to a contradiction.
  
  We then prove the ``only if'' condition, i.e., if a state $s$ is in the frontier of round $r$, it must have effective depth $r$. 
  The induction hypothesis suggests that all states with effective depth smaller than $r$ have been finalized in previous rounds, so we only
  need to show that $\ed(s)$ cannot be larger than $r$. 
  Assume to the contrary that $\ed(s) \ge r+1$. 
  Let the path to $s$ with effective depth $\ed(s)$ be $x_1, x_2, \dots, s$. 
  Since the total number of effective edges on this path is at least $r+1$, there must exist
  an effective edge $\edge{x_i}{x_{i+1}}$ on the path such that $\ed(x_i)=r$ and $\ed(x_{i+1})=r+1$.
  However, based on the induction hypothesis, $x_i$'s best decision must have been finalized. 
  During \cref{line:1d:computeD}, $x$ must get its true DP value, and will find itself able to update $x_{i+1}$.
  Therefore, there will be a sentinel on $x_{i+1}\le s$, and $s$ cannot be identified in the frontier of round $r$.
\end{proof}

We will then show that the number of rounds of the convex case is also the effective depth of the 
\emp{$\gammalws$-perfect DAG} $G^*_{\gammalws}$. 
This is stronger than the $\gammalws$-optimized DAG as shown above. 
Recall that the perfect DAG $G^*_{\gammalws}$ contains all best decision edges in $G_{\gammalws}$.

\begin{lemma}
  \label{lemma:1d:perfectdepth}
  The \ouralgo{} for convex GLWS runs in $k^*$ rounds, where $k^*$ is $\ed(G^*_{\gammalws})$. 
\end{lemma}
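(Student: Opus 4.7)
The plan is to combine \cref{lemma:1d:effectivedepth}, which already shows the algorithm terminates in $\ed(G_{\gammalws})$ rounds, with a structural identity specific to the convex setting: $\ed(G_{\gammalws}) = \ed(G^*_{\gammalws})$. The inequality $\ed(G_{\gammalws}) \ge \ed(G^*_{\gammalws})$ is immediate since $G^*_{\gammalws}$ is a subgraph of $G_{\gammalws}$, so the real task is to upper-bound $\ed(G_{\gammalws})$ by $\ed(G^*_{\gammalws})$.

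For each state $s$, let $\ed(s)$ and $\ed^*(s)$ denote the largest number of effective edges on any path ending at $s$ in $G_{\gammalws}$ and $G^*_{\gammalws}$ respectively. Since $G^*_{\gammalws}$ contains only the best-decision edges, we have $\ed^*(s) = 1 + \ed^*(\best[s])$ with $\ed^*(0)=0$. The key intermediate step I would prove is the following monotonicity claim: \emph{under convex DM, $\ed^*(s)$ is non-decreasing in $s$.} This follows by a straightforward induction using the convex DM property that $\best[s] \ge \best[s-1]$: assuming $\ed^*$ is monotone up to $s-1$, then $\ed^*(s) = 1 + \ed^*(\best[s]) \ge 1 + \ed^*(\best[s-1]) = \ed^*(s-1)$.

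Given the monotonicity claim, I would finish by showing $\ed(s) = \ed^*(s)$ via induction on $s$. Decomposing a maximum-effective path into its last edge: it is either the effective edge $\best[s] \to s$, contributing $1 + \ed(\best[s])$, or a normal edge $j \to s$ for some $j<s$, contributing $\ed(j)$. Using the induction hypothesis $\ed(j) = \ed^*(j)$ for all $j<s$ together with the monotonicity claim, we get
\[
  \ed(s) = \max\left(\max_{j<s} \ed^*(j),\; 1 + \ed^*(\best[s])\right) = \max(\ed^*(s-1),\ \ed^*(s)) = \ed^*(s).
\]
Taking the maximum over all $s$ yields $\ed(G_{\gammalws}) = \ed(G^*_{\gammalws}) = k^*$, and combining with \cref{lemma:1d:effectivedepth} finishes the proof.

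The main obstacle is the monotonicity of $\ed^*$, which is precisely where convexity is used: $\best[\cdot]$ is non-decreasing only under convex DM, so in the concave case (\cref{thm:1d:concave}) this argument breaks down, explaining why there we only obtain the weaker bound $\ed(G_{\gammalws})$ rather than $\ed(G^*_{\gammalws})$.
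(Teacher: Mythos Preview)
Your proof is correct and takes a genuinely different route from the paper. The paper argues directly on the algorithm: it shows by induction on rounds that a state $s$ with perfect depth $r$ is placed in the frontier of round $r$, using convex DM to argue that every intermediate state $x$ between $\now$ and $s$ has $\best[x]\le\best[s]\le\now$, so no sentinel can block $s$. Your approach instead proves the purely combinatorial identity $\ed(G_{\gammalws})=\ed(G^*_{\gammalws})$ via the monotonicity of $\ed^*$, then invokes \cref{lemma:1d:effectivedepth} as a black box. Both arguments ultimately exploit the same fact (monotonicity of $\best[\cdot]$ under convexity), but your decomposition is more modular: it cleanly separates the algorithm analysis (already done in \cref{lemma:1d:effectivedepth}) from the graph-theoretic property special to the convex case, and makes it transparent why the concave case fails---$\best[\cdot]$ is no longer monotone, so $\ed^*$ need not be, and the identity $\ed=\ed^*$ breaks.
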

\begin{proof}
  Define the perfect depth $\ds(s)$ of a state $s$
  as the largest number of effective edges of any path ending at $s$ in $G^*_{\gammalws}$.   
  Similarly, we will show by induction that in round $r$, all states with perfect depth $r$ will be processed. The base case holds trivially. 
  Assume the conclusion holds for $r-1$. 
  In round $r$, we will show that a state $s$ with perfect depth $r$ must be put in the frontier. 
  Let $s^*$ be the best decision of $s$, then $\ds(s^*)=r-1$ and therefore $s^*<\now$. According to DM, any state $x$ between $\now$ and $s$ must have its best decisions
  $x^*\le s^*<\now$, indicating that $\ds(x^*)\le r-1$. Therefore, $x$ must find its true best decision in $B$, and cannot be updated by any other tentative states in \cref{line:1d:try-relax}. This means that there will be no sentinel between $\now$ and $s$, so $s$ must be identified ready in round $r$. 
  Therefore, a state with perfect depth $r$ must be finalized in round $r$, leading to the stated theorem. 
\end{proof}

Combining \cref{lemma:findcordon,lemma:updatebest,lemma:1d:effectivedepth,lemma:1d:perfectdepth}
proves the span bounds in \cref{thm:1d:convex,thm:1d:concave}.

\hide{
\myparagraph{Discussions with totally monotonicity.}
A more general condition to enable DM is \emph{convex/concave totally monotonicity}~\cite{aggarwal1987geometric}.
Our algorithm works as long as the transition function is convex/concave totally monotone. 
For page limit, we formally explain this in \iffullversion{\cref{app:dm}}\ifconference{full version of this paper \cite{fullversion}}.
}

\hide{
\begin{theorem}
  The \ouralgo{} for the convex GLWS has $O(n\log n)$ work and $O(k\log^2n)$ span,
  where $n$ is the problem size and $k$ is the longest dependency path. It is a perfect parallelization of the $\Gamma_{1D}$
  algorithm introduced in \cref{sec:1d:prelim} for the convex case. 
\end{theorem}
}

\section{Other Parallel DP Algorithms}\label{sec:other}

We now show that our algorithmic framework can be used to parallelize a wide variety of classic sequential DP algorithms.
\iffullversion{}
\ifconference{Due to the space limit, we will focus on two of them and leave others in \iffullversion{the appendices}\ifconference{the full version of this paper \cite{fullversion}}, with a short summary in \cref{sec:other-other}.

}
In particular, for the optimal alphabetic tree (OAT) problem (\cref{sec:oat}), we partially answered a long-standing open problem by \citet{larmore1993parallel} for reasonable input instances (for instance, positive integer weights in range $n^{\polylog(n)}$).
For the GAP problem (\cref{sec:gap}), we showed the first nearly work-efficient algorithm with non-trivial parallelism.
More interestingly, this algorithm combines all techniques in the algorithms for convex GLWS and sparse LCS. 

\subsection{Parallel Optimal Alphabetic Trees (OAT)}\label{sec:oat}

The optimal alphabetic tree (OAT) problem is a classic problem
and has been widely studied both sequentially \cite{karpinski1997correctness,itai1976optimal,van1976construction,nagaraj1997optimal,hu1971optimal,garsia1977new,davis1998hu,larmore1998optimal}
and in parallel
\cite{larmore1996parallel,rytter1988efficient,larmore1993parallel,larmore1995constructing}.
Given a sequence of non-negative weights $a_{1..n}$, the OAT is a binary search tree with $n$ leaves and has the minimum cost,
where the cost of a tree $T$ is defined as:
\begin{equation}
  cost(T)=\sum_{i=1}^na_id_i
\end{equation}
Here $d_i$ is the depth of the $i$-th leaf (from the left) of $T$ (the root has depth 0).
One can view the weight $a_i$ as the frequency of accessing leaf $i$, and the depth of a leaf is the cost of accessing it.
Then the cost of $T$ is the total expected cost of accessing all leaves in $T$. 
The OAT problem is closely related to other important problems such as the optimal binary search tree (OBST) \cite{knuth1971optimum}
and Huffman tree~\cite{huffman1952method}. 

Sequentially, \citet{hu1971optimal} showed an OAT algorithm with $O(n\log n)$ work. 
Later, \citet{garsia1977new} simplified this algorithm. 
In parallel, \citet{larmore1993parallel} showed an algorithm based on Garsia-Wachs.
We will apply our techniques to this algorithm to improve the span bounds. 
Due to page limit, we provide the details of ~\cite{larmore1993parallel} in \iffullversion{\cref{sec:app-oat}}\ifconference{the full version paper \cite{fullversion}}, and review the high-level idea here.
The algorithm computes an $l$-tree~\cite{garsia1977new},
which has the same depth with and will be finally converted to the OAT in $O(n)$ work and polylogarithmic span. 
The key insight of ~\cite{larmore1993parallel} is to start with a sequence of $n$ leaf nodes with the input weights, 
and find several disjoint intervals in the sequence to process in parallel. 
This partition is done by various operations on the Cartesian tree of the input sequence, which requires $O(n\log n)$ work and $O(\log^2 n)$ span.
Larmore et al.\ showed that processing each interval can be reduced to a convex LWS. 
The solution of the LWS will connect items in this interval into a forest, 
which becomes a subgraph in the final $l$-tree. 
Finally, for each tree in the forest, we insert its root back to the sequence and repeat the process. 
This reinsertion step takes $O(n\log n)$ work and $O(\log n)$ span by basic parallel primitives such as sorting and range-minimum queries. 
The further rounds will connect the forest to the final $l$-tree. 
Larmore et al.\ also showed that the number of such intervals shrinks to half in each iteration, so the algorithm will finish in $O(\log n)$ rounds. 
Here all other steps in addition to solving convex LWS take $O(n\log^2 n)$ work and $O(\log^3 n)$ span. 
The remaining cost of the algorithm is to solve convex LWS in each round, multiplied by the number of rounds, which is $O(\log n)$.

Larmore et al.\ originally used the parallel convex LWS algorithm from~\cite{apostolico1990efficient,aloknotes}, 
which has $O(m^2\log m)$ work and $O(\log^2m)$ span when taking an input interval of length $m$. 
Later, \citet{larmore1995constructing} improved the parallel convex LWS algorithm to $O(m^{1.5}\log m)$ work and $O(\sqrt{m}\log m)$ span,
yielding $O(n^{1.5}\log^2n)$ work and $O(\sqrt{n}\log^2n)$ span for the OAT algorithm---the work overhead is still polynomial.
\citet{larmore1993parallel} left the open problem on whether there exists an OAT algorithm with $\tilde{O}(n)$ work and $\polylog(n)$ span, 
which remains unsolved for three decades.

Note that the convex LWS problem is a special case of the convex GLWS problem discussed in \cref{sec:post-office:prelim} with $E[i]=D[i]$.
Hence, \cref{alg:dp-glws} directly gives $O(m\log m)$ work and $O(k\log^2m)$ span for convex LWS problem, and here $k$ is the longest dependency path of best decisions.
In Larmore's algorithm, the forest for each interval is constructed iteratively by the DP algorithm on LWS: 
if iteration $i$ finds the best decision at iteration $j<i$, 
then iteration $i$ creates one more level on top of the forest at iteration $j$.
This means that the $k$ is equivalent to the depth of the forest, which is upper bounded by the final OAT height $h$. 
We present more details in \iffullversion{\cref{sec:app-oat}}\ifconference{\cite{fullversion}}.
Hence, we can parameterize our final bounds using $h$ as:

\begin{theorem}\label{theorem:oat}
  The optimal alphabetic tree (OAT) can be constructed in $O(n\log^2n)$ work and $O(h\log^3n)$ span,
  where $n$ is the size of input weight sequence and $h$ is the height of the OAT.
\end{theorem}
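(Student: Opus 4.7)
The plan is to combine the overall structure of Larmore's OAT algorithm from \cite{larmore1993parallel} with our parallel convex GLWS result (\cref{thm:1d:convex}) and to carefully account for work and span per round, with the main effort concentrated on bounding the effective depth of each LWS instance by $h$.

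First, I would formalize the round structure already sketched in the excerpt: there are $O(\log n)$ outer rounds, and each round performs (i) a Cartesian-tree based partitioning of the current sequence into disjoint intervals, (ii) a parallel solve of an independent convex LWS instance on each interval, and (iii) a reinsertion of the forest roots back into the sequence. Steps (i) and (iii) are already known to cost $O(n\log n)$ work and $O(\log^2 n)$ span per round, contributing $O(n\log^2 n)$ work and $O(\log^3 n)$ span overall across all rounds. So the only remaining task is to analyze the LWS calls.

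Second, for the LWS work: let the intervals in a given round have lengths $m_1,\dots,m_t$ with $\sum_i m_i \le n$. Applying \cref{thm:1d:convex} with $E[i]=D[i]$ (so $\gammalws$ applies), each instance takes $O(m_i\log m_i)$ work, summing to $O(n\log n)$ per round and $O(n\log^2 n)$ across $O(\log n)$ rounds. For the LWS span, each instance runs in $O(k_i\log^2 m_i)$ span where $k_i=\ed(G^*_{\gammalws})$ is the effective depth of the perfect DAG for that LWS. Parallel instances within a round run concurrently, so the per-round LWS span is $O(\max_i k_i\cdot \log^2 n)$.

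Third, and this is the main obstacle, I need to show $k_i\le h$ for every LWS instance that arises. The argument I have in mind uses the structural correspondence between LWS best-decision chains and the forest being constructed, as mentioned in the excerpt: in Larmore's reduction, when state $i$ of an LWS instance takes its best decision at state $j<i$, the algorithm creates a new internal node whose children include the root of the subforest built for $j$, increasing the local subforest depth by one. By induction along the chain $i_1<i_2<\cdots<i_{k_i}$ of best-decision edges in $G^*_{\gammalws}$, the depth of the subforest rooted at the node created for $i_{k_i}$ is at least $k_i$. Since each such forest is stitched into the final $l$-tree without ever decreasing its height, and the $l$-tree has the same height as the final OAT, we conclude $k_i\le h$. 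The delicate part is making this induction airtight in the presence of Larmore's multi-round reinsertion: a node created in an earlier round may become an internal child in a later round, so the height accumulates across rounds. I would handle this by defining a potential equal to the current forest-depth of each live element and verifying that each LWS best-decision edge strictly increases the potential of its output node, while reinsertion preserves potentials.

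Finally, combining: each round has span $O(\log^2 n)+O(h\log^2 n)=O(h\log^2 n)$, and over $O(\log n)$ rounds the total span is $O(h\log^3 n)$; total work is $O(n\log^2 n)$ as computed. This yields the bounds claimed in \cref{theorem:oat}.
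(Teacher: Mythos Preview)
Your proposal is correct and follows essentially the same route as the paper: plug \cref{thm:1d:convex} into the $O(\log n)$-round Larmore framework, sum the LWS work to $O(n\log^2 n)$, and bound each LWS instance's effective depth $k_i$ by the OAT height $h$ via the forest-depth correspondence. One small simplification: your potential-based argument for multi-round accumulation is more machinery than needed, since the paper (and Larmore et al.) observe directly that the forest produced by any single LWS instance is a subgraph of the final $l$-tree, so its depth---and hence $k_i$---is at most $h$ locally, with no need to track cross-round accumulation for the per-instance span bound.
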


This algorithm is nearly work-efficient with span parameterized on $h$.
One useful observation is that the OAT height $h$ is polylogarithmic with real-world input instance with positive integer weights and fixed word length.
More formally, we can show that:

\hide{
\begin{lemma}\label{lem:oat-height-lemma}
  In an OAT, the subtree weight grows by at least twice for every three levels.
\end{lemma}
}

\begin{lemma}\label{lem:oat-height-lemma}
  If all input weights are positive integers in word size $W$, 
  the OAT height is $O(\log W)$. 
\end{lemma}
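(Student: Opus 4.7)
The plan is to establish a structural ``doubling'' lemma: for every node $v$ in the OAT and every ancestor $u$ that is exactly three levels above $v$, the subtree weights satisfy $w(u) \ge 2 w(v)$, where $w(\cdot)$ sums the input weights of leaves in a subtree. Once this is in place, the height bound is immediate: since each leaf has weight at least $1$ (weights are positive integers), iterating the lemma along a root-to-leaf path starting from a leaf at depth $h$ forces the root weight to be at least $2^{\lfloor h/3 \rfloor}$; but the root weight equals $\sum_i a_i \le n W$, so $h = O(\log(nW)) = O(\log W)$ in the regime of interest (where $W$ dominates $\log n$, as in the $W = n^{\polylog(n)}$ setting described in the introduction, which absorbs the $\log n$ term into $\log W$).

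The proof of the doubling lemma is a local exchange argument. Fix $v$ with parent $p$, grandparent $g$, and great-grandparent $u$, and let $v', p', g'$ denote the siblings of $v, p, g$ respectively. Then $v, v', p', g'$ together partition the leaves of $u$'s subtree. Write $W = w(v)$ and $W_1, W_2, W_3$ for $w(v'), w(p'), w(g')$. Inside $u$'s subtree the four pieces sit at depths $3, 3, 2, 1$ respectively (determined by sibling relationships, independent of the L/R orientation of the path from $u$ down to $v$), so their contribution to the OAT cost $\sum_i a_i d_i$ equals $3W + 3W_1 + 2W_2 + W_3$ up to an additive constant that only depends on $u$'s own depth. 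Suppose for contradiction that $w(u) < 2 W$, i.e.\ $W_1 + W_2 + W_3 < W$; in particular $W_3 < W$. Now consider the alternative shape that re-groups the four pieces, in their fixed left-to-right order, into the balanced form $((\cdot,\cdot),(\cdot,\cdot))$, placing each piece at depth exactly $2$. This restructuring preserves the alphabetic order (the leaf sequence is unchanged) and changes the cost by $2(W+W_1+W_2+W_3) - (3W+3W_1+2W_2+W_3) = -W - W_1 + W_3 < 0$, since $W_3 < W \le W + W_1$. This strictly reduces the OAT cost, contradicting optimality, so $w(u) \ge 2 w(v)$.

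The main obstacle is verifying that the depth pattern $(3,3,2,1)$ and the cost formula $3W + 3W_1 + 2W_2 + W_3$ are genuinely invariant across the eight possible L/R configurations of the ancestral chain from $v$ up to $u$: different configurations induce different left-to-right orderings of $v, v', p', g'$, and one must check that a balanced restructuring respecting any such ordering is always realizable and always yields cost $2(W+W_1+W_2+W_3)$. This case analysis is elementary (the depth of each piece is dictated by which of $v, p, g$ it is a sibling of, not by its horizontal position, and a balanced binary tree over any four ordered items always exists), but it is the only point of the argument requiring care. Beyond this, turning the per-three-levels doubling into the stated height bound is a short geometric iteration combined with the trivial bound $\sum_i a_i \le n W$ on the root weight.
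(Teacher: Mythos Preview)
Your proof is correct and in fact cleaner than the paper's. Both arguments establish the same doubling lemma --- that $w(u) \ge 2w(v)$ whenever $u$ is three levels above $v$ --- and then iterate it along a root-to-leaf path. The difference lies in how the doubling lemma is proved. The paper fixes $v$ as the left child of its parent (by mirror symmetry) and then splits into three cases according to the left/right orientation of the next two edges up the tree, applying a single rotation in two of the cases and a double rotation in the third; the inequality extracted from optimality differs case by case. Your argument sidesteps this entirely by observing that the four pieces $v, v', p', g'$ always sit at relative depths $3,3,2,1$ regardless of orientation, and that replacing the current shape by the balanced shape $((\cdot,\cdot),(\cdot,\cdot))$ --- which is always a valid alphabetic restructuring since it preserves the leaf order --- moves all four to depth $2$ and changes the cost by $-W - W_1 + W_3$; this single comparison handles every configuration uniformly. (Amusingly, the paper's ``double rotation'' in its hardest case is exactly your balanced restructuring, so your argument can be read as recognizing that this one move already works everywhere.) You are also slightly more careful than the paper in bounding the root weight by $nW$ rather than $W$ and then noting that the $\log n$ term is absorbed into $\log W$ in the intended regime $W = n^{\polylog(n)}$.
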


The proof is not complicated and given in \iffullversion{\cref{sec:app-oat}}\ifconference{the full paper~\cite{fullversion}}.
With this lemma, we can state the following corollary.

\begin{corollary}\label{cor:oat}
  If the input key weights are positive integers with word size $W=n^{\polylog n}$,
  the OAT can be constructed in $O(n\log^2n)$ work and $\polylog(n)$ span,
  where $n$ is the input size.
\end{corollary}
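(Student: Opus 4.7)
The plan is a direct composition of \cref{theorem:oat} and \cref{lem:oat-height-lemma}, so the proof is essentially a substitution argument. First I would invoke \cref{lem:oat-height-lemma} to bound the height $h$ of the OAT in terms of the word size $W$ governing the input weights. Under the hypothesis $W = n^{\polylog n}$, the height satisfies
\[
 h = O(\log W) = O\!\left(\log\!\big(n^{\polylog n}\big)\right) = O(\polylog(n) \cdot \log n) = \polylog(n),
\]
since a product of polylogarithmic factors in $n$ is again polylogarithmic in $n$.

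Next I would plug this bound on $h$ into the span expression $O(h \log^3 n)$ from \cref{theorem:oat}. Since both $h$ and $\log^3 n$ are polylogarithmic in $n$, their product remains $\polylog(n)$, giving the claimed span. The work bound $O(n \log^2 n)$ is inherited verbatim from \cref{theorem:oat}, as it does not depend on $h$ at all.

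The only subtlety worth flagging is that \cref{theorem:oat} implicitly assumes a standard arithmetic cost per operation, so I would briefly note that with word size $W = n^{\polylog n}$ the weights occupy $\polylog(n)$ machine words, and arithmetic on them can be carried out in $\polylog(n)$ work and span. This contributes only an extra polylogarithmic factor that is already absorbed by our bounds (and in the standard word-RAM convention where a word holds $\Theta(\log W)$ bits, it contributes nothing). I do not anticipate a genuine obstacle here; the work is really in \cref{theorem:oat} and \cref{lem:oat-height-lemma}, and this corollary simply records the clean consequence for the practically relevant regime of polynomially-bounded (or even quasi-polynomially bounded) integer weights, thereby partially resolving the open question of \citet{larmore1993parallel}.
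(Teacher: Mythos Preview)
Your proposal is correct and matches the paper's approach exactly: the corollary is stated immediately after \cref{lem:oat-height-lemma} as a direct combination of that lemma with \cref{theorem:oat}, via the substitution $h = O(\log W) = \polylog(n)$. Your additional remark about arithmetic on $\polylog(n)$-word integers is a reasonable caveat the paper does not explicitly address, but it does not change the argument.
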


The bounds also hold for real number weights if the ratio between the largest and smallest weight is $n^{\polylog n}$.
We note that in realistic models we usually assume word-size $W=n^{O(1)}$, 
in which case \cref{cor:oat} affirmatively answers the open problem in \cite{larmore1993parallel}.

\subsection{The GAP Edit Distance Problem}\label{sec:gap}

The GAP problem is a variant of the famous edit distance problem.
The GAP problem aligns two input strings with sizes $n$ and $m\le n$, and allows editing a substring with certain cost function (formally defined below).
This problem has been widely studied both sequentially~\cite{galil1989speeding,eppstein1988speeding,chowdhury2006cache}
and in parallel~\cite{BG2020,chowdhury2010cache,galil1994parallel,tang2017,chowdhury2016autogen,itzhaky2016deriving,tithi2015high}.
As noted by \citet{eppstein1988speeding}, most real-world cost functions are either convex or concave, yielding $\tilde{O}(nm)$ work for the GAP problem sequentially.
Unfortunately, to the best of our knowledge, these existing parallel algorithms for the GAP problem need $\Omega(n^2m)$ work,
and the polynomial overhead makes them less practical.

More specifically, GAP takes two strings $A[1..n]$ and $B[1..m]$, and computes
the minimum cost to align $A$ and $B$ using the following operations:
1) deleting $A[l+1..r]$ with cost $w_1(l,r)$,
and 2) deleting $B[l+1..r]$ with cost $w_2(l,r)$.
Here we consider the following recurrence, which is usually referred to as the GAP recurrence:
\begin{align*}
P[i,j]&=\min_{0\le i'<i}D[i',j]+w_1(i',i) \\
Q[i,j]&=\min_{0\le j'<j}D[i,j']+w_2(j',j) \\
D[i,j]&=\min\{P[i,j],Q[i,j],~D[i-1,j-1]~|~A[i]=B[j]\}.
\end{align*}
Here $P[i,j]$ and $Q[i,j]$ indicate the edits on the two strings.
Directly computing the recurrence uses $O(n^2m)$ work.
Since most real-world cost functions in machine learning, NLP, and bioinfomatics~\cite{eppstein1988speeding} are either convex or concave, 
sequentially each row in $P$ or column in $Q$ is a convex or concave GLWS and can be computed in $\tilde{O}(n)$ or $\tilde{O}(m)$ work.
Hence, computing the entire $P$ and $Q$ takes $\tilde{O}(nm)$ work, leading to the same cost for computing $D$ and the entire problem. 
We denote this standard sequential algorithm as $\gammagap$. 

\begin{figure}[t]
  \centering
  \includegraphics[width=0.75\columnwidth]{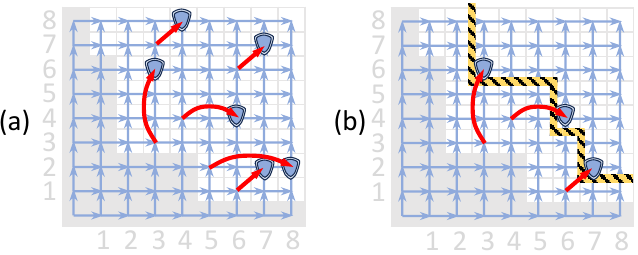}
  \caption{Example of a cordon in the GAP problem.
  \label{fig:2d-gap}}

\end{figure}

Parallelizing this approach is extremely challenging even with the parallel convex/concave GLWS in \cref{sec:post-office} as a subroutine, and we are unaware of any existing work on this.
The challenge here is that the rows in $P$ interact with the columns in $Q$. 
For instance, computing a row in $P$ requires one element from each column in $Q$, but computing those elements again requires previous rows in $P$.

Our key insight to parallelize this algorithm is to use the \ouralgo{} to efficiently mark the ready region to be computed in each round.
Note that as a generalization of the classic edit distance/LCS, the GAP recurrence is similar to Recurrence~\ref{eqn:lcs}, but with ``jumps'' in computing $P$ and $Q$.
An illustration is given in \cref{fig:2d-gap}.
In addition to the diagonal edges as in LCS (see \cref{fig:lis}), for rows and columns, 
there also exist effective (red) edges (see \cref{fig:2d-gap}(a)). 
Here for simplicity we only draw a subset of these edges, 
and every state $D[i,j]$ always have one vertical effective edge (to compute $Q[i,j]$), one horizontal effective edge (to compute $P[i,j]$), and may have a diagonal edge if $A[i]=B[j]$.
All these edges imply the sentinels, which form the cordon and imply
the regions for ready states, as shown in \cref{fig:2d-gap}(b). 
The cordon is still a staircase as in LCS. 

However, finding the cordon in GAP is sophisticated.
We cannot directly use a tournament tree as in LIS, since the vertical and horizontal edges are computed on-the-fly and not known ahead of time.
Meanwhile, in a 2D table where the cordon is a staircase, we cannot simply use prefix-doubling as in GLWS in \cref{sec:post-office}.
We propose a unique solution here to use prefix-doubling on a 2D table and computes the staircase cordon efficiently.
This approach will consider each row separately, but for all rows, we run prefix doubling synchronously and try to see if the next ranges are available.
First, we put a \sentinel{} on state $(x,y)$ with a diagonal edge if $(x-1,y-1)$ is not finalized. 
We will maintain the best-decision structure for each row and column, in the same way as the GLWS algorithm. 
For this region to be checked, we will use the same approach as in \cref{alg:dp-glws} to compute $P$ and $Q$, take the minimum as $D$, and use $D$ to check their readiness.
If a state $(x,y)$ obtains the best decision from another tentative state, we will put a \sentinel{} on $(x,y)$, 
which will block the other states $(x',y')$ with $x'\ge x$ and $y'\ge y$.
The work to put the \sentinel{}s is proportional to the number of states we checked in the prefix-doubling, and the span is polylogarithmic.

Finally we discuss how to handle the \sentinel{s} placed as above.
We store all \sentinel{s} based on the row index on increasing order.
After this, applying a prefix-min on these \sentinel{s} gives part of the cordon (if they exist), and we will merge it with the previous cordon.
Then, for all tentative states, we check whether they are on the correct side, and invalidate those across the cordon. 
Since we are using prefix doubling, the wasted work for the invalid states can be amortized.
In the next prefix doubling step, we will also use the cordon to limit the search region.
Once all states within the cordon are checked for readiness, we can move to the next round.
Due to prefix doubling, we only need $O(\log n)$ steps in each round.

\begin{theorem}\label{thm:gap}
  The \ouralgo{} for the GAP problem has $O(mn\log n)$ work and $O(k\log^2n)$ span,
  where $n$ and $m\le n$ are the input size and $k$ is the effective depth of the $\gammagap$-optimized DAG
  for the sequential algorithm $\gammagap$ introduced in \cref{sec:gap}.
\end{theorem}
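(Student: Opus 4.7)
The plan is to prove three things separately: correctness, the $O(mn\log n)$ work bound, and the $O(k\log^2 n)$ span bound. Correctness follows from the general correctness of the \ouralgo{} once we verify that our 2D cordon mechanism faithfully implements Steps 2--4 of the framework: a sentinel is placed on $(x,y)$ exactly when some tentative state---a row predecessor in the $P$-recurrence, a column predecessor in the $Q$-recurrence, or the diagonal neighbor $(x-1,y-1)$ when $A[x]=B[y]$---can successfully relax it, so the unblocked region in each round is precisely the set of states whose horizontal, vertical, and diagonal dependencies have all been finalized.

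For the work bound I would amortize across rounds. Within one round the cost decomposes into (i) the 2D prefix-doubling exploration, (ii) computing $P[x,y]$, $Q[x,y]$, and $D[x,y]$ for each explored state by $O(\log n)$ binary searches in the row's and column's best-decision triple arrays, reusing the machinery of \cref{alg:dp-glws}, (iii) gathering sentinels and merging them with the previous cordon via a sort and prefix-min, and (iv) refreshing the row and column best-decision structures via a 2D analogue of \funcfont{UpdateBest}, whose cost is $O(\ffsize\log n)$ per affected row or column of frontier size $\ffsize$, by an argument parallel to \cref{lemma:updatebest}. Prefix doubling explores at most a constant factor more states than the round actually finalizes, so summed over all rounds the total number of explored states is $O(mn)$; multiplied by $O(\log n)$ per state this yields $O(mn\log n)$. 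The best-decision updates contribute at most $O(mn\log n)$ in aggregate, since each row has total frontier $n$ across its lifetime and each column has total frontier $m$.

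The span bound decomposes as span per round times the number of rounds. Within a round the prefix-doubling proceeds for $O(\log n)$ substeps, each consisting of parallel $O(\log n)$-span binary searches, an $O(\log n)$-span prefix-min over sentinels, and an $O(\log n)$-span best-decision update, for $O(\log^2 n)$ span per round. The main obstacle, and the piece that mirrors \cref{lemma:1d:effectivedepth}, is bounding the number of rounds by $k=\ed(G_{\gammagap})$. I would prove by induction on $r$ that a state becomes finalized exactly when its effective depth in $G_{\gammagap}$ equals $r$. The delicate direction is that no state of effective depth $r$ is blocked beyond round $r$: in contrast to the 1D setting, a blocker of a state $s$ may sit on a different row or column, reached through a mix of horizontal, vertical, and diagonal edges. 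The argument is that any sentinel on an ancestor of $s$ is placed by a tentative relaxer, which by the induction hypothesis must have effective depth at least $r$; this forces the ancestor's depth, and hence $s$'s depth along the normal edge down to $s$, to exceed $r$, contradicting the assumption. The converse direction---that every state of depth exactly $r$ reaches its true value in round $r$---then follows by applying the induction hypothesis to all of its horizontal, vertical, and diagonal predecessors and observing that the staircase cordon correctly localizes the resulting ready region.
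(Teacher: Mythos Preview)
Your proposal is correct and follows essentially the same route as the paper: both split the theorem into (a) an induction establishing that a state is finalized in round $r$ iff its effective depth in $G_{\gammagap}$ equals $r$ (the paper's \cref{lemma:gap-round}, argued exactly as you sketch by chasing a blocking sentinel back to a tentative best decision and then down to $s$ along normal row/column edges), and (b) a work lemma that amortizes the per-state binary-search and best-decision-update costs across rounds via prefix doubling, invoking \cref{lemma:findcordon,lemma:updatebest} rowwise and columnwise.

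One accounting detail the paper makes explicit and your amortization leaves uncovered: maintaining and merging the staircase cordon costs $\Theta(m)$ per prefix-doubling substep \emph{regardless} of how many states are explored in that substep, since the cordon must be represented across all $m$ rows. Your ``$O(\log n)$ per explored state, summed to $O(mn\log n)$'' argument does not bound this overhead. The paper handles it separately by observing that $G_{\gammagap}$ is a grid graph of depth at most $m+n$, so $k=O(n)$; then $O(m\log n)$ per round over $k$ rounds gives $O(mn\log n)$ for the cordon maintenance as well. You should add this observation to close the work bound.
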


Recall that the sequential GAP algorithm $\gammagap$ gets the DP value
for each state $s=(i,j)$ by solving the GLWS problems in row $i$
and column $j$, respectively, and the diagonal edge $\edge{(i-1,j-1)}{(i,j)}$ if applicable.
Therefore, the optimal DAG $G_{\gammagap}$ contains three types of edges 
\begin{itemize}
  \item $\edge{(i,j)}{(i',j)}$ for all $i'>i$,
  \item $\edge{(i,j)}{(i,j')}$ for all $j'>j$, and
  \item $\edge{(i-1,j-1)}{(i,j)}$ if $A[i]=B[j]$. 
\end{itemize}
Among them, the effective edges include:
\begin{itemize}
  \item $\edge{(i,j)}{(i',j)}$ where $i$ is the best decision for $i'$ in the GLWS problem on row $j$,
  \item $\edge{(i,j)}{(i,j')}$ where $j$ is the best decision for $j'$ in the GLWS problem on column $i$, and
  \item $\edge{(i-1,j-1)}{(i,j)}$ if $A[i]=B[j]$. 
\end{itemize} 

WLOG we assume $m\le n$ in this section. We first prove the span bound. We will show that the \ouralgo{} finishes in $k$ rounds, where $k$ is the effective depth of $G_{\gammagap}$. 
\begin{lemma}\label{lemma:gap-round}
  Given two sequences of sizes $n$ and $m\le n$,
  the \ouralgo{} on GAP edit distance finishes in $k=\ed(G_{\gammagap})$ rounds. 
\end{lemma}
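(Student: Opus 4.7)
My plan is to mirror the inductive argument used for \cref{lemma:1d:effectivedepth}, adapting it to the 2D GAP setting in which $G_{\gammagap}$ has three families of effective edges: horizontal edges within a row (from the LWS on $P$), vertical edges within a column (from the LWS on $Q$), and diagonal edges $\edge{(i-1,j-1)}{(i,j)}$ whenever $A[i]=B[j]$. I would define $\ed(s)$ for a state $s=(i,j)$ as the largest number of effective edges on any path ending at $s$ in $G_{\gammagap}$, and prove by induction on $r$ that $s$ is put into the frontier during round $r$ if and only if $\ed(s)=r$. Since $\ed(G_{\gammagap})=\max_s \ed(s)$, this immediately yields the stated round count.

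The base case is routine: states with $\ed(s)=0$ are exactly those initialized by the boundary condition, which the \ouralgo{} places in the first frontier. For the inductive step, assume the claim holds for rounds up to $r-1$ and treat the two directions separately. For the \emph{if} direction, suppose $\ed(s)=r$ and, toward a contradiction, that some sentinel at an ancestor $y$ of $s$ blocks $s$ in round $r$. That sentinel was placed because a tentative state $z$ successfully relaxed $y$ along one of the three effective edge types; by the induction hypothesis $z$ being tentative gives $\ed(z)\ge r$, so $\ed(y)\ge r+1$. Since $G_{\gammagap}$ has a (possibly normal) edge from $y$ to $s$, we conclude $\ed(s)\ge r+1$, contradicting $\ed(s)=r$. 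For the \emph{only if} direction, if $s$ lies in the frontier of round $r$ then the induction hypothesis forces $\ed(s)\ge r$; and if $\ed(s)\ge r+1$, I would take a witnessing path and locate an effective edge $\edge{x}{x'}$ on it with $\ed(x)=r-1$ and $\ed(x')=r$. Induction says $x$ is finalized before round $r$, so when the prefix-doubling routine in round $r$ reaches $x'$, it computes the true DP value at $x'$ via the appropriate row, column, or diagonal relaxation and places a sentinel at $x'$; because $x'$ is an ancestor of $s$ in the grid, $s$ is blocked and cannot be in the frontier of round $r$, a contradiction.

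The main obstacle I anticipate is the bookkeeping for the 2D staircase cordon and the three kinds of effective edges, rather than the linear frontier of GLWS. Specifically, I need to verify that every sentinel placed by the GAP algorithm corresponds to a genuine effective edge of $G_{\gammagap}$, and conversely that every effective edge is exercised at the correct round by the 2D prefix-doubling procedure. The diagonal case deserves separate attention: a sentinel placed at $(x,y)$ with $A[x]=B[y]$ exactly when $(x-1,y-1)$ is still tentative corresponds to the diagonal effective edge $\edge{(x-1,y-1)}{(x,y)}$, while the horizontal and vertical sentinels inherit their justification from the convex/concave GLWS analysis of \cref{sec:post-office} applied row-wise to $P$ and column-wise to $Q$. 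A final minor point I would make explicit is that the descendant relation in $G_{\gammagap}$ coincides with the 2D product order on grid coordinates, so that \emph{ancestor of} $s$ and \emph{dominated by} $s$ can be used interchangeably throughout the argument.
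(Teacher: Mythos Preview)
Your plan matches the paper's approach—induction on $r$ with an if/only-if split over the product order $\preceq$ on grid states—but there are two gaps, one minor and one that breaks the argument.

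In the ``if'' direction, the tentative state $z$ that places a sentinel on $y$ need not be connected to $y$ by an \emph{effective} edge; it may be any horizontal or vertical predecessor. So you cannot jump from $\ed(z)\ge r$ to $\ed(y)\ge r+1$ directly. The paper's fix is to observe that if any tentative $z$ improves $y$, then the true best decision $y^*$ must itself be tentative (otherwise $y$'s tentative value from finalized states would already be optimal); since $\edge{y^*}{y}$ is always one of the three effective edges, $\ed(y)\ge \ed(y^*)+1\ge r+1$.

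The ``only if'' direction has the wrong indices. You pick an effective edge $\edge{x}{x'}$ with $\ed(x)=r-1$ and $\ed(x')=r$ and claim a sentinel is placed at $x'$. But by the ``if'' direction you have just established, every state of effective depth $r$—including $x'$—is in the frontier of round $r$ and is therefore \emph{not} blocked by any sentinel. Concretely, since $x=\best[x']$ is finalized, the tentative value of $x'$ read from the row/column best-decision structures is already its true value, so no tentative state can relax $x'$ and no sentinel is placed there. The paper instead locates an effective edge $\edge{x_i}{x_{i+1}}$ with $\ed(x_i)=r$ and $\ed(x_{i+1})=r+1$: now $x_i$ is tentative, but its own best decision has effective depth at most $r-1$ and is finalized, so in round $r$ the prefix-doubling step computes the true $D[x_i]$; then $x_i$ successfully relaxes $x_{i+1}$ (whose tentative value, based only on finalized states, is strictly worse), putting a sentinel at $x_{i+1}\preceq s$ and blocking $s$. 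Shifting your indices from $(r-1,r)$ to $(r,r+1)$ repairs the argument.
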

\begin{proof}
  The proof is similar to \cref{lemma:1d:effectivedepth}. We also define the effective depth of a state $s$ as $\ed{s}$.
  We will show by induction that $s$ is processed in round $r$ iff. $\ed{s}=r$. The base case holds trivially. 
  
  Assume the conclusion holds for all rounds up to $r-1$. We will show it is also true for round $r$. 
  We first prove the ``if'' direction, i.e., if a state $s=(i,j)$ ($i$-th row, $j$-th column) has effective depth $r$,
  it must be in the frontier of round $r$. 
  This is equivalent to show that there is no sentinel that blocks $s$.
  For simple description, for two states $s=(i,j)$ and $s'=(i',j')$,
  we say $s\prec s'$ if $i\le i'$ and $j\le j'$. 
  Clearly, if a state $s\prec s'$, a sentinel on $s$ will block $s'$. 
  Assume to the contrary that there is a state $y\prec s$ with a sentinel, which is put by another tentative state $x\prec y$.
  This means that the tentative state $x$ is a better decision than all finalized states, indicating that the best decision of $y$, denoted as $y^*$, must 
  also be tentative.
  Based on the induction hypothesis, the effective depth of $y^*$ must be larger than $r-1$. 
  Therefore, $\ed(y)\ge \ed(y^*)+1 >r-1+1=r$, which means that $\ed(y)$ is at least $r+1$.
  Let $y=(i',j')$. $y$ and $s$ can be connected by either one normal edge (when they are in the same row or column) 
  or two normal edges ($\edge{(i',j')}{(i,j')}$ and $\edge{(i,j')}{(i,j)}$). 
  This means that the effective depth of $s$ is at least the same as $y$, which is $r+1$. 
  This leads to a contradiction. 
  
  We then prove the ``only if'' condition, i.e., if a state $s$ is in the frontier of round $r$, it must have effective depth $r$. 
  The induction hypothesis suggests that all states with effective depth smaller than $r$ have been finalized in previous rounds, so we only
  need to show that $\ed(s)$ cannot be larger than $r$. 
  Assume to the contrary that $\ed(s) \ge r+1$. 
  Let the path to $s$ with effective depth $\ed(s)$ be $x_1, x_2, \dots, s$. 
  Since the total number of effective edges on this path is at least $r+1$, there must exist
  an effective edge $\edge{x_i}{x_{i+1}}$ on the path such that $\ed(x_i)=r$ and $\ed(x_{i+1})=r+1$.
  However, based on induction hypothesis, $x_i$'s best decision must have been finalized. 
  In round $r$, $x$ must get its true DP value, and will find itself able to update $x_{i+1}$.
  Therefore, there will be a sentinel on $x_{i+1}\prec s$, and $s$ cannot be identified in the frontier of round $r$.
\end{proof}

Combining \cref{lemma:findcordon,lemma:updatebest}, the span in each round is $O(\log^2n)$.
This proves the span bound in \cref{thm:gap}. 

We then prove the work bound in \cref{thm:gap}. 
\begin{lemma}
  Given two sequences of sizes $n$ and $m\le n$,
  the \ouralgo{} on GAP edit distance has work $O(mn\log n)$. 
\end{lemma}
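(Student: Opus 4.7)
The plan is to show that the total work decomposes into per-round contributions that sum to $O(mn\log n)$, paralleling the analysis of \cref{alg:dp-glws} but lifted to the 2D staircase frontier. First, I would define the frontier $\ff_r$ of round $r$ as the set of states finalized in that round, and partition the work of round $r$ into three parts: (i) the cordon-finding work via 2D prefix-doubling, (ii) the computation of the tentative $P$, $Q$, and $D$ values for states inside the (speculatively checked) region, and (iii) the update of the row- and column-wise best-decision arrays analogous to \funcfont{UpdateBest}. Each state in the frontier is finalized exactly once across the entire execution, so $\sum_r |\ff_r| = mn$.

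Next, I would bound each part by $O(|\ff_r|\log n)$. For part (ii), each state $(i,j)$ checked in round $r$ performs one binary search into the row-$i$ best-decision structure and one into the column-$j$ best-decision structure, each costing $O(\log n)$, plus the $O(1)$ diagonal check; this is exactly the per-state cost from \cref{line:1d:try-relax}. For part (i), the prefix-doubling issues $O(\log n)$ geometric substeps; by the standard doubling argument applied row-by-row (and symmetrically column-by-column), the number of speculatively inspected states in round $r$ is at most $2|\ff_r|+O(\text{boundary overshoot})$, where the overshoot per row/column is charged against a state inside the cordon. Hence part (i) is also $O(|\ff_r|\log n)$. For part (iii), I would invoke the two-dimensional version of \cref{lemma:updatebest}: the row best-decision arrays affected in round $r$ are updated by a \funcfont{FindIntervals}-style divide-and-conquer, whose total cost across all affected rows is $O(|\ff_r|\log n)$, and symmetrically for columns.

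Summing these three contributions gives round-$r$ work $O(|\ff_r|\log n)$, so by $\sum_r |\ff_r|=mn$ the total work is $O(mn\log n)$. The main obstacle is carefully bounding the speculative work in part (i): unlike the 1D case where the frontier is a contiguous interval and doubling overshoots at a single endpoint, in 2D the cordon is a staircase so each row and each column can overshoot independently, and a state checked by row~$i$'s doubling may already be invalidated by another row's sentinel. The argument I would make is that each inspected-but-invalidated state still triggers only $O(\log n)$ work, and it can be charged to the unique sentinel that eventually blocks it, which in turn lies within the frontier of some round $r'\le r$; a potential-function argument (potential = number of unfinalized states that have ever been touched but not yet charged) then keeps the amortized overhead constant per finalized state, preserving the $O(mn\log n)$ bound.
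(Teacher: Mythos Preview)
Your per-state accounting (parts (ii) and (iii), and the state-visit half of part (i)) is correct and essentially matches the paper's argument. The gap is in part (i): in every one of the $O(\log n)$ prefix-doubling substeps, the algorithm must run a prefix-min across \emph{all} $m$ rows to assemble the newly placed sentinels into a consistent staircase cordon and merge it with the previous one. That step costs $\Theta(m)$ per substep---hence $O(m\log n)$ per round---irrespective of $|\ff_r|$. Your charging scheme (``overshoot per row is charged against a state inside the cordon'') does not cover this: a row that contributes zero states to the current frontier still occupies a slot in the prefix-min, and there is no state on that row in $\ff_r$ to absorb the charge; likewise, a single sentinel can simultaneously block many rows, so charging each inspected-but-invalidated state to the sentinel that blocks it is not one-to-one either. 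The potential-function sketch you propose does not resolve this, because the additive $O(m)$ term is not tied to any state being touched.

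The paper handles this additive per-round term by first bounding the number of rounds. Because $G_{\gammagap}$ is a grid graph, its depth is at most $m+n = O(n)$, and by \cref{lemma:gap-round} the algorithm therefore runs in $k = O(n)$ rounds; the aggregate prefix-min cost is then $O(m\log n)\cdot O(n)=O(mn\log n)$. Your proposal never invokes any bound on the number of rounds---the identity $\sum_r |\ff_r|=mn$ alone permits up to $mn$ rounds---so the $O(m\log n)$-per-round overhead would only yield $O(m^2 n\log n)$. Once you add the round bound, the remainder of your outline (invoking \cref{lemma:findcordon} and \cref{lemma:updatebest} per row and per column, with visited states $O(h)$ per round summing to $O(mn)$) matches the paper.
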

\begin{proof}
  As $G_{\gammagap}$ is a grid graph, its depth is no more than $m+n$.
  By \cref{lemma:gap-round} the algorithm will finish in $k=O(n)$ rounds.
  In each round, we do prefix-doubling across all $m$ rows and try to push the frontier on each row.
  In each prefix-doubling step we do a prefix-min that costs $O(m)$ work,
  so the cost of prefix-doubling is $O(m\log n)$ in each round, and $O(mn\log n)$ in total.
  Suppose $h$ is the frontier size in one round.
  Due to prefix-doubling, the number of tentative states we visited is at most $2h$.
  Combining \cref{lemma:findcordon,lemma:updatebest},
  in each row/column we can achieve work proportional to the number of tentative states.
  Thus the cost to put sentinels and maintain the best decision arrays is also $O(mn\log n)$.
\end{proof}

\hide{

The computation of $D[i,j]$ reduces to the computation of $P[i,j]$ and $Q[i,j]$.
In each column/row, the computation of $P[i,1..n]$ and $Q[1..m,j]$ is a GLWS problem.
The step to take the minimum of $P[i,j]$, $Q[i,j]$ and $D[i-1,j-1]$ (if possible)
plays the role as the $f()$ function in GLWS.
If the weight function $w_1$ and $w_2$ is convex/concave,
a sequential algorithm can make use of GLWS algorithm with DM to compute all $D[i,j]$ in $O(mn\log mn)$ time.

To parallelize the 2D GAP algorithm,
we can borrow the staircase idea from the parallel LCS algorithm in \cref{sec:lis}.
The difference is that in addition to the matching positions ($A[i]=B[j]$),
the sentinels are put based on the GLWS problems of row $i$ and column $j$.
By \ouralgo{} we need to put a \sentinel{} on all state that relies on any previous tentative state.
An example is shown in \cref{fig:2d-gap},
where $D[3,4]$ can relax $P[3,7]$, and $D[3,3]$ can update $Q[5,3]$.
So we put two \sentinel{}s at $(3,7)$ and $(5,3)$, plus the \sentinel{} at $(6,1)$ which is a matching position.
Every \sentinel{} will block the upper-right area of itself,
so we can build the cordon as \cref{fig:2d-gap}.

If $w_1$ and $w_2$ is convex/concave, we can achieve work-efficiency by the prefix-doubling technique.
In each round of \ouralgo{}, we try to push the frontier of each row in 1,2,4,8 steps.
We can check the inner dependencies with work proportional to the number of tentative states we have visited.
During this process, if we put a \sentinel{} at $(x,y)$,
we set a limit that the frontier in these rows cannot be pushed beyond $y$.

The analysis of the span of our algorithm is as follows.
Let $\best_P[i,j]$ and $\best_Q[i,j]$ be the best decision of $P[i,j]$ and $Q[i,j]$, respectively.
In the begining the $\Gamma$-optimized DAG is the grid.
For each state $(i,j)$,
we add two effective edges: from $\best_P[i,j]$ to $(i,j)$, and from $\best_Q[i,j]$ to $(i,j)$.
And if $A[i]=B[j]$, we add an effective edge from $(i-1,j-1)$ to $(i,j)$.
Then the number of rounds of our algorithm is the longest path of effective edges.

}
\hide{
\ifconference{
For page limitation, we present the proof in \cite{fullversion}.
The work bound can be inferred from the $O(n\log n)$ work for each 1D GLWS problem.
The proof of span bound is similar to \cref{lemma:1d:effectivedepth}.
}

\iffullversion{
\myparagraph{Proof of \cref{thm:gap}.}
}
} 

\ifconference{
\subsection{Other Algorithms}\label{sec:other-other}


We further apply the \ouralgo{} to several other problems in \iffullversion{\cref{sec:app-tree,sec:k-lws,sec:obst}}\ifconference{the full version of this paper \cite{fullversion}}, including 
a GLWS problem on a tree structure, 
$k$-GLWS, and Optimal Binary Search Tree (OBST). 
For both $k$-GLWS and OBST, our algorithm leads to a correct algorithm with trivial parallelism ($\tilde{O}(k)$ and $\tilde{O}(n)$ span respectively). 
However, we show that they are still optimal parallelization of the sequential algorithms---this indicates that to further improve the parallelism, 
one should probably find or redesign another sequential algorithm to start with.
}

\iffullversion{
\subsection{General LWS on Trees}\label{sec:app-tree}

The idea of decision monotonicity (DM) can be applied to various structures more than just 1D cases discussed in \cref{sec:post-office}.
The efficient parallelism on 2D grid structure is introduced in \cref{sec:gap}, and we now show the techniques to enable high parallelism on the tree structure.
Here we refer to this problem as \treelws{}.

Let $T$ be a tree with $n+1$ nodes, and node $0$ is the root.
We use $p(v)$ to denote the parent of node $v$ and $d_v$ be the distance from node $0$ to node $v$.
\treelws{} takes the input tree $T$, a cost function $w$, and the boundary $D[0]$, and computes:
\begin{equation}\label{eqn:tree}
D[v]=\min\{E[u]+w(d_u,d_v)\}
\end{equation}
where $u$ is any ancestor of $v$, and $E[u]=f(D[u],u)$ that can be computed in constant time from $D[u]$ and $u$.
The cost function $w$ is decided by the depths of $u$ and $v$.
Note that here sibling nodes $v_1$ and $v_2$ will have the same DP value,
but $E[v_1]$ and $E[v_2]$ can be different given that $v_1$ and $v_2$ are also part of the parameter in computing the function $f$.
In this section we assume $w$ is convex, but our algorithm can adapt to the concave case with some modifications.

\subsubsection{Building Blocks}\label{sec:tree-bb}
We will first overview some basic building blocks, which are crucial subroutines used in our algorithm.

\myparagraph{Persistent Data Structures.}
A persistent data structure~\cite{persistence} keeps history versions when being modified.
We can achieve persistence for binary search trees (BSTs) efficiently by path-copying \cite{blelloch2016just,sun2018pam,blelloch2022joinable}, where only the affected path related to the update is copied.
Hence, the BST operations can achieve persistence with the same asymptotical work and span bounds as the mutable counterpart.

\myparagraph{Heavy-Light Decomposition (HLD).}
HLD~\cite{sleator1983data} is a technique to decompose a rooted tree into a set of disjoint chains. 
In HLD, each non-leaf node selects one \defn{heavy edge},
the edge to the child that has the largest number of nodes in its subtree.
Any non-heavy edge is a \defn{light edge}.
If we drop all light edges, the tree is decomposed into a set of top-down chains with heavy edges.
As such, HLD guarantees that the path from the root to any node $v$ contains $O(\log n)$ distinct chains plus $O(\log n)$ light edges.
If we use BSTs to maintain each heavy chain in HLD, we can answer path queries (e.g., query the minimum weighted node on a tree path) in $O(\log^2n)$ work.

\myparagraph{Range Report Based on Tree Depth.}
We now discuss a data structure that efficiently reports the set of nodes in a subtree of $T$ where the depths of the nodes are in a given range $l$ to $r$.
First we build the Euler-tour (ET) sequence of $T$, so any subtree of $T$ will be a consecutive subsequence in the ET.
We can map all nodes to a 2D plane each with coordinates $(f_v,d_v)$, where $f_v$ is the first index of $v$ in the ET, and the $d_v$ is the tree depth of $v$.
Now the original query is a 2D range report on this 2D plan.
A range tree~\cite{sun2019parallel} can be built in $O(n\log n)$ work and $O(\log^2 n)$ span, and answer this query in $O(m+\log^2n)$ work and $O(\log^2n)$ span where $m$ is the output size.

\subsubsection{Our Main Algorithm}

Here if we consider any tree path, Recurrence~(\ref{eqn:tree}) is exactly the same as for the 1D case in \cref{sec:post-office}.
Hence, we can use a similar approach as in \cref{sec:post-office} by maintaining the best-decision array of $([l,r],j)$ triples,
meaning that for elements $v$ with depth $l\le d_v\le r$, $v$'s best decision is $j$.
The challenge here is the branching nature of a tree---we need to handle path divergences at nodes with more than one child.
The work can degenerate to $\tilde{O}(n^2)$ if we copy the best-decision arrays at the divergences, since we can end up with $O(n)$ leaves.
Sequentially, we can depth-first traverse the tree and compute the ``current'' best-decision array, and we only need to revert the array when backtracking.
However, this approach is inherently sequential.
To utilize the \ouralgo{} on a tree structure, we need to resolve the following two challenges: 1) how to efficiently identify the ready nodes;
and 2) how to efficiently maintain the best decision arrays for each node.

\begin{figure}[t]
  \centering
  \includegraphics[width=0.6\columnwidth]{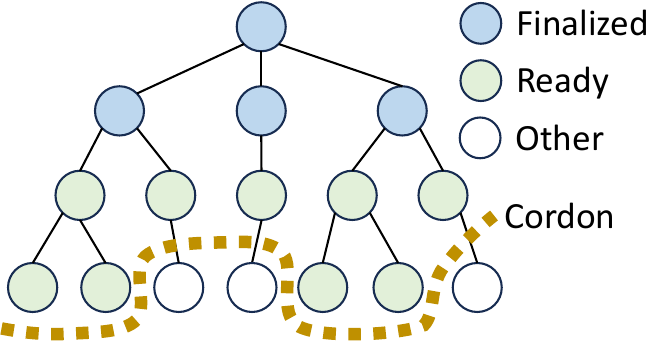}
  \caption{\ouralgo{} on a tree. Note that the sibling nodes must have the same status.
  \label{fig:tree}}
\end{figure}

\medskip
\myparagraph{Identifying the Ready States.}
Similar to the 1D case \cref{sec:post-office}, we maintain the best-decision array for each tree path. 
Then we traverse the tree top-down, and we identify the ready states $v$ that can be finalized in this round, and compute them in parallel.
An illustration can be found in \cref{fig:tree}.

Our high-level idea still follows the prefix-doubling technique, similar to the 1D case.
In the $t$-th doubling step we expand all nodes with $2^{t-1}\le d_v < 2^t$.
These nodes can be extracted by a range report shown in \cref{sec:tree-bb}. 
We use prefix doubling and the checking process in \cref{sec:post-office} to decide the boundary that forms the cordon in the next round.
When checking the availability, we can use the HLD-based tree path query to find the minimum (highest) node on each path that is not available.
We will put sentinels on these nodes that block their subtrees.
The process stops when we find such nodes for all tree paths. 
In \cref{fig:tree}, these ready nodes are shown in green.
In the next round we can asynchronously work on the subtrees on the cordon in parallel.
We repeat this process until all nodes are finalized (correctly computed).

Here one difference to the 1D case is that the work cost of perfix doubling cannot be perfectly amortized.
In the 1D GLWS, if the prefix-doubling stops at step $t$,
we visit at least $2^{t-1}$ ready states and at most $2^{t-1}$ unready states,
thus the work to visit the unready states can be amortized.
However, in the tree case we are doing prefix-doubling by the depth of nodes.
The number of nodes in the last prefix-doubling step can be much larger than in the previous steps,
and the cost cannot be amortized.
The insight is that due to the prefix-doubling,
each node $v$ will be visited in at most $O(\log n)$ rounds.
Plus the $O(\log^2n)$ work of the range report,
the work in each round can be amortized to $O(h\log^3n)$, where $h$ is the frontier size.

\myparagraph{Updating the Best-Decision Arrays.}
The most interesting part in this algorithm is how to maintain the best-decision arrays for all tree paths while achieving work efficiency and high parallelism.
Due to the tree structure, the best-decision arrays for different branches of the tree share some parts.
In total, there can be $O(n)$ paths with total sizes of $O(n^2)$.
The key challenge is to save the work and space by sharing parts of the arrays, while updating them highly in parallel.

Consider the simple case when the ready nodes form a chain (the 1D case in \cref{sec:post-office}).
Here we use persistent BSTs to maintain the best-decision arrays on each node.
We first use \funcfont{UpdateBestChoice} in \cref{alg:dp-glws} to generate the best-decision array $B$ in the middle node of the chain,
and merge it with the old $B$ (the one stored at the node above this chain) using the similar technique in \cref{sec:post-office-concave}.
During this process we use path-copying to generate a new version the new array.
Then we work on the upper part and the lower part of the chain in parallel.
By this divide-and-conquer method, we can generate the best-decision array on each node of the chain
with $O(m\log^2 m)$ work and $O(\log^3m)$ span, where $m$ is the length of the chain.

In the general case, the structure of ready states can be arbitrary.
To achieve work-efficiency and high parallelism, our solution is in a ``BFS-style'' algorithm that utilizes the properties of HLD (see \cref{sec:tree-bb}). 
For all ready nodes in this round, we extend the heavy chain that is directly connected to the finalized nodes.
Since the heavy chain will not diverge, the approach is the same as the 1D case except for additional persistence, with work proportional to the total number of nodes and polylogarithmic span.
Once we finish updating the heavy chain, we will in parallel work on the light children of the nodes on the heavy chain we just proceeded.
The overall structure is similar to a BFS with heavy edges with weight 0 and light edges with weight 1.
Since each node only appears in one heavy chain, the work is still proportional to the number of ready nodes.
We can also achieve high parallelism due to the fact that there can be at most $O(\log n)$ heavy chains and light edges from the root to any node $v$.
Hence, we can finish updating all paths in a logarithmic number of steps per round, which guarantees both work-efficiency and high parallelism.
Here we assume we build the HLD for the entire tree $T$ at the beginning, but we can also build the HLD for the ready states locally for each round.

Combining all pieces together, 
in each round we can determine the ready states and maintain the best-decision arrays
with work proportional to the number of ready states and polylogarithmic span.
We hence have the following theorem:
\begin{theorem}
  \ouralgo{} solves \treelws{} in $O(n\log^3n)$ work and $O(k\log^4n)$ span,
  where $k$ is the longest path in the best decision dependency graph.
\end{theorem}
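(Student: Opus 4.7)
The plan is to analyze the algorithm in three pieces: (i) bound the number of rounds by $k$, (ii) bound the total work amortized across all rounds, and (iii) bound the span per round.

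First, I would establish the round count by adapting the inductive argument from \cref{lemma:1d:effectivedepth} to the tree setting. Define the effective depth of a node $v$ as the length of the longest chain of best-decision dependencies ending at $v$. I would show by induction on the round number $r$ that a node is finalized in round $r$ if and only if its effective depth equals $r$. The ``only if'' direction uses that all ancestors of a ready node on every root-to-node path have been finalized. The ``if'' direction uses that whenever a tentative ancestor $u$ has its true best decision finalized, then $u$ successfully relaxes a descendant and plants a sentinel that is caught by the HLD-based path query. This shows the algorithm terminates in exactly $k$ rounds.

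Second, for the work bound I would amortize both the cordon-finding and the best-decision update phases across all rounds. For cordon finding, each node $v$ participates in at most $O(\log n)$ prefix-doubling steps over the entire execution (once per depth interval $[2^{t-1},2^t)$ it falls into), and each participation costs $O(\log^2 n)$ for the range report of \cref{sec:tree-bb} plus $O(\log^2 n)$ for the HLD path query that locates the highest unready ancestor. Summed over all $n$ nodes this yields $O(n\log^3 n)$. For the best-decision update, the HLD-style BFS visits each node exactly once per round, but since a node can only be newly made ready once over the whole algorithm, and the 1D divide-and-conquer along a heavy chain of length $m$ (with persistent path-copying BSTs) costs $O(m\log^2 m)$, the total cost summed across all rounds and chains is $O(n\log^3 n)$. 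Adding the two phases gives the stated $O(n\log^3 n)$ work bound.

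Third, for the span per round, I would bound each phase separately. The cordon-finding phase executes $O(\log n)$ doubling steps, each of span $O(\log^2 n)$ because range reports and HLD path queries both have polylogarithmic span; this contributes $O(\log^3 n)$. The update phase proceeds in BFS layers where each layer extends one heavy chain or crosses one light edge; since any root-to-node path contains $O(\log n)$ such chains/light edges, the BFS completes in $O(\log n)$ layers, and each layer runs the persistent divide-and-conquer of \cref{sec:post-office} in $O(\log^3 n)$ span. Together the span per round is $O(\log^4 n)$, and multiplying by the $k$ rounds yields $O(k\log^4 n)$.

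The main obstacle will be the amortization for the best-decision updates: persistence via path-copying introduces $O(\log n)$ copied nodes per update, and heavy chains spanning branching subtrees must be merged with the previously stored best-decision arrays near the top of each chain (analogous to the merge step in \cref{sec:post-office-concave}). I would need to argue carefully that this merge, done once per ready node, does not double-count work across different branches---leveraging the HLD property that each node belongs to exactly one heavy chain so that the persistent versions form a DAG whose total size remains $O(n\log n)$, and that the merge cost is dominated by the $O(m\log^2 m)$ divide-and-conquer bound already charged per chain.
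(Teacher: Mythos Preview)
Your three-piece decomposition (round count via the effective-depth induction, work amortization via the $O(\log n)$-visits-per-node bound combined with the $O(\log^2 n)$ cost of range reports and HLD queries, and span via $O(\log n)$ HLD layers times the $O(\log^3 n)$ persistent divide-and-conquer per chain) is exactly how the paper establishes the bounds, though the paper presents these arguments inline in the algorithm description rather than as a separate proof. Your amortization for cordon finding and your identification of the chain-merge step as the delicate point both match the paper's treatment; there is nothing to add.
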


\hide{For convex GLWS on tree, the best decisions of nodes in a subtree is still monotone with the depth, just like the convex 1D GLWS.
So for each subtree we can still maintain the best decision array of $([l,r],j)$ triples,
meaning that in this subtree if $l\le d_v\le r$ then $v$'s best decision is $j$.
However, due to the branches in the tree, the best decision array needs to be persistent.
Each node $v$ store a version of the best decision array,
which is the same as the best decision array if we extract the path from $0$ to $v$ as a prefix of the 1D problem.
Sequentially, we can do a breath-first-search (BFS) or depth-first-search (DFS) on the tree while maintaining the best decision array on each node.
In addition, if we do DFS, the best decision array does not need to be persistent, but only needs to be revertable,
alghouth this method is off our topic in parallelization.

So we consider the simple case in \cref{fig:tree1} where state $0$ is finalized and state $1$ is tentative.
The tentative DP values in this subtree can be computed from the best decision array stored at node $0$.
The goal is to determine the maximal connected subgraph that is connected to $1$,
in which all states are ready (the green nodes).

For each new node $v$, we do a binary search to find $s_v$,
which means $v$ can relax those nodes with depth no less than $s_v$.
Then $v$ can put a \sentinel at depth $s_v$ in its subtree.
A node $v$ is ready if for any node $u$ on the path from $1$ to $v$, $s_u>d_v$, otherwise $v$ is blocked.
This is where we need the tree path minimum query.
The process stops when all nodes in the frontier are blocked.
Then we have a connected tree structure of all ready states
(the green nodes in \cref{fig:tree1}).
After we update the best decisions, we can work on deeper subtrees in parallel.

} 

\subsection{Parallel $k$-GLWS}\label{sec:k-lws}

Another well-known variant of GLWS is to limit the output that contains a fixed given number of $k$ clusters in the output~\cite{wang2011ckmeans,yang2022dp}.
Here we refer to it as the $k$-GLWS problem.
Formally, let $D[i,k']$ be the minimum cost for the first $i$ elements in $k'$ clusters, and the DP recurrence is:
\begin{equation*}
  D[i,k']=\min_{j<i}{D[j,k'-1]+w(j,i)}
\end{equation*}
where $w(j,i)$ is the cost of forming a cluster containing elements indexed from $j+1$ to $i$, and the boundary case $D[0,0]=0$ and $D[i,0]=+\infty$ for $i>0$.
Directly solving this recurrence takes $O(kn^2)$ work.
When the cost function $w$ is convex (which happens in many practical settings), the computation of each column in the DP table is a static matrix-searching problem, i.e., for a totally monotone matrix $A$ where $A[j,i]=D[j,k'-1]+w(j,i)$,
we want to compute the minimum element in each column of $A$.
Theoretically this problem can be solved in $O(n)$ work by the SMAWK algorithm~\cite{aggarwal1987geometric}, but this algorithm is quite complicated and inherently sequential.
Practically, there exists a simple divide-and-conquer algorithm with $O(n\log n)$ work~\cite{apostolico1990efficient}, which is similar to the function \funcfont{FindIntervals} in \cref{alg:dp-glws}.
This algorithm first computes the minimum element in the $(n/2)$-th column by enumerating all elements in this column, and recurse on two sides.
Due to the monotonicity, the minimum element in the $(n/2)$-th column limits the searches on both sides and guarantees the search ranges shrink by a half.
Hence, the work spent on each recursive level is $O(n)$, yielding $O(n\log n)$ total work for a recursive structure with $\log n$ levels.
By parallelizing the divide-and-conquer and using parallel reduce to find the minimum element (with $O(\log n)$ span), the total span is $O(\log^2 n)$.

We now show that when applying \ouralgo{} to this sequential algorithm, the $k'$-th frontier contains all states $D[\cdot,k']$.
We can see that in the first round, states $D[\cdot,1]$ are ready.
Since all states $D[\cdot,2]$ depend on some state from $D[\cdot,1]$, we will put \sentinel{s} on all states $D[\cdot,2]$ and they thus block all later states.
Then we can inductively show that this applies to all rounds, so finishing this computation requires $k$ rounds.
Then computing all states $D[\cdot,k']$ in each round using the aforementioned algorithm requires $O(n\log n)$ work and $O(\log^2 n)$ span.
Hence, the entire algorithm has $O(kn\log n)$ work and $O(k\log^2n)$ span.
In this problem, $k$ is also the depth of the DP DAG, so this algorithm is a perfect parallelization of the classic sequential algorithm.

\hide{
Here we show an parallel algorithm that can be directly derived by the \ouralgo{} framework.
In the beginning we finalize all $D[i,0]$ by the boundary condition.
By the \ouralgo{}, in each round we need to determine the ready states,
and use their DP values to update other tentative states.
As $D[*,k+1]$ relies on $D[*,k]$, we will put sentinels on column $k+1$,
thus the ready states are actually the column $D[*,k]$.
Then we try to use the $D[*,k]$ to relax $D[*,k+1]$.
We can use the similar divide-and-conquer technique as the \funcfont{FindIntervals} function in \cref{alg:dp-glws}
to achieve $O(n\log n)$ work and $O(\log^2n)$ span.
The whole algorithm takes $O(kn\log n)$ work and $O(k\log^2n)$ span.
In this problem $k$ is also the depth of the DP DAG, so this algorithm is a perfect parallelization.
} 

\subsection{Optimal Binary Search Tree (OBST)}
\label{sec:obst}
(Static) OBST is one of the earliest examples of DM optimization.
Given an array of frequency $a_{0..2n}$, it computes the recurrence 
\begin{equation}\label{eqn:obst}
  D[i,j]=\min_{j\le k<i}{D[j,k]+D[k+1,i]+w(i,j)}
\end{equation}
where $w(i,j)=\sum_{k'=2(i-1)}^{2j}a_{k'}$, and returns $D[1,n]$.
\citet{knuth1971optimum} first showed that computing this recurrence only needs $O(n^2)$ work, and later \citet{yao1980efficient} showed that this algorithm applies to any convex function $w(i,j)$.
Here let the best decision of a state $D[i,j]$ be the index $k$ that minimizes $D[i,j]$ in \cref{eqn:obst}.
In this algorithm, $D[i,j]$ depends on $D[i,j-1]$ (let $l$ be its best decision), $D[i+1,j]$ (let $r$ be its best decision), $D[i,l..r]$, and $D[l..r,j]$.
When applying \ouralgo{}, due to the dependence from $D[i,j]$ to $D[i,j-1]$ and $D[i+1,j]$, the $\delta$-th frontier contains the states $D[i,i+\delta]$.
Hence, although it results in optimal parallelization to the standard sequential algorithm, the algorithm requires $n-1$ rounds and thus has $O(n\log n)$ span. 
Achieving $o(n)$ span may need new insights to redesign the dependencies. 
}

\hide{
\myparagraph{GLWS on trees.}
We discuss how to apply the \ouralgo{} to a GLWS problem on a tree structure in \iffullversion{\cref{sec:app-tree}}\ifconference{the full version of this paper \cite{fullversion}}, which requires a list of algorithmic components such as the heavy-light decomposition, persistent data structures, range trees, and the Euler tour. 
We show that our parallel algorithm is nearly work-efficient ($\tilde{O}(n)$ work) and has perfect parallelism. 

\myparagraph{$k$-GLWS.}
$k$-GLWS is another commonly seen problem that can be solved by DP. 
It is similar as GLWS problem as in \cref{sec:post-office}, but requires a fixed number of $k$ clusters 
(e.g., exact $k$ post office in the running example). 
In this case, let $D[i,k']$ be the minimum cost for the first $i$ elements in $k'$ clusters, and the DP recurrence is $D[i,k']=\min_{j<i}{D[j,k'-1]+w(j,i)}$.
In \iffullversion{\cref{sec:k-lws}}\ifconference{the full version of this paper \cite{fullversion}} we show that applying \ouralgo{} here gives an algorithm with $\tilde{O}(nk)$ work and $\tilde{O}(k)$ span.

\myparagraph{Optimal Binary Search Tree (OBST).}
(Static) OBST is one of the earliest examples of DM optimization.
Given an array of frequency $a_{0..2n}$, it computes the recurrence 
\begin{equation}\label{eqn:obst}
  D[i,j]=\min_{j\le k<i}{D[j,k]+D[k+1,i]+w(i,j)}
\end{equation}
where $w(i,j)=\sum_{k'=2(i-1)}^{2j}a_{k'}$, and returns $D[1,n]$.
\citet{knuth1971optimum} first showed that computing this recurrence only needs $O(n^2)$ work, and later \citet{yao1980efficient} showed that this algorithm applies to any convex function $w(i,j)$.
Here let the best decision of a state $D[i,j]$ be the index $k$ that minimizes $D[i,j]$ in \cref{eqn:obst}.
In this algorithm, $D[i,j]$ depends on $D[i,j-1]$ (let $l$ be its best decision), $D[i+1,j]$ (let $r$ be its best decision), $D[i,l..r]$, and $D[l..r,j]$.
When applying \ouralgo{}, due to the dependence from $D[i,j]$ to $D[i,j-1]$ and $D[i+1,j]$, the $\delta$-th frontier contains the states $D[i,i+\delta]$.
Hence, although it results in optimal parallelization to the standard sequential algorithm, the algorithm requires $n-1$ rounds and thus has $O(n\log n)$ span. 
Achieving $o(n)$ span may need new insights to redesign the dependencies. 
}

\section{Experiments}\label{sec:exp}

To demonstrate the practicability of our new algorithms,
we designed experiments for LCS and convex GLWS. 
We implemented our parallel LCS algorithm and parallel GLWS algorithm in C++
using ParlayLib \cite{blelloch2020parlaylib} to support fork-join parallelism and some parallel primitives (e.g., reduce).
Our tests use a 96-core (192 hyperthreads) machine with four Intel Xeon Gold 6252 CPUs and 1.5 TB of main memory. 
We implemented two of our new algorithms (LCS and GLWS) as proofs-of-concept to show that our new techniques and algorithms remain practical due to work efficiency, which is also the main motivation of our paper.
We release our code at~\cite{dpdpcode}, which gives more details of our experiments.

\myparagraph{Parallel LCS.} While as one of the classic algorithmic problem, LCS is widely studied in parallel, the existing parallel implementations we know of
\cite{lu1994parallel,tchendji2020efficient,babu1997parallel,xu2005fast,apostolico1990efficient,chen2006fast,vahidi2023parallel}
do not take advantage of the sparsification. 
Most of them parallelizes the $\Theta(nm)$ algorithm for two strings with sizes $n$ and $m$,
and the experimental studies in these papers focus on input sizes up to $10^5$. 
They cannot process much larger instances in as the (sequential) sparse LCS algorithms (e.g.,~\cite{apostolico1987longest}) as discussed in \cref{sec:lis}.
Hence, in our experiments, our baseline is our implementation of the sparse LCS in~\cite{apostolico1987longest}, which can process our input instances with large size ($10^8$) and small number of effective edges $\totarrows$ (number of pairs $(i,j)$ such that $A[i]=B[j]$).

We test two random strings $A[1..n]$ and $B[1..n]$ with length $n=10^8$,
while controlling $\totarrows$ and $k$ (the LCS length). 
The pre-processing time to find all matching pairs is not counted into the running time.
\cref{fig:exp-lcs} shows the results when $\totarrows=10^8$ and $\totarrows=10^9$.

Overall, our algorithm has up to 30$\times$ speed up than the sequential version. Since the span of the algorithm is proportional to the LCS length $k$, the running time increases when $k$ increases. For $L=10^8$, the parallel running time stays competitive to the sequential running time until the extreme case $k=10^8$ (i.e., the two sequences are exactly the same). Since our algorithm has $O(\totarrows\log n)$ work and $O(k\log n)$ span, there is no parallelism when $\totarrows=k=10^8$. 
For $L=10^9$, where the total work is larger, our parallel algorithm is always faster than the sequential algorithm regardless of the value of $k$, and always achieves good parallelism. 

\myparagraph{Parallel GLWS.} For GLWS, we use the setting of the post-office problem described in \cref{sec:post-office}. 
We compare our parallel algorithm with the sequential solution in \cref{sec:post-office:prelim}, as well as our own algorithm running on one core.
We generate random data for $n=10^8$ and $10^9$, and 
use different weight functions to control the output size $k$,
which is the number of post offices in the solution.
\cref{fig:exp-glws} shows the result on different $n$ and $k$.
The time for sequential algorithm does not change significantly,
because it has $O(n\log n)$ work, which is independent of $k$.
For our algorithm, the running time varies with $k$ due to the $O(k\log^2n)$ span. 
When $k$ is small, our algorithm is 20$\times$ faster than the sequential algorithm
and achieves 30--40$\times$ self-relative speedup. 
Our parallel algorithm is faster than the sequential algorithm for $k<10^4$ when $n=10^8$, and is faster than the sequential algorithm until $k\approx 10^5$ when $n=10^9$. 

\begin{figure}[t]
\centering
\begin{minipage}[b]{1.0\columnwidth}
    \centering
    \includegraphics[width=\columnwidth]{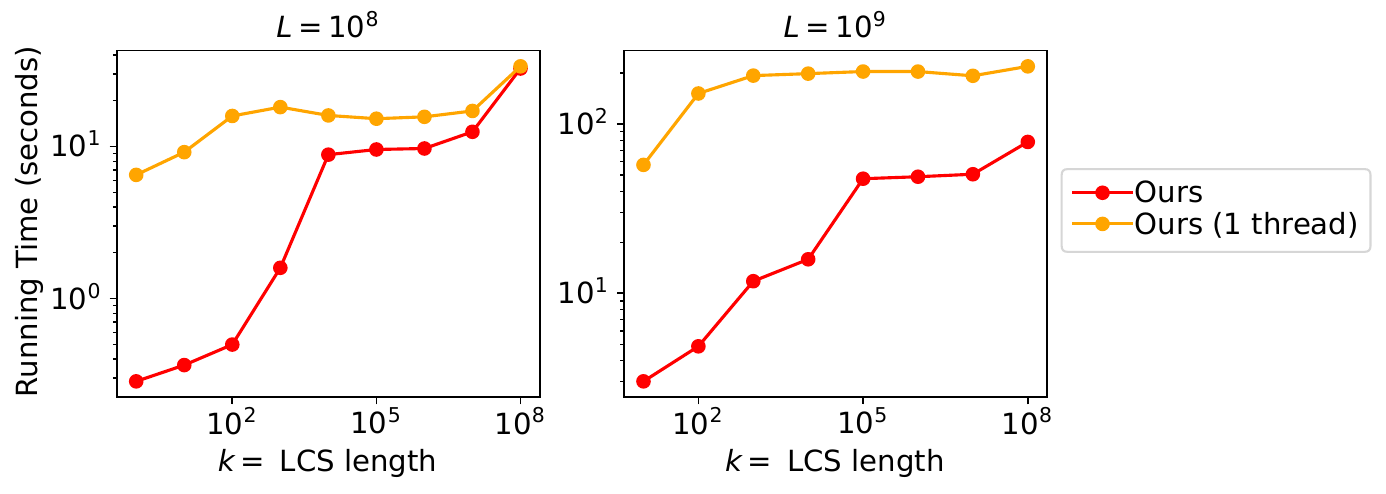}
    \caption{Running time of our parallel LCS algorithm (in \cref{sec:lis}).}
    \label{fig:exp-lcs}
\end{minipage}
\hfill
\begin{minipage}[b]{1.0\columnwidth}
    \centering
    \includegraphics[width=\columnwidth]{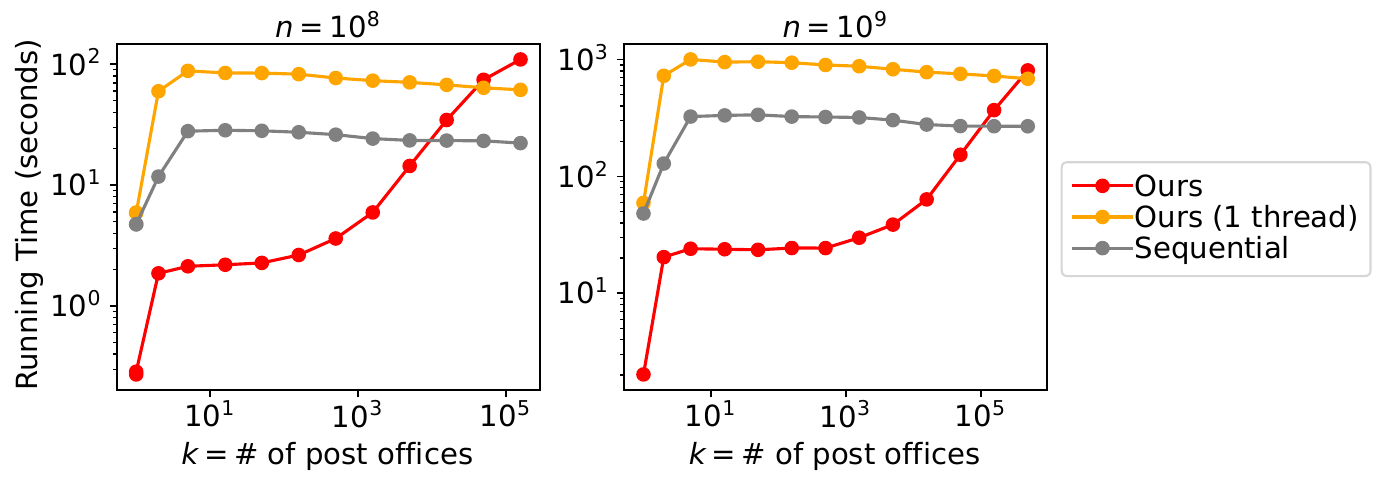}
    \caption{Running time of our parallel convex-GLWS algorithm (\cref{alg:dp-glws}).}
    \label{fig:exp-glws}
\end{minipage}
\end{figure}

\section{Conclusion and Future Work}\label{sec:conclusion}

We systematically studied general approaches to parallelize classic sequential dynamic programming algorithms, 
particularly those with non-trivial optimizations such as decision monotonicity and sparsification.
We showed a novel framework, the \ouralgo{}, and apply it to different DP recurrences.
Theoretically, we gave the concept of optimal parallelism and perfect parallelism of a sequential algorithm, 
and showed that with a careful design, we can achieve optimal parallelism for the classic sequential DP algorithms in a (nearly) work-efficient manner, and perfect parallelism for some instances.
Practically, we show that our carefully-designed techniques do not include much overhead, and can outperform the original sequential version in a wide variety of cases.

We believe that the techniques in this paper opens a list of interesting questions.
First, many of the new parallel algorithms are nearly work-efficient---we pick the most practical sequential algorithms
for each problem, but they can be off the best work bound by up to an $O(\log n)$ factor.
It is theoretically interesting to ask if we can match the best work bound in parallel. 
Second, among all these classic algorithms we looked at, one problem/algorithm that 
cannot be directly solved by the \ouralgo{} is the RNA Secondary Structure~\cite{eppstein1988speeding}. 
Here, we may have $2^n$ different paths in a DP DAG, so applying \ouralgo{} efficiently may need some complicated techniques.
Finally, we show how to faithfully parallelize the sequential DP algorithms. We are aware of other approaches~\cite{lu1994parallel,tchendji2020efficient,xu2005fast,apostolico1990efficient,krusche2010new,cao2023nearly} for LIS/LCS
that can achieve stronger worst-case span bounds using divide-and-conquer. 
Hence, an interesting direction is to see if we can redesign other DP algorithms in a similar form to achieve better worst-case span.

\section*{Acknowledgement}
This work is supported by NSF grants CCF-2103483, IIS-2227669, NSF CAREER Awards CCF-2238358 and CCF-2339310, the UCR Regents Faculty Development Award, and the Google Research Scholar Program.

\bibliographystyle{ACM-Reference-Format}
\ifconference{\balance}
\bibliography{bib/strings, bib/main, local}


\begin{thebibliography}{97}


\ifx \showCODEN    \undefined \def \showCODEN     #1{\unskip}     \fi
\ifx \showDOI      \undefined \def \showDOI       #1{#1}\fi
\ifx \showISBNx    \undefined \def \showISBNx     #1{\unskip}     \fi
\ifx \showISBNxiii \undefined \def \showISBNxiii  #1{\unskip}     \fi
\ifx \showISSN     \undefined \def \showISSN      #1{\unskip}     \fi
\ifx \showLCCN     \undefined \def \showLCCN      #1{\unskip}     \fi
\ifx \shownote     \undefined \def \shownote      #1{#1}          \fi
\ifx \showarticletitle \undefined \def \showarticletitle #1{#1}   \fi
\ifx \showURL      \undefined \def \showURL       {\relax}        \fi
\providecommand\bibfield[2]{#2}
\providecommand\bibinfo[2]{#2}
\providecommand\natexlab[1]{#1}
\providecommand\showeprint[2][]{arXiv:#2}

\bibitem[\protect\citeauthoryear{Aggarwal and Klawe}{Aggarwal and
  Klawe}{1990}]%
        {aggarwal1990applications}
\bibfield{author}{\bibinfo{person}{Alok Aggarwal} {and} \bibinfo{person}{Maria
  Klawe}.} \bibinfo{year}{1990}\natexlab{}.
\newblock \showarticletitle{Applications of generalized matrix searching to
  geometric algorithms}.
\newblock \bibinfo{journal}{\emph{Discrete Applied Mathematics}}
  \bibinfo{volume}{27}, \bibinfo{number}{1-2} (\bibinfo{year}{1990}),
  \bibinfo{pages}{3--23}.
\newblock


\bibitem[\protect\citeauthoryear{Aggarwal, Klawe, Moran, Shor, and
  Wilber}{Aggarwal et~al\mbox{.}}{1987}]%
        {aggarwal1987geometric}
\bibfield{author}{\bibinfo{person}{Alok Aggarwal}, \bibinfo{person}{Maria~M
  Klawe}, \bibinfo{person}{Shlomo Moran}, \bibinfo{person}{Peter Shor}, {and}
  \bibinfo{person}{Robert Wilber}.} \bibinfo{year}{1987}\natexlab{}.
\newblock \showarticletitle{Geometric applications of a matrix-searching
  algorithm}.
\newblock \bibinfo{journal}{\emph{Algorithmica}} \bibinfo{volume}{2},
  \bibinfo{number}{1} (\bibinfo{year}{1987}), \bibinfo{pages}{195--208}.
\newblock


\bibitem[\protect\citeauthoryear{Aimone, Parekh, Phillips, Pinar, Severa, and
  Xu}{Aimone et~al\mbox{.}}{2019}]%
        {aimone2019dynamic}
\bibfield{author}{\bibinfo{person}{James~B Aimone}, \bibinfo{person}{Ojas
  Parekh}, \bibinfo{person}{Cynthia~A Phillips}, \bibinfo{person}{Ali Pinar},
  \bibinfo{person}{William Severa}, {and} \bibinfo{person}{Helen Xu}.}
  \bibinfo{year}{2019}\natexlab{}.
\newblock \showarticletitle{Dynamic programming with spiking neural computing}.
  In \bibinfo{booktitle}{\emph{International Conference on Neuromorphic
  Systems}}. \bibinfo{pages}{1--9}.
\newblock


\bibitem[\protect\citeauthoryear{Alok and James}{Alok and James}{[n.d.]}]%
        {aloknotes}
\bibfield{author}{\bibinfo{person}{Aggarwal Alok} {and} \bibinfo{person}{Park
  James}.} \bibinfo{year}{[n.d.]}\natexlab{}.
\newblock \showarticletitle{Notes on searching in multidimensional monotone
  arrays}. In \bibinfo{booktitle}{\emph{{IEEE} Symposium on Foundations of
  Computer Science (FOCS)}}. \bibinfo{pages}{497--512}.
\newblock


\bibitem[\protect\citeauthoryear{Alves, C{\'a}ceres, and Dehne}{Alves
  et~al\mbox{.}}{2002}]%
        {alves2002parallel}
\bibfield{author}{\bibinfo{person}{Carlos~ER Alves},
  \bibinfo{person}{Edson~Norberto C{\'a}ceres}, {and} \bibinfo{person}{FKHA
  Dehne}.} \bibinfo{year}{2002}\natexlab{}.
\newblock \showarticletitle{Parallel dynamic programming for solving the string
  editing problem on a {CGM/BSP}}. In \bibinfo{booktitle}{\emph{{ACM} Symposium
  on Parallelism in Algorithms and Architectures (SPAA)}}.
  \bibinfo{pages}{275--281}.
\newblock


\bibitem[\protect\citeauthoryear{Apostolico, Atallah, Larmore, and
  McFaddin}{Apostolico et~al\mbox{.}}{1990}]%
        {apostolico1990efficient}
\bibfield{author}{\bibinfo{person}{Alberto Apostolico},
  \bibinfo{person}{Mikhail~J Atallah}, \bibinfo{person}{Lawrence~L Larmore},
  {and} \bibinfo{person}{Scott McFaddin}.} \bibinfo{year}{1990}\natexlab{}.
\newblock \showarticletitle{Efficient parallel algorithms for string editing
  and related problems}.
\newblock \bibinfo{journal}{\emph{{SIAM} J. on Computing}}
  \bibinfo{volume}{19}, \bibinfo{number}{5} (\bibinfo{year}{1990}),
  \bibinfo{pages}{968--988}.
\newblock


\bibitem[\protect\citeauthoryear{Apostolico and Guerra}{Apostolico and
  Guerra}{1987}]%
        {apostolico1987longest}
\bibfield{author}{\bibinfo{person}{Alberto Apostolico} {and}
  \bibinfo{person}{Concettina Guerra}.} \bibinfo{year}{1987}\natexlab{}.
\newblock \showarticletitle{The longest common subsequence problem revisited}.
\newblock \bibinfo{journal}{\emph{Algorithmica}}  \bibinfo{volume}{2}
  (\bibinfo{year}{1987}), \bibinfo{pages}{315--336}.
\newblock


\bibitem[\protect\citeauthoryear{Arora, Blumofe, and Plaxton}{Arora
  et~al\mbox{.}}{2001}]%
        {arora2001thread}
\bibfield{author}{\bibinfo{person}{Nimar~S Arora}, \bibinfo{person}{Robert~D
  Blumofe}, {and} \bibinfo{person}{C~Greg Plaxton}.}
  \bibinfo{year}{2001}\natexlab{}.
\newblock \showarticletitle{Thread scheduling for multiprogrammed
  multiprocessors}.
\newblock \bibinfo{journal}{\emph{Theory of Computing Systems (TOCS)}}
  \bibinfo{volume}{34}, \bibinfo{number}{2} (\bibinfo{year}{2001}),
  \bibinfo{pages}{115--144}.
\newblock


\bibitem[\protect\citeauthoryear{Atallah, Kosaraju, Larmore, Miller, and
  Teng}{Atallah et~al\mbox{.}}{1989}]%
        {atallah1989constructing}
\bibfield{author}{\bibinfo{person}{Mikhail~J Atallah}, \bibinfo{person}{S~Rao
  Kosaraju}, \bibinfo{person}{Lawrence~L Larmore}, \bibinfo{person}{Gary~L
  Miller}, {and} \bibinfo{person}{S-H Teng}.} \bibinfo{year}{1989}\natexlab{}.
\newblock \showarticletitle{Constructing trees in parallel}. In
  \bibinfo{booktitle}{\emph{{ACM} Symposium on Parallelism in Algorithms and
  Architectures (SPAA)}}. \bibinfo{pages}{421--431}.
\newblock


\bibitem[\protect\citeauthoryear{Awan, Deslippe, Buluc, Selvitopi, Hofmeyr,
  Oliker, and Yelick}{Awan et~al\mbox{.}}{2020}]%
        {awan2020adept}
\bibfield{author}{\bibinfo{person}{Muaaz~G Awan}, \bibinfo{person}{Jack
  Deslippe}, \bibinfo{person}{Aydin Buluc}, \bibinfo{person}{Oguz Selvitopi},
  \bibinfo{person}{Steven Hofmeyr}, \bibinfo{person}{Leonid Oliker}, {and}
  \bibinfo{person}{Katherine Yelick}.} \bibinfo{year}{2020}\natexlab{}.
\newblock \showarticletitle{ADEPT: a domain independent sequence alignment
  strategy for gpu architectures}.
\newblock \bibinfo{journal}{\emph{BMC bioinformatics}} \bibinfo{volume}{21},
  \bibinfo{number}{1} (\bibinfo{year}{2020}), \bibinfo{pages}{1--29}.
\newblock


\bibitem[\protect\citeauthoryear{Babu and Saxena}{Babu and Saxena}{1997}]%
        {babu1997parallel}
\bibfield{author}{\bibinfo{person}{K~Nandan Babu} {and}
  \bibinfo{person}{Sanjeev Saxena}.} \bibinfo{year}{1997}\natexlab{}.
\newblock \showarticletitle{Parallel algorithms for the longest common
  subsequence problem}. In \bibinfo{booktitle}{\emph{{IEEE} International
  Conference on High Performance Computing (HiPC)}}. IEEE,
  \bibinfo{pages}{120--125}.
\newblock


\bibitem[\protect\citeauthoryear{Bateni, Behnezhad, Derakhshan, Hajiaghayi, and
  Mirrokni}{Bateni et~al\mbox{.}}{2018}]%
        {bateni2018massively}
\bibfield{author}{\bibinfo{person}{MohammadHossein Bateni},
  \bibinfo{person}{Soheil Behnezhad}, \bibinfo{person}{Mahsa Derakhshan},
  \bibinfo{person}{MohammadTaghi Hajiaghayi}, {and} \bibinfo{person}{Vahab
  Mirrokni}.} \bibinfo{year}{2018}\natexlab{}.
\newblock \showarticletitle{Massively parallel dynamic programming on trees}.
\newblock \bibinfo{journal}{\emph{arXiv preprint arXiv:1809.03685}}
  (\bibinfo{year}{2018}).
\newblock


\bibitem[\protect\citeauthoryear{Bellman}{Bellman}{1954}]%
        {bellman1954theory}
\bibfield{author}{\bibinfo{person}{Richard Bellman}.}
  \bibinfo{year}{1954}\natexlab{}.
\newblock \showarticletitle{The theory of dynamic programming}.
\newblock \bibinfo{journal}{\emph{Bull. Amer. Math. Soc.}}
  \bibinfo{volume}{60}, \bibinfo{number}{6} (\bibinfo{year}{1954}),
  \bibinfo{pages}{503--515}.
\newblock


\bibitem[\protect\citeauthoryear{Blelloch, Ferizovic, and Sun}{Blelloch
  et~al\mbox{.}}{2022}]%
        {blelloch2022joinable}
\bibfield{author}{\bibinfo{person}{Guy Blelloch}, \bibinfo{person}{Daniel
  Ferizovic}, {and} \bibinfo{person}{Yihan Sun}.}
  \bibinfo{year}{2022}\natexlab{}.
\newblock \showarticletitle{Joinable Parallel Balanced Binary Trees}.
\newblock \bibinfo{journal}{\emph{{ACM} Transactions on Parallel Computing
  (TOPC)}} \bibinfo{volume}{9}, \bibinfo{number}{2} (\bibinfo{year}{2022}),
  \bibinfo{pages}{1--41}.
\newblock


\bibitem[\protect\citeauthoryear{Blelloch, Anderson, and Dhulipala}{Blelloch
  et~al\mbox{.}}{2020a}]%
        {blelloch2020parlaylib}
\bibfield{author}{\bibinfo{person}{Guy~E. Blelloch}, \bibinfo{person}{Daniel
  Anderson}, {and} \bibinfo{person}{Laxman Dhulipala}.}
  \bibinfo{year}{2020}\natexlab{a}.
\newblock \showarticletitle{ParlayLib --- a toolkit for parallel algorithms on
  shared-memory multicore machines}. In \bibinfo{booktitle}{\emph{{ACM}
  Symposium on Parallelism in Algorithms and Architectures (SPAA)}}.
  \bibinfo{pages}{507--509}.
\newblock


\bibitem[\protect\citeauthoryear{Blelloch, Ferizovic, and Sun}{Blelloch
  et~al\mbox{.}}{2016a}]%
        {blelloch2016just}
\bibfield{author}{\bibinfo{person}{Guy~E. Blelloch}, \bibinfo{person}{Daniel
  Ferizovic}, {and} \bibinfo{person}{Yihan Sun}.}
  \bibinfo{year}{2016}\natexlab{a}.
\newblock \showarticletitle{Just Join for Parallel Ordered Sets}. In
  \bibinfo{booktitle}{\emph{{ACM} Symposium on Parallelism in Algorithms and
  Architectures (SPAA)}}.
\newblock


\bibitem[\protect\citeauthoryear{Blelloch, Fineman, Gu, and Sun}{Blelloch
  et~al\mbox{.}}{2020b}]%
        {blelloch2020optimal}
\bibfield{author}{\bibinfo{person}{Guy~E. Blelloch}, \bibinfo{person}{Jeremy~T.
  Fineman}, \bibinfo{person}{Yan Gu}, {and} \bibinfo{person}{Yihan Sun}.}
  \bibinfo{year}{2020}\natexlab{b}.
\newblock \showarticletitle{Optimal parallel algorithms in the binary-forking
  model}. In \bibinfo{booktitle}{\emph{{ACM} Symposium on Parallelism in
  Algorithms and Architectures (SPAA)}}. \bibinfo{pages}{89--102}.
\newblock


\bibitem[\protect\citeauthoryear{Blelloch and Gu}{Blelloch and Gu}{2020}]%
        {BG2020}
\bibfield{author}{\bibinfo{person}{Guy~E. Blelloch} {and} \bibinfo{person}{Yan
  Gu}.} \bibinfo{year}{2020}\natexlab{}.
\newblock \showarticletitle{Improved Parallel Cache-Oblivious Algorithms for
  Dynamic Programming}. In \bibinfo{booktitle}{\emph{{SIAM} Symposium on
  Algorithmic Principles of Computer Systems (APOCS)}}.
\newblock


\bibitem[\protect\citeauthoryear{Blelloch, Gu, Shun, and Sun}{Blelloch
  et~al\mbox{.}}{2020c}]%
        {blelloch2016parallelism}
\bibfield{author}{\bibinfo{person}{Guy~E. Blelloch}, \bibinfo{person}{Yan Gu},
  \bibinfo{person}{Julian Shun}, {and} \bibinfo{person}{Yihan Sun}.}
  \bibinfo{year}{2020}\natexlab{c}.
\newblock \showarticletitle{Parallelism in Randomized Incremental Algorithms}.
\newblock \bibinfo{journal}{\emph{J. {ACM}}} \bibinfo{volume}{67},
  \bibinfo{number}{5} (\bibinfo{year}{2020}), \bibinfo{pages}{1--27}.
\newblock


\bibitem[\protect\citeauthoryear{Blelloch, Gu, Sun, and Tangwongsan}{Blelloch
  et~al\mbox{.}}{2016b}]%
        {blelloch2016parallel}
\bibfield{author}{\bibinfo{person}{Guy~E. Blelloch}, \bibinfo{person}{Yan Gu},
  \bibinfo{person}{Yihan Sun}, {and} \bibinfo{person}{Kanat Tangwongsan}.}
  \bibinfo{year}{2016}\natexlab{b}.
\newblock \showarticletitle{Parallel Shortest Paths Using Radius Stepping}. In
  \bibinfo{booktitle}{\emph{{ACM} Symposium on Parallelism in Algorithms and
  Architectures (SPAA)}}. \bibinfo{pages}{443--454}.
\newblock


\bibitem[\protect\citeauthoryear{Blumofe and Leiserson}{Blumofe and
  Leiserson}{1999}]%
        {blumofe1999scheduling}
\bibfield{author}{\bibinfo{person}{Robert~D. Blumofe} {and}
  \bibinfo{person}{Charles~E. Leiserson}.} \bibinfo{year}{1999}\natexlab{}.
\newblock \showarticletitle{Scheduling multithreaded computations by work
  stealing}.
\newblock \bibinfo{journal}{\emph{J. {ACM}}} \bibinfo{volume}{46},
  \bibinfo{number}{5} (\bibinfo{year}{1999}), \bibinfo{pages}{720--748}.
\newblock


\bibitem[\protect\citeauthoryear{Boroujeni and Seddighin}{Boroujeni and
  Seddighin}{2019}]%
        {boroujeni2019improved}
\bibfield{author}{\bibinfo{person}{Mahdi Boroujeni} {and}
  \bibinfo{person}{Saeed Seddighin}.} \bibinfo{year}{2019}\natexlab{}.
\newblock \showarticletitle{Improved {MPC} algorithms for edit distance and
  {U}lam distance}. In \bibinfo{booktitle}{\emph{{ACM} Symposium on Parallelism
  in Algorithms and Architectures (SPAA)}}. \bibinfo{pages}{31--40}.
\newblock


\bibitem[\protect\citeauthoryear{Bradford, Rawlins, and Shannon}{Bradford
  et~al\mbox{.}}{1998}]%
        {bradford1998efficient}
\bibfield{author}{\bibinfo{person}{Phillip~G Bradford},
  \bibinfo{person}{Gregory~JE Rawlins}, {and} \bibinfo{person}{Gregory~E
  Shannon}.} \bibinfo{year}{1998}\natexlab{}.
\newblock \showarticletitle{Efficient matrix chain ordering in polylog time}.
\newblock \bibinfo{journal}{\emph{{SIAM} J. on Computing}}
  \bibinfo{volume}{27}, \bibinfo{number}{2} (\bibinfo{year}{1998}),
  \bibinfo{pages}{466--490}.
\newblock


\bibitem[\protect\citeauthoryear{Cao, Huang, and Su}{Cao et~al\mbox{.}}{2023}]%
        {cao2023nearly}
\bibfield{author}{\bibinfo{person}{Nairen Cao}, \bibinfo{person}{Shang-En
  Huang}, {and} \bibinfo{person}{Hsin-Hao Su}.}
  \bibinfo{year}{2023}\natexlab{}.
\newblock \showarticletitle{Nearly optimal parallel algorithms for longest
  increasing subsequence}. In \bibinfo{booktitle}{\emph{{ACM} Symposium on
  Parallelism in Algorithms and Architectures (SPAA)}}.
\newblock


\bibitem[\protect\citeauthoryear{Chan and Lam}{Chan and Lam}{1990}]%
        {chan1990finding}
\bibfield{author}{\bibinfo{person}{Kwong-fai Chan} {and}
  \bibinfo{person}{Tak-wah Lam}.} \bibinfo{year}{1990}\natexlab{}.
\newblock \showarticletitle{Finding least-weight subsequences with fewer
  processors}. In \bibinfo{booktitle}{\emph{International Symposium on
  Algorithms}}. Springer, \bibinfo{pages}{318--327}.
\newblock


\bibitem[\protect\citeauthoryear{Chen, Wan, and Liu}{Chen
  et~al\mbox{.}}{2006}]%
        {chen2006fast}
\bibfield{author}{\bibinfo{person}{Yixin Chen}, \bibinfo{person}{Andrew Wan},
  {and} \bibinfo{person}{Wei Liu}.} \bibinfo{year}{2006}\natexlab{}.
\newblock \showarticletitle{A fast parallel algorithm for finding the longest
  common sequence of multiple biosequences}.
\newblock \bibinfo{journal}{\emph{BMC bioinformatics}}  \bibinfo{volume}{7}
  (\bibinfo{year}{2006}), \bibinfo{pages}{1--12}.
\newblock


\bibitem[\protect\citeauthoryear{Chowdhury, Ganapathi, Tithi, Bachmeier,
  Kuszmaul, Leiserson, Solar-Lezama, and Tang}{Chowdhury et~al\mbox{.}}{2016}]%
        {chowdhury2016autogen}
\bibfield{author}{\bibinfo{person}{Rezaul Chowdhury}, \bibinfo{person}{Pramod
  Ganapathi}, \bibinfo{person}{Jesmin~Jahan Tithi}, \bibinfo{person}{Charles
  Bachmeier}, \bibinfo{person}{Bradley~C Kuszmaul}, \bibinfo{person}{Charles~E
  Leiserson}, \bibinfo{person}{Armando Solar-Lezama}, {and}
  \bibinfo{person}{Yuan Tang}.} \bibinfo{year}{2016}\natexlab{}.
\newblock \showarticletitle{Autogen: Automatic discovery of cache-oblivious
  parallel recursive algorithms for solving dynamic programs}. In
  \bibinfo{booktitle}{\emph{{ACM} Symposium on Principles and Practice of
  Parallel Programming (PPOPP)}}. \bibinfo{pages}{10}.
\newblock


\bibitem[\protect\citeauthoryear{Chowdhury and Ramachandran}{Chowdhury and
  Ramachandran}{2006}]%
        {chowdhury2006cache}
\bibfield{author}{\bibinfo{person}{Rezaul~A. Chowdhury} {and}
  \bibinfo{person}{Vijaya Ramachandran}.} \bibinfo{year}{2006}\natexlab{}.
\newblock \showarticletitle{Cache-oblivious dynamic programming}. In
  \bibinfo{booktitle}{\emph{{ACM-SIAM} Symposium on Discrete Algorithms
  (SODA)}}. \bibinfo{pages}{591--600}.
\newblock


\bibitem[\protect\citeauthoryear{Chowdhury and Ramachandran}{Chowdhury and
  Ramachandran}{2008}]%
        {chowdhury2008cache}
\bibfield{author}{\bibinfo{person}{Rezaul~A. Chowdhury} {and}
  \bibinfo{person}{Vijaya Ramachandran}.} \bibinfo{year}{2008}\natexlab{}.
\newblock \showarticletitle{Cache-efficient dynamic programming algorithms for
  multicores}. In \bibinfo{booktitle}{\emph{{ACM} Symposium on Parallelism in
  Algorithms and Architectures (SPAA)}}. ACM.
\newblock


\bibitem[\protect\citeauthoryear{Chowdhury and Ramachandran}{Chowdhury and
  Ramachandran}{2010}]%
        {chowdhury2010cache}
\bibfield{author}{\bibinfo{person}{Rezaul~A. Chowdhury} {and}
  \bibinfo{person}{Vijaya Ramachandran}.} \bibinfo{year}{2010}\natexlab{}.
\newblock \showarticletitle{The cache-oblivious gaussian elimination paradigm:
  theoretical framework, parallelization and experimental evaluation}.
\newblock \bibinfo{journal}{\emph{Theory of Computing Systems (TOCS)}}
  \bibinfo{volume}{47}, \bibinfo{number}{4} (\bibinfo{year}{2010}),
  \bibinfo{pages}{878--919}.
\newblock


\bibitem[\protect\citeauthoryear{Czumaj}{Czumaj}{1992}]%
        {czumaj1992parallel}
\bibfield{author}{\bibinfo{person}{Artur Czumaj}.}
  \bibinfo{year}{1992}\natexlab{}.
\newblock \showarticletitle{Parallel algorithm for the matrix chain product
  problem}.
\newblock  (\bibinfo{year}{1992}).
\newblock


\bibitem[\protect\citeauthoryear{Davis}{Davis}{1998}]%
        {davis1998hu}
\bibfield{author}{\bibinfo{person}{Sashka Davis}.}
  \bibinfo{year}{1998}\natexlab{}.
\newblock \showarticletitle{Hu-Tucker alogorithm for building optimal
  alphabetic binary search trees}.
\newblock  (\bibinfo{year}{1998}).
\newblock


\bibitem[\protect\citeauthoryear{Dhulipala, Blelloch, and Shun}{Dhulipala
  et~al\mbox{.}}{2021}]%
        {gbbs2021}
\bibfield{author}{\bibinfo{person}{Laxman Dhulipala}, \bibinfo{person}{Guy~E.
  Blelloch}, {and} \bibinfo{person}{Julian Shun}.}
  \bibinfo{year}{2021}\natexlab{}.
\newblock \showarticletitle{Theoretically efficient parallel graph algorithms
  can be fast and scalable}.
\newblock \bibinfo{journal}{\emph{{ACM} Transactions on Parallel Computing
  (TOPC)}} \bibinfo{volume}{8}, \bibinfo{number}{1} (\bibinfo{year}{2021}),
  \bibinfo{pages}{1--70}.
\newblock


\bibitem[\protect\citeauthoryear{Ding, Dong, Gu, Sun, and Liu}{Ding
  et~al\mbox{.}}{2023}]%
        {ding2023efficient}
\bibfield{author}{\bibinfo{person}{Xiangyun Ding}, \bibinfo{person}{Xiaojun
  Dong}, \bibinfo{person}{Yan Gu}, \bibinfo{person}{Yihan Sun}, {and}
  \bibinfo{person}{Youzhe Liu}.} \bibinfo{year}{2023}\natexlab{}.
\newblock \showarticletitle{Efficient Parallel Output-Sensitive Edit Distance}.
  In \bibinfo{booktitle}{\emph{European Symposium on Algorithms (ESA)}}.
\newblock


\bibitem[\protect\citeauthoryear{Ding, Gu, and Sun}{Ding et~al\mbox{.}}{2024}]%
        {dpdpcode}
\bibfield{author}{\bibinfo{person}{Xiangyun Ding}, \bibinfo{person}{Yan Gu},
  {and} \bibinfo{person}{Yihan Sun}.} \bibinfo{year}{2024}\natexlab{}.
\newblock \bibinfo{title}{Source Code}.
\newblock
  \bibinfo{howpublished}{\url{https://github.com/ucrparlay/Parallel-Work-Efficient-Dynamic-Programming}}.
\newblock


\bibitem[\protect\citeauthoryear{Driscoll, Sarnak, Sleator, and
  Tarjan}{Driscoll et~al\mbox{.}}{1989}]%
        {persistence}
\bibfield{author}{\bibinfo{person}{James~R. Driscoll}, \bibinfo{person}{Neil
  Sarnak}, \bibinfo{person}{Daniel~D. Sleator}, {and}
  \bibinfo{person}{Robert~E. Tarjan}.} \bibinfo{year}{1989}\natexlab{}.
\newblock \showarticletitle{Making data structures persistent}.
\newblock \bibinfo{journal}{\emph{J. Computer and System Sciences}}
  \bibinfo{volume}{38}, \bibinfo{number}{1} (\bibinfo{year}{1989}),
  \bibinfo{pages}{86--124}.
\newblock


\bibitem[\protect\citeauthoryear{Eppstein}{Eppstein}{1990}]%
        {eppstein1990sequence}
\bibfield{author}{\bibinfo{person}{David Eppstein}.}
  \bibinfo{year}{1990}\natexlab{}.
\newblock \showarticletitle{Sequence comparison with mixed convex and concave
  costs}.
\newblock \bibinfo{journal}{\emph{J. Algorithms}} \bibinfo{volume}{11},
  \bibinfo{number}{1} (\bibinfo{year}{1990}), \bibinfo{pages}{85--101}.
\newblock


\bibitem[\protect\citeauthoryear{Eppstein, Galil, and Giancarlo}{Eppstein
  et~al\mbox{.}}{1988}]%
        {eppstein1988speeding}
\bibfield{author}{\bibinfo{person}{David Eppstein}, \bibinfo{person}{Zvi
  Galil}, {and} \bibinfo{person}{Raffaele Giancarlo}.}
  \bibinfo{year}{1988}\natexlab{}.
\newblock \showarticletitle{Speeding up dynamic programming}. In
  \bibinfo{booktitle}{\emph{{IEEE} Symposium on Foundations of Computer Science
  (FOCS)}}. \bibinfo{pages}{488--496}.
\newblock


\bibitem[\protect\citeauthoryear{Eppstein, Galil, Giancarlo, and
  Italiano}{Eppstein et~al\mbox{.}}{1990}]%
        {eppstein1990sparse}
\bibfield{author}{\bibinfo{person}{David Eppstein}, \bibinfo{person}{Zvi
  Galil}, \bibinfo{person}{Raffaele Giancarlo}, {and}
  \bibinfo{person}{Giuseppe~F Italiano}.} \bibinfo{year}{1990}\natexlab{}.
\newblock \showarticletitle{Sparse dynamic programming}. In
  \bibinfo{booktitle}{\emph{{ACM-SIAM} Symposium on Discrete Algorithms
  (SODA)}}. \bibinfo{pages}{513--522}.
\newblock


\bibitem[\protect\citeauthoryear{Eppstein, Galil, Giancarlo, and
  Italiano}{Eppstein et~al\mbox{.}}{1992a}]%
        {eppstein1992sparse}
\bibfield{author}{\bibinfo{person}{David Eppstein}, \bibinfo{person}{Zvi
  Galil}, \bibinfo{person}{Raffaele Giancarlo}, {and}
  \bibinfo{person}{Giuseppe~F Italiano}.} \bibinfo{year}{1992}\natexlab{a}.
\newblock \showarticletitle{Sparse dynamic programming I: linear cost
  functions}.
\newblock \bibinfo{journal}{\emph{J. {ACM}}} \bibinfo{volume}{39},
  \bibinfo{number}{3} (\bibinfo{year}{1992}), \bibinfo{pages}{519--545}.
\newblock


\bibitem[\protect\citeauthoryear{Eppstein, Galil, Giancarlo, and
  Italiano}{Eppstein et~al\mbox{.}}{1992b}]%
        {eppstein1992sparse2}
\bibfield{author}{\bibinfo{person}{David Eppstein}, \bibinfo{person}{Zvi
  Galil}, \bibinfo{person}{Raffaele Giancarlo}, {and}
  \bibinfo{person}{Giuseppe~F Italiano}.} \bibinfo{year}{1992}\natexlab{b}.
\newblock \showarticletitle{Sparse dynamic programming II: convex and concave
  cost functions}.
\newblock \bibinfo{journal}{\emph{J. {ACM}}} \bibinfo{volume}{39},
  \bibinfo{number}{3} (\bibinfo{year}{1992}), \bibinfo{pages}{546--567}.
\newblock


\bibitem[\protect\citeauthoryear{Galil and Giancarlo}{Galil and
  Giancarlo}{1989}]%
        {galil1989speeding}
\bibfield{author}{\bibinfo{person}{Zvi Galil} {and} \bibinfo{person}{Raffaele
  Giancarlo}.} \bibinfo{year}{1989}\natexlab{}.
\newblock \showarticletitle{Speeding up dynamic programming with applications
  to molecular biology}.
\newblock \bibinfo{journal}{\emph{Theoretical Computer Science (TCS)}}
  \bibinfo{volume}{64}, \bibinfo{number}{1} (\bibinfo{year}{1989}),
  \bibinfo{pages}{107--118}.
\newblock


\bibitem[\protect\citeauthoryear{Galil and Park}{Galil and Park}{1989}]%
        {galil1989linear}
\bibfield{author}{\bibinfo{person}{Zvi Galil} {and} \bibinfo{person}{Kunsoo
  Park}.} \bibinfo{year}{1989}\natexlab{}.
\newblock \showarticletitle{A linear-time algorithm for concave one-dimensional
  dynamic programming}.
\newblock  (\bibinfo{year}{1989}).
\newblock


\bibitem[\protect\citeauthoryear{Galil and Park}{Galil and Park}{1992}]%
        {galil1992dynamic}
\bibfield{author}{\bibinfo{person}{Zvi Galil} {and} \bibinfo{person}{Kunsoo
  Park}.} \bibinfo{year}{1992}\natexlab{}.
\newblock \showarticletitle{Dynamic programming with convexity, concavity and
  sparsity}.
\newblock \bibinfo{journal}{\emph{Theoretical Computer Science}}
  \bibinfo{volume}{92}, \bibinfo{number}{1} (\bibinfo{year}{1992}),
  \bibinfo{pages}{49--76}.
\newblock


\bibitem[\protect\citeauthoryear{Galil and Park}{Galil and Park}{1994}]%
        {galil1994parallel}
\bibfield{author}{\bibinfo{person}{Zvi Galil} {and} \bibinfo{person}{Kunsoo
  Park}.} \bibinfo{year}{1994}\natexlab{}.
\newblock \showarticletitle{Parallel algorithms for dynamic programming
  recurrences with more than {O}(1) dependency}.
\newblock \bibinfo{journal}{\emph{J. Parallel Distrib. Comput.}}
  \bibinfo{volume}{21}, \bibinfo{number}{2} (\bibinfo{year}{1994}),
  \bibinfo{pages}{213--222}.
\newblock


\bibitem[\protect\citeauthoryear{Garsia and Wachs}{Garsia and Wachs}{1977}]%
        {garsia1977new}
\bibfield{author}{\bibinfo{person}{Adriano~M Garsia} {and}
  \bibinfo{person}{Michelle~L Wachs}.} \bibinfo{year}{1977}\natexlab{}.
\newblock \showarticletitle{A new algorithm for minimal binary search trees}.
\newblock \bibinfo{journal}{\emph{SIAM J. Comput.}} \bibinfo{volume}{6},
  \bibinfo{number}{4} (\bibinfo{year}{1977}), \bibinfo{pages}{622--642}.
\newblock


\bibitem[\protect\citeauthoryear{Gu, Men, Shen, Sun, and Wan}{Gu
  et~al\mbox{.}}{2023}]%
        {gu2023parallel}
\bibfield{author}{\bibinfo{person}{Yan Gu}, \bibinfo{person}{Ziyang Men},
  \bibinfo{person}{Zheqi Shen}, \bibinfo{person}{Yihan Sun}, {and}
  \bibinfo{person}{Zijin Wan}.} \bibinfo{year}{2023}\natexlab{}.
\newblock \showarticletitle{Parallel Longest Increasing Subsequence and van
  Emde Boas Trees}. In \bibinfo{booktitle}{\emph{{ACM} Symposium on Parallelism
  in Algorithms and Architectures (SPAA)}}.
\newblock


\bibitem[\protect\citeauthoryear{Gu, Napier, and Sun}{Gu
  et~al\mbox{.}}{2022a}]%
        {gu2022analysis}
\bibfield{author}{\bibinfo{person}{Yan Gu}, \bibinfo{person}{Zachary Napier},
  {and} \bibinfo{person}{Yihan Sun}.} \bibinfo{year}{2022}\natexlab{a}.
\newblock \showarticletitle{Analysis of Work-Stealing and Parallel Cache
  Complexity}. In \bibinfo{booktitle}{\emph{{SIAM} Symposium on Algorithmic
  Principles of Computer Systems (APOCS)}}. SIAM, \bibinfo{pages}{46--60}.
\newblock


\bibitem[\protect\citeauthoryear{Gu, Napier, Sun, and Wang}{Gu
  et~al\mbox{.}}{2022b}]%
        {gu2022parallel}
\bibfield{author}{\bibinfo{person}{Yan Gu}, \bibinfo{person}{Zachary Napier},
  \bibinfo{person}{Yihan Sun}, {and} \bibinfo{person}{Letong Wang}.}
  \bibinfo{year}{2022}\natexlab{b}.
\newblock \showarticletitle{Parallel Cover Trees and their Applications}. In
  \bibinfo{booktitle}{\emph{{ACM} Symposium on Parallelism in Algorithms and
  Architectures (SPAA)}}. \bibinfo{pages}{259--272}.
\newblock


\bibitem[\protect\citeauthoryear{Gupta, Latypov, Maus, Pai, S{\"a}rkk{\"a},
  Studen{\`y}, Suomela, Uitto, and Vahidi}{Gupta et~al\mbox{.}}{2023}]%
        {gupta2023fast}
\bibfield{author}{\bibinfo{person}{Chetan Gupta}, \bibinfo{person}{Rustam
  Latypov}, \bibinfo{person}{Yannic Maus}, \bibinfo{person}{Shreyas Pai},
  \bibinfo{person}{Simo S{\"a}rkk{\"a}}, \bibinfo{person}{Jan Studen{\`y}},
  \bibinfo{person}{Jukka Suomela}, \bibinfo{person}{Jara Uitto}, {and}
  \bibinfo{person}{Hossein Vahidi}.} \bibinfo{year}{2023}\natexlab{}.
\newblock \showarticletitle{Fast Dynamic Programming in Trees in the MPC
  Model}. In \bibinfo{booktitle}{\emph{{ACM} Symposium on Parallelism in
  Algorithms and Architectures (SPAA)}}. \bibinfo{pages}{443--453}.
\newblock


\bibitem[\protect\citeauthoryear{Hirschberg}{Hirschberg}{1977}]%
        {hirschberg1977algorithms}
\bibfield{author}{\bibinfo{person}{Daniel~S Hirschberg}.}
  \bibinfo{year}{1977}\natexlab{}.
\newblock \showarticletitle{Algorithms for the longest common subsequence
  problem}.
\newblock \bibinfo{journal}{\emph{J. {ACM}}} \bibinfo{volume}{24},
  \bibinfo{number}{4} (\bibinfo{year}{1977}), \bibinfo{pages}{664--675}.
\newblock


\bibitem[\protect\citeauthoryear{Hirschberg and Larmore}{Hirschberg and
  Larmore}{1987}]%
        {hirschberg1987least}
\bibfield{author}{\bibinfo{person}{Daniel~S Hirschberg} {and}
  \bibinfo{person}{Lawrence~L. Larmore}.} \bibinfo{year}{1987}\natexlab{}.
\newblock \showarticletitle{The least weight subsequence problem}.
\newblock \bibinfo{journal}{\emph{{SIAM} J. on Computing}}
  \bibinfo{volume}{16}, \bibinfo{number}{4} (\bibinfo{year}{1987}),
  \bibinfo{pages}{628--638}.
\newblock


\bibitem[\protect\citeauthoryear{Hu and Tucker}{Hu and Tucker}{1971}]%
        {hu1971optimal}
\bibfield{author}{\bibinfo{person}{Te~C Hu} {and} \bibinfo{person}{Alan~C
  Tucker}.} \bibinfo{year}{1971}\natexlab{}.
\newblock \showarticletitle{Optimal computer search trees and variable-length
  alphabetical codes}.
\newblock \bibinfo{journal}{\emph{SIAM J. Appl. Math.}} \bibinfo{volume}{21},
  \bibinfo{number}{4} (\bibinfo{year}{1971}), \bibinfo{pages}{514--532}.
\newblock


\bibitem[\protect\citeauthoryear{Huang, Liu, and Viswanathan}{Huang
  et~al\mbox{.}}{1994}]%
        {huang1994parallel}
\bibfield{author}{\bibinfo{person}{S-HS Huang}, \bibinfo{person}{Hongfei Liu},
  {and} \bibinfo{person}{Venkatraman Viswanathan}.}
  \bibinfo{year}{1994}\natexlab{}.
\newblock \showarticletitle{Parallel dynamic programming}.
\newblock \bibinfo{journal}{\emph{{IEEE} International Parallel and Distributed
  Processing Symposium (IPDPS)}} \bibinfo{volume}{5}, \bibinfo{number}{3}
  (\bibinfo{year}{1994}), \bibinfo{pages}{326--328}.
\newblock


\bibitem[\protect\citeauthoryear{Huffman}{Huffman}{1952}]%
        {huffman1952method}
\bibfield{author}{\bibinfo{person}{David~A. Huffman}.}
  \bibinfo{year}{1952}\natexlab{}.
\newblock \showarticletitle{A method for the construction of minimum-redundancy
  codes}.
\newblock \bibinfo{journal}{\emph{Proceedings of the IRE}}
  \bibinfo{volume}{40}, \bibinfo{number}{9} (\bibinfo{year}{1952}),
  \bibinfo{pages}{1098--1101}.
\newblock


\bibitem[\protect\citeauthoryear{Hunt and Szymanski}{Hunt and
  Szymanski}{1977}]%
        {hunt1977fast}
\bibfield{author}{\bibinfo{person}{James~W Hunt} {and}
  \bibinfo{person}{Thomas~G Szymanski}.} \bibinfo{year}{1977}\natexlab{}.
\newblock \showarticletitle{A fast algorithm for computing longest common
  subsequences}.
\newblock \bibinfo{journal}{\emph{Commun. ACM}} \bibinfo{volume}{20},
  \bibinfo{number}{5} (\bibinfo{year}{1977}), \bibinfo{pages}{350--353}.
\newblock


\bibitem[\protect\citeauthoryear{Im, Moseley, and Sun}{Im
  et~al\mbox{.}}{2017}]%
        {im2017efficient}
\bibfield{author}{\bibinfo{person}{Sungjin Im}, \bibinfo{person}{Benjamin
  Moseley}, {and} \bibinfo{person}{Xiaorui Sun}.}
  \bibinfo{year}{2017}\natexlab{}.
\newblock \showarticletitle{Efficient massively parallel methods for dynamic
  programming}. In \bibinfo{booktitle}{\emph{{ACM} Symposium on Theory of
  Computing (STOC)}}. \bibinfo{pages}{798--811}.
\newblock


\bibitem[\protect\citeauthoryear{Itai}{Itai}{1976}]%
        {itai1976optimal}
\bibfield{author}{\bibinfo{person}{Alon Itai}.}
  \bibinfo{year}{1976}\natexlab{}.
\newblock \showarticletitle{Optimal alphabetic trees}.
\newblock \bibinfo{journal}{\emph{SIAM J. Comput.}} \bibinfo{volume}{5},
  \bibinfo{number}{1} (\bibinfo{year}{1976}), \bibinfo{pages}{9--18}.
\newblock


\bibitem[\protect\citeauthoryear{Itzhaky, Singh, Solar-Lezama, Yessenov, Lu,
  Leiserson, and Chowdhury}{Itzhaky et~al\mbox{.}}{2016}]%
        {itzhaky2016deriving}
\bibfield{author}{\bibinfo{person}{Shachar Itzhaky}, \bibinfo{person}{Rohit
  Singh}, \bibinfo{person}{Armando Solar-Lezama}, \bibinfo{person}{Kuat
  Yessenov}, \bibinfo{person}{Yongquan Lu}, \bibinfo{person}{Charles
  Leiserson}, {and} \bibinfo{person}{Rezaul Chowdhury}.}
  \bibinfo{year}{2016}\natexlab{}.
\newblock \showarticletitle{Deriving divide-and-conquer dynamic programming
  algorithms using solver-aided transformations}. In
  \bibinfo{booktitle}{\emph{Symposium on Object-oriented Programming, Systems,
  Languages and Applications (OOPSLA)}}. \bibinfo{pages}{145--164}.
\newblock


\bibitem[\protect\citeauthoryear{Javanmard, Ganapathr, Das, Ahmad, Tschudi, and
  Chowdhury}{Javanmard et~al\mbox{.}}{2019}]%
        {javanmard2019toward}
\bibfield{author}{\bibinfo{person}{Mohammad~Mahdi Javanmard},
  \bibinfo{person}{Pramod Ganapathr}, \bibinfo{person}{Rathish Das},
  \bibinfo{person}{Zafar Ahmad}, \bibinfo{person}{Stephen Tschudi}, {and}
  \bibinfo{person}{Rezaul Chowdhury}.} \bibinfo{year}{2019}\natexlab{}.
\newblock \showarticletitle{Toward efficient architecture-independent
  algorithms for dynamic programs: poster}. In
  \bibinfo{booktitle}{\emph{Symposium on Principles and Practice of Parallel
  Programming (PPoPP)}}. \bibinfo{pages}{413--414}.
\newblock


\bibitem[\protect\citeauthoryear{Karpinski, Larmore, and Rytter}{Karpinski
  et~al\mbox{.}}{1997}]%
        {karpinski1997correctness}
\bibfield{author}{\bibinfo{person}{Marek Karpinski},
  \bibinfo{person}{Lawrence~L Larmore}, {and} \bibinfo{person}{Wojciech
  Rytter}.} \bibinfo{year}{1997}\natexlab{}.
\newblock \showarticletitle{Correctness of constructing optimal alphabetic
  trees revisited}.
\newblock \bibinfo{journal}{\emph{Theoretical Computer Science}}
  \bibinfo{volume}{180}, \bibinfo{number}{1-2} (\bibinfo{year}{1997}),
  \bibinfo{pages}{309--324}.
\newblock


\bibitem[\protect\citeauthoryear{Klawe}{Klawe}{1989}]%
        {klawe1989simple}
\bibfield{author}{\bibinfo{person}{Maria~M. Klawe}.}
  \bibinfo{year}{1989}\natexlab{}.
\newblock \bibinfo{booktitle}{\emph{A simple linear time algorithm for concave
  one-dimensional dynamic programming}}.
\newblock \bibinfo{publisher}{University of British Columbia Vancouver}.
\newblock


\bibitem[\protect\citeauthoryear{Klawe and Kleitman}{Klawe and
  Kleitman}{1990}]%
        {klawe1990almost}
\bibfield{author}{\bibinfo{person}{Maria~M Klawe} {and}
  \bibinfo{person}{Daniel~J Kleitman}.} \bibinfo{year}{1990}\natexlab{}.
\newblock \showarticletitle{An almost linear time algorithm for generalized
  matrix searching}.
\newblock \bibinfo{journal}{\emph{SIAM Journal on Discrete Mathematics}}
  \bibinfo{volume}{3}, \bibinfo{number}{1} (\bibinfo{year}{1990}),
  \bibinfo{pages}{81--97}.
\newblock


\bibitem[\protect\citeauthoryear{Knuth}{Knuth}{1971}]%
        {knuth1971optimum}
\bibfield{author}{\bibinfo{person}{Donald~E. Knuth}.}
  \bibinfo{year}{1971}\natexlab{}.
\newblock \showarticletitle{Optimum binary search trees}.
\newblock \bibinfo{journal}{\emph{Acta informatica}}  \bibinfo{volume}{1}
  (\bibinfo{year}{1971}), \bibinfo{pages}{14--25}.
\newblock


\bibitem[\protect\citeauthoryear{Knuth}{Knuth}{1973}]%
        {Knuth73vol3}
\bibfield{author}{\bibinfo{person}{Donald~E. Knuth}.}
  \bibinfo{year}{1973}\natexlab{}.
\newblock \bibinfo{booktitle}{\emph{The Art of Computer Programming, Volume
  III: Sorting and Searching}}.
\newblock \bibinfo{publisher}{Addison-Wesley}.
\newblock


\bibitem[\protect\citeauthoryear{Knuth and Plass}{Knuth and Plass}{1981}]%
        {knuth1981breaking}
\bibfield{author}{\bibinfo{person}{Donald~E. Knuth} {and}
  \bibinfo{person}{Michael~F. Plass}.} \bibinfo{year}{1981}\natexlab{}.
\newblock \showarticletitle{Breaking paragraphs into lines}.
\newblock \bibinfo{journal}{\emph{Software: Practice and Experience}}
  \bibinfo{volume}{11}, \bibinfo{number}{11} (\bibinfo{year}{1981}),
  \bibinfo{pages}{1119--1184}.
\newblock


\bibitem[\protect\citeauthoryear{Krusche and Tiskin}{Krusche and
  Tiskin}{2009}]%
        {krusche2009parallel}
\bibfield{author}{\bibinfo{person}{Peter Krusche} {and}
  \bibinfo{person}{Alexander Tiskin}.} \bibinfo{year}{2009}\natexlab{}.
\newblock \showarticletitle{Parallel longest increasing subsequences in
  scalable time and memory}. In \bibinfo{booktitle}{\emph{International
  Conference on Parallel Processing and Applied Mathematics}}. Springer,
  \bibinfo{pages}{176--185}.
\newblock


\bibitem[\protect\citeauthoryear{Krusche and Tiskin}{Krusche and
  Tiskin}{2010}]%
        {krusche2010new}
\bibfield{author}{\bibinfo{person}{Peter Krusche} {and}
  \bibinfo{person}{Alexander Tiskin}.} \bibinfo{year}{2010}\natexlab{}.
\newblock \showarticletitle{New algorithms for efficient parallel string
  comparison}. In \bibinfo{booktitle}{\emph{{ACM} Symposium on Parallelism in
  Algorithms and Architectures (SPAA)}}. \bibinfo{pages}{209--216}.
\newblock


\bibitem[\protect\citeauthoryear{Larmore and Przytycka}{Larmore and
  Przytycka}{1995}]%
        {larmore1995constructing}
\bibfield{author}{\bibinfo{person}{Lawrence~L Larmore} {and}
  \bibinfo{person}{Teresa~M Przytycka}.} \bibinfo{year}{1995}\natexlab{}.
\newblock \showarticletitle{Constructing Huffman trees in parallel}.
\newblock \bibinfo{journal}{\emph{{SIAM} J. on Computing}}
  \bibinfo{volume}{24}, \bibinfo{number}{6} (\bibinfo{year}{1995}),
  \bibinfo{pages}{1163--1169}.
\newblock


\bibitem[\protect\citeauthoryear{Larmore and Przytycka}{Larmore and
  Przytycka}{1996}]%
        {larmore1996parallel}
\bibfield{author}{\bibinfo{person}{Lawrence~L Larmore} {and}
  \bibinfo{person}{Teresa~M Przytycka}.} \bibinfo{year}{1996}\natexlab{}.
\newblock \showarticletitle{A parallel algorithm for optimum height-limited
  alphabetic binary trees}.
\newblock \bibinfo{journal}{\emph{J. Parallel and Distrib. Comput.}}
  \bibinfo{volume}{35}, \bibinfo{number}{1} (\bibinfo{year}{1996}),
  \bibinfo{pages}{49--56}.
\newblock


\bibitem[\protect\citeauthoryear{Larmore and Przytycka}{Larmore and
  Przytycka}{1998}]%
        {larmore1998optimal}
\bibfield{author}{\bibinfo{person}{Lawrence~L Larmore} {and}
  \bibinfo{person}{Teresa~M Przytycka}.} \bibinfo{year}{1998}\natexlab{}.
\newblock \showarticletitle{The optimal alphabetic tree problem revisited}.
\newblock \bibinfo{journal}{\emph{Journal of Algorithms}} \bibinfo{volume}{28},
  \bibinfo{number}{1} (\bibinfo{year}{1998}), \bibinfo{pages}{1--20}.
\newblock


\bibitem[\protect\citeauthoryear{Larmore, Przytycka, and Rytter}{Larmore
  et~al\mbox{.}}{1993}]%
        {larmore1993parallel}
\bibfield{author}{\bibinfo{person}{Lawrence~L Larmore},
  \bibinfo{person}{Teresa~M Przytycka}, {and} \bibinfo{person}{Wojciech
  Rytter}.} \bibinfo{year}{1993}\natexlab{}.
\newblock \showarticletitle{Parallel construction of optimal alphabetic trees}.
  In \bibinfo{booktitle}{\emph{ACM symposium on Parallel Algorithms and
  Architectures (SPAA)}}. \bibinfo{pages}{214--223}.
\newblock


\bibitem[\protect\citeauthoryear{Larmore and Rytter}{Larmore and
  Rytter}{1994}]%
        {larmore1994optimal}
\bibfield{author}{\bibinfo{person}{Lawrence~L Larmore} {and}
  \bibinfo{person}{Wojciech Rytter}.} \bibinfo{year}{1994}\natexlab{}.
\newblock \showarticletitle{An optimal sublinear time parallel algorithm for
  some dynamic programming problems}.
\newblock \bibinfo{journal}{\emph{Inform. Process. Lett.}}
  \bibinfo{volume}{52}, \bibinfo{number}{1} (\bibinfo{year}{1994}),
  \bibinfo{pages}{31--34}.
\newblock


\bibitem[\protect\citeauthoryear{Larmore and Schieber}{Larmore and
  Schieber}{1991}]%
        {larmore1991line}
\bibfield{author}{\bibinfo{person}{Lawrence~L Larmore} {and}
  \bibinfo{person}{Baruch Schieber}.} \bibinfo{year}{1991}\natexlab{}.
\newblock \showarticletitle{On-line dynamic programming with applications to
  the prediction of RNA secondary structure}.
\newblock \bibinfo{journal}{\emph{J. Algorithms}} \bibinfo{volume}{12},
  \bibinfo{number}{3} (\bibinfo{year}{1991}), \bibinfo{pages}{490--515}.
\newblock


\bibitem[\protect\citeauthoryear{Li, Wei, Li, Wang, and Yeh}{Li
  et~al\mbox{.}}{2014}]%
        {li2014parallel}
\bibfield{author}{\bibinfo{person}{Xiang Li}, \bibinfo{person}{Jiahua Wei},
  \bibinfo{person}{Tiejian Li}, \bibinfo{person}{Guangqian Wang}, {and}
  \bibinfo{person}{William W-G Yeh}.} \bibinfo{year}{2014}\natexlab{}.
\newblock \showarticletitle{A parallel dynamic programming algorithm for
  multi-reservoir system optimization}.
\newblock \bibinfo{journal}{\emph{Advances in water resources}}
  \bibinfo{volume}{67} (\bibinfo{year}{2014}), \bibinfo{pages}{1--15}.
\newblock


\bibitem[\protect\citeauthoryear{Lu and Lin}{Lu and Lin}{1994}]%
        {lu1994parallel}
\bibfield{author}{\bibinfo{person}{Mi Lu} {and} \bibinfo{person}{Hua Lin}.}
  \bibinfo{year}{1994}\natexlab{}.
\newblock \showarticletitle{Parallel algorithms for the longest common
  subsequence problem}.
\newblock \bibinfo{journal}{\emph{{IEEE} Transactions on Parallel and
  Distributed Systems}} \bibinfo{volume}{5}, \bibinfo{number}{8}
  (\bibinfo{year}{1994}), \bibinfo{pages}{835--848}.
\newblock


\bibitem[\protect\citeauthoryear{Miller and Myers}{Miller and Myers}{1988}]%
        {miller1988sequence}
\bibfield{author}{\bibinfo{person}{Webb Miller} {and} \bibinfo{person}{Eugene~W
  Myers}.} \bibinfo{year}{1988}\natexlab{}.
\newblock \showarticletitle{Sequence comparison with concave weighting
  functions}.
\newblock \bibinfo{journal}{\emph{Bulletin of mathematical biology}}
  \bibinfo{volume}{50} (\bibinfo{year}{1988}), \bibinfo{pages}{97--120}.
\newblock


\bibitem[\protect\citeauthoryear{Monge}{Monge}{1781}]%
        {monge1781memoire}
\bibfield{author}{\bibinfo{person}{Gaspard Monge}.}
  \bibinfo{year}{1781}\natexlab{}.
\newblock \showarticletitle{M{\'e}moire sur la th{\'e}orie des d{\'e}blais et
  des remblais}.
\newblock \bibinfo{journal}{\emph{Mem. Math. Phys. Acad. Royale Sci.}}
  (\bibinfo{year}{1781}), \bibinfo{pages}{666--704}.
\newblock


\bibitem[\protect\citeauthoryear{Nagaraj}{Nagaraj}{1997}]%
        {nagaraj1997optimal}
\bibfield{author}{\bibinfo{person}{SV Nagaraj}.}
  \bibinfo{year}{1997}\natexlab{}.
\newblock \showarticletitle{Optimal binary search trees}.
\newblock \bibinfo{journal}{\emph{Theoretical Computer Science}}
  \bibinfo{volume}{188}, \bibinfo{number}{1-2} (\bibinfo{year}{1997}),
  \bibinfo{pages}{1--44}.
\newblock


\bibitem[\protect\citeauthoryear{Rytter}{Rytter}{1988}]%
        {rytter1988efficient}
\bibfield{author}{\bibinfo{person}{Wojciech Rytter}.}
  \bibinfo{year}{1988}\natexlab{}.
\newblock \showarticletitle{On efficient parallel computations for some dynamic
  programming problems}.
\newblock \bibinfo{journal}{\emph{Theoretical Computer Science}}
  \bibinfo{volume}{59}, \bibinfo{number}{3} (\bibinfo{year}{1988}),
  \bibinfo{pages}{297--307}.
\newblock


\bibitem[\protect\citeauthoryear{Shen, Wan, Gu, and Sun}{Shen
  et~al\mbox{.}}{2022}]%
        {shen2022many}
\bibfield{author}{\bibinfo{person}{Zheqi Shen}, \bibinfo{person}{Zijin Wan},
  \bibinfo{person}{Yan Gu}, {and} \bibinfo{person}{Yihan Sun}.}
  \bibinfo{year}{2022}\natexlab{}.
\newblock \showarticletitle{Many Sequential Iterative Algorithms Can Be
  Parallel and (Nearly) Work-efficient}. In \bibinfo{booktitle}{\emph{{ACM}
  Symposium on Parallelism in Algorithms and Architectures (SPAA)}}.
\newblock


\bibitem[\protect\citeauthoryear{Sleator and Tarjan}{Sleator and
  Tarjan}{1983}]%
        {sleator1983data}
\bibfield{author}{\bibinfo{person}{Daniel~D Sleator} {and}
  \bibinfo{person}{Robert~Endre Tarjan}.} \bibinfo{year}{1983}\natexlab{}.
\newblock \showarticletitle{A data structure for dynamic trees}.
\newblock \bibinfo{journal}{\emph{J. Computer and System Sciences}}
  \bibinfo{volume}{26}, \bibinfo{number}{3} (\bibinfo{year}{1983}),
  \bibinfo{pages}{362--391}.
\newblock


\bibitem[\protect\citeauthoryear{Sun and Blelloch}{Sun and Blelloch}{2019}]%
        {sun2019parallel}
\bibfield{author}{\bibinfo{person}{Yihan Sun} {and} \bibinfo{person}{Guy~E
  Blelloch}.} \bibinfo{year}{2019}\natexlab{}.
\newblock \showarticletitle{Parallel Range, Segment and Rectangle Queries with
  Augmented Maps}. In \bibinfo{booktitle}{\emph{SIAM Symposium on Algorithm
  Engineering and Experiments (ALENEX)}}. \bibinfo{pages}{159--173}.
\newblock


\bibitem[\protect\citeauthoryear{Sun, Ferizovic, and Blelloch}{Sun
  et~al\mbox{.}}{2018}]%
        {sun2018pam}
\bibfield{author}{\bibinfo{person}{Yihan Sun}, \bibinfo{person}{Daniel
  Ferizovic}, {and} \bibinfo{person}{Guy~E Blelloch}.}
  \bibinfo{year}{2018}\natexlab{}.
\newblock \showarticletitle{{PAM}: Parallel Augmented Maps}. In
  \bibinfo{booktitle}{\emph{{ACM} Symposium on Principles and Practice of
  Parallel Programming (PPOPP)}}.
\newblock


\bibitem[\protect\citeauthoryear{Tang and Wang}{Tang and Wang}{2017}]%
        {tang2017}
\bibfield{author}{\bibinfo{person}{Yuan Tang} {and} \bibinfo{person}{Shiyi
  Wang}.} \bibinfo{year}{2017}\natexlab{}.
\newblock \showarticletitle{Brief Announcement: STAR (Space-Time Adaptive and
  Reductive) Algorithms for Dynamic Programming Recurrences with More Than
  {O}(1) Dependency}. In \bibinfo{booktitle}{\emph{{ACM} Symposium on
  Parallelism in Algorithms and Architectures (SPAA)}}.
  \bibinfo{pages}{279--281}.
\newblock


\bibitem[\protect\citeauthoryear{Tang, You, Kan, Tithi, Ganapathi, and
  Chowdhury}{Tang et~al\mbox{.}}{2015}]%
        {tang2015cache}
\bibfield{author}{\bibinfo{person}{Yuan Tang}, \bibinfo{person}{Ronghui You},
  \bibinfo{person}{Haibin Kan}, \bibinfo{person}{Jesmin~Jahan Tithi},
  \bibinfo{person}{Pramod Ganapathi}, {and} \bibinfo{person}{Rezaul~A
  Chowdhury}.} \bibinfo{year}{2015}\natexlab{}.
\newblock \showarticletitle{Cache-oblivious wavefront: improving parallelism of
  recursive dynamic programming algorithms without losing cache-efficiency}. In
  \bibinfo{booktitle}{\emph{{ACM} Symposium on Principles and Practice of
  Parallel Programming (PPOPP)}}. \bibinfo{pages}{205--214}.
\newblock


\bibitem[\protect\citeauthoryear{Tchendji, Ngomade, Zeutouo, and
  Myoupo}{Tchendji et~al\mbox{.}}{2020}]%
        {tchendji2020efficient}
\bibfield{author}{\bibinfo{person}{Vianney~Kengne Tchendji},
  \bibinfo{person}{Armel~Nkonjoh Ngomade}, \bibinfo{person}{Jerry~Lacmou
  Zeutouo}, {and} \bibinfo{person}{Jean~Fr{\'e}d{\'e}ric Myoupo}.}
  \bibinfo{year}{2020}\natexlab{}.
\newblock \showarticletitle{Efficient CGM-based parallel algorithms for the
  longest common subsequence problem with multiple substring-exclusion
  constraints}.
\newblock \bibinfo{journal}{\emph{Parallel Comput.}}  \bibinfo{volume}{91}
  (\bibinfo{year}{2020}), \bibinfo{pages}{102598}.
\newblock


\bibitem[\protect\citeauthoryear{Tithi, Ganapathi, Talati, Aggarwal, and
  Chowdhury}{Tithi et~al\mbox{.}}{2015}]%
        {tithi2015high}
\bibfield{author}{\bibinfo{person}{Jesmin~Jahan Tithi}, \bibinfo{person}{Pramod
  Ganapathi}, \bibinfo{person}{Aakrati Talati}, \bibinfo{person}{Sonal
  Aggarwal}, {and} \bibinfo{person}{Rezaul Chowdhury}.}
  \bibinfo{year}{2015}\natexlab{}.
\newblock \showarticletitle{High-performance energy-efficient recursive dynamic
  programming with matrix-multiplication-like flexible kernels}. In
  \bibinfo{booktitle}{\emph{{IEEE} International Parallel and Distributed
  Processing Symposium (IPDPS)}}. \bibinfo{pages}{303--312}.
\newblock


\bibitem[\protect\citeauthoryear{Vahidi, Schieber, Du, and Bader}{Vahidi
  et~al\mbox{.}}{2023}]%
        {vahidi2023parallel}
\bibfield{author}{\bibinfo{person}{Soroush Vahidi}, \bibinfo{person}{Baruch
  Schieber}, \bibinfo{person}{Zhihui Du}, {and} \bibinfo{person}{David~A
  Bader}.} \bibinfo{year}{2023}\natexlab{}.
\newblock \showarticletitle{Parallel Longest Common SubSequence Analysis In
  Chapel}. In \bibinfo{booktitle}{\emph{2023 IEEE High Performance Extreme
  Computing Conference (HPEC)}}. IEEE, \bibinfo{pages}{1--6}.
\newblock


\bibitem[\protect\citeauthoryear{Van~Leeuwen}{Van~Leeuwen}{1976}]%
        {van1976construction}
\bibfield{author}{\bibinfo{person}{Jan Van~Leeuwen}.}
  \bibinfo{year}{1976}\natexlab{}.
\newblock \showarticletitle{On the Construction of Huffman Trees}. In
  \bibinfo{booktitle}{\emph{ICALP}}. \bibinfo{pages}{382--410}.
\newblock


\bibitem[\protect\citeauthoryear{Wang and Song}{Wang and Song}{2011}]%
        {wang2011ckmeans}
\bibfield{author}{\bibinfo{person}{Haizhou Wang} {and}
  \bibinfo{person}{Mingzhou Song}.} \bibinfo{year}{2011}\natexlab{}.
\newblock \showarticletitle{Ckmeans. 1d. dp: optimal k-means clustering in one
  dimension by dynamic programming}.
\newblock \bibinfo{journal}{\emph{The R journal}} \bibinfo{volume}{3},
  \bibinfo{number}{2} (\bibinfo{year}{2011}), \bibinfo{pages}{29}.
\newblock


\bibitem[\protect\citeauthoryear{Weiss and Schwartz}{Weiss and
  Schwartz}{2019}]%
        {weiss2019computation}
\bibfield{author}{\bibinfo{person}{Elad Weiss} {and} \bibinfo{person}{Oded
  Schwartz}.} \bibinfo{year}{2019}\natexlab{}.
\newblock \showarticletitle{Computation of Matrix Chain Products on Parallel
  Machines}. In \bibinfo{booktitle}{\emph{International Parallel and
  Distributed Processing Symposium (IPDPS)}}. IEEE, \bibinfo{pages}{491--500}.
\newblock


\bibitem[\protect\citeauthoryear{Wilber}{Wilber}{1988}]%
        {wilber1988concave}
\bibfield{author}{\bibinfo{person}{Robert Wilber}.}
  \bibinfo{year}{1988}\natexlab{}.
\newblock \showarticletitle{The concave least-weight subsequence problem
  revisited}.
\newblock \bibinfo{journal}{\emph{J. Algorithms}} \bibinfo{volume}{9},
  \bibinfo{number}{3} (\bibinfo{year}{1988}), \bibinfo{pages}{418--425}.
\newblock


\bibitem[\protect\citeauthoryear{Xu, Chen, Pan, and He}{Xu
  et~al\mbox{.}}{2005}]%
        {xu2005fast}
\bibfield{author}{\bibinfo{person}{Xiaohua Xu}, \bibinfo{person}{Ling Chen},
  \bibinfo{person}{Yi Pan}, {and} \bibinfo{person}{Ping He}.}
  \bibinfo{year}{2005}\natexlab{}.
\newblock \showarticletitle{Fast parallel algorithms for the longest common
  subsequence problem using an optical bus}. In
  \bibinfo{booktitle}{\emph{International Conference on Computational Science
  and Its Applications}}. Springer, \bibinfo{pages}{338--348}.
\newblock


\bibitem[\protect\citeauthoryear{Yang, Yu, Ding, Zhou, and Wang}{Yang
  et~al\mbox{.}}{2022}]%
        {yang2022dp}
\bibfield{author}{\bibinfo{person}{Dingcheng Yang}, \bibinfo{person}{Wenjian
  Yu}, \bibinfo{person}{Xiangyun Ding}, \bibinfo{person}{Ao Zhou}, {and}
  \bibinfo{person}{Xiaoyi Wang}.} \bibinfo{year}{2022}\natexlab{}.
\newblock \showarticletitle{DP-Nets: Dynamic programming assisted quantization
  schemes for DNN compression and acceleration}.
\newblock \bibinfo{journal}{\emph{Integration}}  \bibinfo{volume}{82}
  (\bibinfo{year}{2022}), \bibinfo{pages}{147--154}.
\newblock


\bibitem[\protect\citeauthoryear{Yao}{Yao}{1980}]%
        {yao1980efficient}
\bibfield{author}{\bibinfo{person}{F.~Frances Yao}.}
  \bibinfo{year}{1980}\natexlab{}.
\newblock \showarticletitle{Efficient dynamic programming using quadrangle
  inequalities}. In \bibinfo{booktitle}{\emph{{ACM} Symposium on Theory of
  Computing (STOC)}}. \bibinfo{pages}{429--435}.
\newblock


\bibitem[\protect\citeauthoryear{Yao}{Yao}{1982}]%
        {yao1982speed}
\bibfield{author}{\bibinfo{person}{F~Frances Yao}.}
  \bibinfo{year}{1982}\natexlab{}.
\newblock \showarticletitle{Speed-up in dynamic programming}.
\newblock \bibinfo{journal}{\emph{SIAM Journal on Algebraic Discrete Methods}}
  \bibinfo{volume}{3}, \bibinfo{number}{4} (\bibinfo{year}{1982}),
  \bibinfo{pages}{532--540}.
\newblock


\end{thebibliography}

\iffullversion{
\appendix

\section{Additional Details for Parallel Optimal Alphabetic Trees (OAT)}\label{sec:app-oat}

This section supplements additional details for the parallel OAT algorithm in \cref{sec:oat}.

\subsection{Sequential Algorithms for OAT}

Hu-Tucker algorithm \cite{hu1971optimal} was the first $O(n\log n)$ algorithm for OAT,
and a simplified version Garsia-Wachs algorithm \cite{garsia1977new} was proposed later.
Both algorithms have two phases.
In the first phase, a certain tree called the $l$-tree (short for level-tree) is constructed from the input sequence.
Hu-Tucker algorithm and Garsia-Wachs algorithm differ in the manner to find the $l$-tree, but the final $l$-tree is the same.
In the second phase, the OAT can be constructed from the $l$-tree,
and each item in the OAT is at the same level as in the $l$-tree.

The key problem is how to compute the $l$-tree.
Here we describe the first phase of Garsia-Wachs,
as it is easier to understand than Hu-Tucker and the idea is used in the parallel algorithm by \citet{larmore1993parallel}.
Let the input weight sequence be $a_{1..n}$.
We denote $a_i+a_{i+1}$ be the $i$'th 2-sum, for $1\le i < n$.
A pair of consecutive elements $(a_i,a_{i+1})$ is said to be locally minimal if the $i$'th 2-sum is a local minimum in the sequence of 2-sums.
The Garsia-Wachs algorithm repeatedly performs the following steps until there is only one element in the sequence:
\begin{enumerate}
  \item Find the left-most locally minimal pair $(a_i,a_{i+1})$.
  \item Combine $a_i$ and $a_{i+1}$.
  Make a new node $x$ to be the parent of $a_i$ and $a_{i+1}$ in the $l$-tree,
  and the weight of $x$ is $a_i+a_{i+1}$.
  \item Remove $a_i$ and $a_{i+1}$ from the sequence.
  Insert $x$ before the first element $a_j$ where $j>i$ and $a_j\ge x$.
  If such $a_j$ does not exist, insert $x$ at the end of the sequence.
\end{enumerate}
Note that each newly generated node represents a tree, with both children as the two trees corresponding to the two nodes it merges in step (2).
At the end, there will be only one element in the sequence, which is the final output of the $l$-tree.
An $l$-tree can be easily converted to an OAT in parallel with $O(n)$ work and $O(\log n)$ span, and we refer the audience to~\cite{larmore1993parallel} for more details.

\subsection{The Parallel Algorithm by \citet{larmore1993parallel}}

Here we focus on the first phase on how to construct the $l$-tree in parallel.
Larmore et al.~observe that we can pick any locally minimal pair instead of the left-most one as in Garsia-Wachs,
which does not affect the resulting $l$-tree.
Hence, we can in parallel process all possible locally minimal pairs in one round.
However, in the worst case the number of rounds is still linear.
For example, if $a_{1..n}$ is in increasing order, then only one locally minimal pair can be processed in one round.

To overcome this issue, Larmore et al.~further proposed the concept of ``valleys''.
A valley $\alpha=\vall(a_i)$ is the \emph{largest} contiguous subsegment of $a$ that 1) contains $a_i$, and 2) contains no item larger than $a_i$.
Hence, $a_i$'s valley contains the elements from $a_i$'s subtree in the Cartesian tree of $a$.
Larmore et al.~showed that several disjoint valleys in the sequence can be processed in parallel.
To understand this, let us consider a valley $\alpha$.
Let $\Delta_\alpha$ be the parent of valley $\alpha$ in the Cartesian tree.
Since $\alpha$ is maximal, the locally minimal pairs from this range will not interact with the outside elements.
Hence, we can consider valley $\alpha$ as an independent task, and repeatedly find and combine the locally minimal pairs inside $\alpha$.
The only difference is that if the combined element $x=a_i+a_{i+1}>\Delta_\alpha$,
we mark it as $x^*$ and put it in a separate queue.
After the parallel processing of disjoint valleys, we collect all marked elements and insert them into the right place.

In total, there can be at most $n$ overlapping valleys.
Larmore et al.'s algorithm uses one special type of valley: the 1-valleys.
A 1-valley is defined to be a valley $\alpha$ such that for any node $v$ in the subtree of $\alpha$ in the Cartesian tree,
if $v$ has two children, at least one of the children is a leaf.
We can see that in the sequence the maximal 1-valleys are not overlapping, so we can process them in parallel.
They also proved that, in each round if we process all maximal 1-valleys and reinsert them as in the sequential order,
the number of maximal 1-valleys in the remaining sequence decreases by at least a half. 
Thus, the whole algorithm will finish in $O(\log n)$ rounds.

Before we show how to process a 1-valley, we introduce some definitions.
Let $p(u)$ is the parent of $u$ in the Cartesian tree of $\alpha$.
We define a 1-valley $\alpha$ to be a regular valley if for $u<v$ are two leaves in the Cartesian tree of $\alpha$, then $p(u)<p(v)$.
A sorted regular valley is defined as a regular valley with the minimum element in the first one.
Larmore et al.\ showed that a 1-valley can be transformed to a sorted regular valley in $O(m)$ work and $O(\log m)$ span, where $m$ is the size of this 1-valley.
A set of items $S$ is defined to be lf-closed (short for leaf-father-closed) if for any $a_i\in S$ and $a_i$ is a leaf in the Cartesian tree of $\alpha$,
then $p(a_i)$ is also in $S$.
The weight of $S$ is defined as the sum of the weights of all items in it.
We define $W_k$ as the minimum weighted lf-closed set with $k$ items in it.
Larmore et al.\ showed that all $W_k$ can be computed in $O(m\log m)$ work and $O(\log m)$ span.

Now consider a sorted regular valley $\alpha$ with length $m$.
Our goal is to generate a sequence of forests $\forest_{0..m'-1}$ ($m'<m$) so that each of them corresponds to a subset of subtrees in the $l$-tree.
Here 
$\forest_0$ is empty.
The last forest $\forest_{m-1}$, if $m=m'$, corresponds to the $l$-tree if we merge all elements in this valley.
However, this process may not end here---as mentioned above, we will stop if the weights of subtrees exceed $\Delta_\alpha$.
Hence, we can have our final state $\forest_{m'-1}$ with $m'<m$.
To compute $\forest_{i}$, we will enumerate all $\forest_{j}$ for $j<i$, and find the best (minimum) transition from them.
Here a transition means to build an additional level in the $l$-tree, and the cost $w(j,i)=W_{2i-j}$.
Larmore et al. showed that the cost function $w$ is convex, so computing $\forest_i$ is exactly a convex GLWS problem.
If we use \cref{alg:dp-glws} to solve this convex GLWS problem, since each decision $\best[i]=j$ will add another level to the trees in $\forest_i$ from $\forest_j$, the effective depth is upper bounded by the overall $l$-tree height $h$.
Hence, the work and span for each 1-valley subproblem is $O(m\log m)$ work and $O(h\log^2 m)$ span, where $m$ is the subproblem size.
Since the total size of 1-valleys in a recursive round is $O(n)$, the work and span for one round are $O(n\log n)$ work and $O(h\log^2 n)$, respectively.
Multiplying this by $O(\log n)$, the number of recursive rounds, gives the cost bounds in \cref{theorem:oat}.

\hide{
Each $\forest_i$ ($i>0$) adds a new root $p_i$ by combining two existing nodes (either a root from $\forest_{i-1}$ or a single node from $\alpha$).
Lawrence proves the following lemma:
\begin{lemma}\label{lemma-forest}
  Let $0<i<m$. Then there exists $0\le j < i$ that:
  (1) the roots of $\forest_i$ are $p_{j+1..i}$;
  (2) the leaves of $\forest_i$ are $W_{2i-j}$;
\end{lemma}
Note that from $\forest_j$ to $\forest_i$ we added one level,
thus the weight of $\forest_j$ is the weight of $\forest_i$ plus the weight of $W_{2i-j}$.
To find the minimum weighted forests, we can reduce to a LWS problem on $0..m-1$ by defining $w(j,i)=W_{2i-j}$.
Larmore et al. showed that the cost function $w$ is convex, so computing $\forest_i$ is exactly a convex GLWS problem.
Since each decision $\best[i]=j$ will add another level to the trees in $\forest_i$ from $\forest_j$, 
After solving this convex LWS, the final forest can be recovered in $O(m\log m)$ work and $O(\log m)$ span.
As \cref{lemma-forest} shows, each $\best[i]=j$ brings an extra level in $\forest_i$.
So the height of the final forest is the same as the longest path in the dependency graph.
As mentioned above, the final forest is a subgraph in the final $l$-tree, and the OAT has the same height of the $l$-tree.
So we have $k$, the length of the longest dependency path, is upper bounded by the final OAT height $h$.}

\subsection{Proof of Lemma~\ref{lem:oat-height-lemma}}

Here we provide the proof of Lemma~\ref{lem:oat-height-lemma}. We first show the following lemma. 

\begin{lemma}\label{lem:lem-oat-height-lemma}
  Let the weight of a subtree in an OAT as the total weight of all leaves in this subtree. 
  In an OAT, the subtree weight grows by at least twice for every three levels.
\end{lemma}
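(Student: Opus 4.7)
The plan is to use a local exchange argument on the top three levels of a subtree. Fix any node $v$ in the OAT with an ancestor $u$ exactly three levels above, and let $s_1, s_2, s_3$ be the siblings encountered along the path $v \to p(v) \to gp(v) \to u$, so that $s_i$ sits at depth $4-i$ within $T_u$. Write $w_v, w_1, w_2, w_3$ for the weights of the subtrees $T_v, T_{s_1}, T_{s_2}, T_{s_3}$. The goal reduces to showing $w_1 + w_2 + w_3 \ge w_v$, which is equivalent to $w(T_u) \ge 2 w_v$.

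The key step is to consider the alternative shape in which the top three levels of $T_u$ are replaced by the balanced template $((X_1, X_2), (X_3, X_4))$, where $X_1, X_2, X_3, X_4$ are the four subtrees $T_v, T_{s_1}, T_{s_2}, T_{s_3}$ listed in their in-order positions inside $T_u$. Crucially, regardless of the left/right choices along $v \to u$ (which determine the in-order permutation of $T_v, T_{s_1}, T_{s_2}, T_{s_3}$), this balanced template is a valid ordered binary tree over those four subtrees, leaves of the OAT retain their original order, and the internal structure of each subtree $T_v, T_{s_1}, T_{s_2}, T_{s_3}$ is untouched. In the current OAT, the four subtrees sit at depths $3,3,2,1$ inside $T_u$; in the balanced alternative they all sit at depth $2$. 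Since the depth change of every leaf in a given subtree is the same as the depth change of that subtree's root, the change in total cost is simply
\[
\Delta = (2-3)w_v + (2-3)w_1 + (2-2)w_2 + (2-1)w_3 = w_3 - w_v - w_1,
\]
independently of which in-order positions the four subtrees occupy. By optimality of the OAT we have $\Delta \ge 0$, hence $w_3 \ge w_v + w_1 \ge w_v$, and therefore $w(T_u) = w_v + w_1 + w_2 + w_3 \ge w_v + w_3 \ge 2 w_v$, as claimed.

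The main obstacle I anticipate is making sure the rearrangement is legitimate in every case: the three directions taken from $u$ down to $v$ admit $2^3 = 8$ patterns that yield different in-order arrangements of $T_v, T_{s_1}, T_{s_2}, T_{s_3}$, and one must check that the balanced template always produces another valid alphabetic tree. This concern evaporates because the template only reorganizes the three internal nodes in the top of $T_u$ while preserving the identities and left-to-right order of the four child subtrees, so the in-order leaf sequence is unchanged. A secondary concern is boundary cases where $v$ is within three levels of the root, but those cases are vacuous for the lemma's statement. From this lemma, Lemma \ref{lem:oat-height-lemma} follows by applying it iteratively: every leaf has weight at least $1$, so after $3k$ levels the enclosing subtree weight is at least $2^k$; since the root's weight is at most $nW$, the height is $O(\log(nW)) = O(\log W)$ whenever $n \le W^{O(1)}$.
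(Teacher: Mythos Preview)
Your argument is correct and is genuinely different from the paper's.  The paper fixes $a$ (your $v$) as a left child WLOG and then splits into three cases according to the orientations of the edges $c\!\to\! e$ and $e\!\to\! f$ (your $p(v)\!\to\! gp(v)$ and $gp(v)\!\to\! u$), applying a single rotation in one case and a double rotation in another to extract the inequalities $w(d)\ge w(a)$ or $w(g)\ge w(a)+w(b)$ that force $w(f)\ge 2w(a)$.  You instead make one global move: replace the depth-$3$ ``spine'' shape at the top of $T_u$ by the perfectly balanced shape $((X_1,X_2),(X_3,X_4))$ on the same four subtrees in their in-order positions.  Because the balanced shape is a valid alphabetic tree for \emph{every} one of the eight orientation patterns, a single optimality comparison gives $w_3\ge w_v+w_1$ uniformly, with no casework.

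What each buys: the paper's rotation-based proof is slightly more informative in the ``easy'' cases (in Case~1 it actually shows doubling already after two levels), and it makes explicit which local rearrangement witnesses the inequality in each orientation.  Your approach is cleaner and shorter, avoids the symmetry reduction and the three-way split entirely, and yields the pleasantly strong statement that the depth-$1$ sibling $s_3$ by itself already outweighs $v$ (indeed $w_3\ge w_v+w_1$), from which $w(T_u)\ge 2w_v$ is immediate.  Your remark that the in-order sequence of the four subtrees is preserved by the balanced template regardless of the orientation pattern is exactly the point that makes the unified argument go through, and it is correctly justified.
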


\begin{proof}
Here we denote $p(v)$ as the parent node of $v$ in the output OAT $T$ and $w(v)$ as the sum of the leaf weights in the subtree of node $v$.
We will show for any node $a$ in the OAT, the great-grandfather of $a$ (if exists) must have weight no less than $2w(a)$.
In the optimal alphabetic tree $T$, let node $b$ be $a$'s sibling and node $c$ be $a$'s parent.
WLOG we assume node $a$ is the left child of node $c$.
Then, let $d$ be $c$'s sibling, $e$ as $c$'s parent, and $f$ as $e$'s parent.
Now the lemma is equivalent to $w(f)\ge 2w(a)$.

\begin{figure}[h]
  \centering
  \includegraphics[width=1.0\columnwidth]{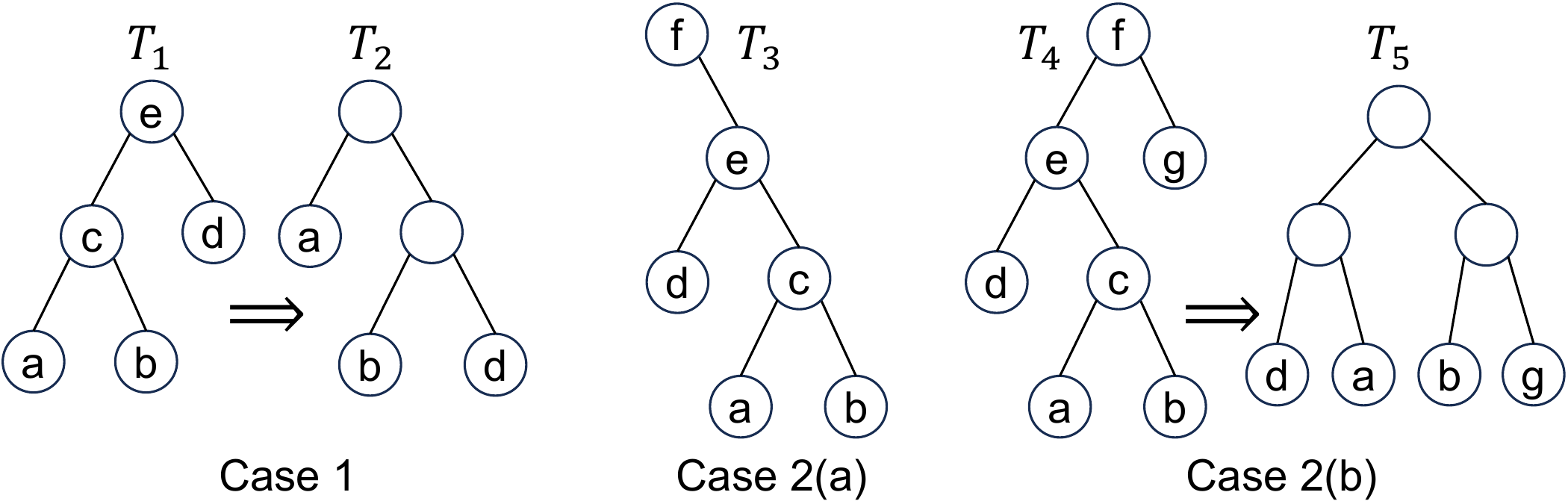}
  \caption{Illustrations for analyzing the OAT Height.  The three cases are used in the proof of \cref{lem:lem-oat-height-lemma}.
  \label{fig:oat-height}}
\end{figure}

We consider the first case when $c$ is $e$'s left child (case 1 in \cref{fig:oat-height}).
In this case, we must have $w(d)\ge w(a)$, since otherwise a right rotation will decrement the total cost of the tree by $w(d)-w(a)$, violating that $T$ is an OAT.
Hence, $w(f)\ge w(e)\ge w(d)+w(a)\ge 2w(a)$ in this case.

The second case is when $c$ is $e$'s right child.
There are two sub-cases.
First, if $e$ is $f$'s right child (case 2(a) in \cref{fig:oat-height}), then $c$, $e$, and $f$ form another case 1.  Hence, we have $w(f)\ge 2w(c)\ge 2w(a)$.
Second, if $e$ is $f$'s left child (see $T_4$ in case 2(b) in \cref{fig:oat-height}). We can double-rotate and get another valid alphabetic tree $T_5$.
As $T_4$ is the optimal alphabetic tree, we must have $2w(d)+3w(a)+3w(b)+w(g)\le 2(w(d)+w(a)+w(b)+w(g))$,
which leads to $w(g)\ge w(a)+w(b)$.
So $w(f)=w(d)+w(a)+w(b)+w(g)\ge w(a)+w(g)\ge 2w(a)$.

From all three cases above, we show $w(f)\ge 2w(a)$, which proves the lemma that the weight doubles for every three steps up the OAT.
\end{proof}

With integer weights in word size $W$, the weight of the root is at most $O(W)$, and the weight of each leaf is at least 1.
In this case, the number of levels between them is at most $O(\log W)$. 
This proves \cref{lem:oat-height-lemma}.

\hide{\noindent Case 1: if $c$ is $e$'s left child (see $T_1$ in \cref{fig:oat-height}).
We can do a right rotate to get $T_2$, which is also a valid alphabetic tree.
As $T_1$ is the optimal alphabetic tree, we must have $2w(a)+2w(b)+w(d)\le w(a)+2w(b)+2w(d)$, which leads to $w(a)\le w(d)$.
So $w(e)=w(a)+w(b)+w(d)\ge 2w(a)$, and the parent of $e$ ($a$'s great-grandfather) must have weight at least $2w(a)$.

\noindent Case 2: if $c$ is $e$'s right child. There are two sub-cases.
(a) If $e$ is $f$'s right child, then $c,e,f$ forms case 1 (symetric). So we have $w(f)\ge 2w(c)\ge 2w(a)$.
(b) If $e$ is $f$'s left child (see $T_4$ in \cref{fig:oat-height}). We can do a double-rotate and get another valid alphabetic tree $T_5$.
As $T_4$ is the optimal alphabetic tree, we must have $2w(d)+3w(a)+3w(b)+w(g)\le 2(w(d)+w(a)+w(b)+w(g))$,
which leads to $w(g)\ge w(a)+w(b)$.
So $w(f)=w(d)+w(a)+w(b)+w(g)\ge w(a)+w(g)\ge 2w(a)$.

So the weight of the root of OAT is at least $2^{h/3}$.
If $W$ is the word size, we must have $2^{h/3}<W$, which leads to $h=O(\log W)$.
}

}

\end{document}
\endinput